\documentclass[a4paper,reqno]{amsart}

\usepackage[table,xcdraw,dvipsnames]{xcolor}

\usepackage{microtype}

\usepackage{bbm}
\usepackage{comment}
\usepackage{float}
\usepackage{colortbl}
\usepackage[T1]{fontenc}
\usepackage[margin=2cm]{geometry}
\usepackage[foot]{amsaddr}
\usepackage[colorlinks, linkcolor = blue, citecolor = blue, urlcolor = blue]{hyperref}

\usepackage{amsmath,amsthm,amssymb,amsfonts,mathtools}
\usepackage{ytableau}
\usepackage{pifont}

\usepackage[normalem]{ulem}

\usepackage{mathtools}

\usepackage{thmtools}

\usepackage{mathrsfs}

\usepackage{cleveref}

\usepackage[shortlabels]{enumitem}

\usepackage{tabularx}
\usepackage{makecell}
\usepackage{multirow}
\usepackage{hhline}
\usepackage{diagbox}
\usepackage{booktabs}
\usepackage{slashbox}
\usepackage{cases}

\usepackage{caption}

\usepackage[group-separator={,},group-minimum-digits=4]{siunitx}

\usepackage{epigraph} 

\usepackage{subcaption}  

\usepackage{graphicx}
\usepackage{tikz}
\usetikzlibrary{quantikz2}
\usetikzlibrary{backgrounds}
\usetikzlibrary{calc}
\usetikzlibrary{shapes}
\usetikzlibrary{decorations.pathreplacing}
\usetikzlibrary{external}
\usetikzlibrary{arrows.meta}

\usepackage{nicematrix}
\NiceMatrixOptions{cell-space-top-limit=5pt,cell-space-bottom-limit=5pt}
\usepackage{nicefrac}

\usepackage{ytableau,varwidth}

\usepackage[backend=biber,style=alphabetic,maxnames=9,maxalphanames=5,isbn=false,backref]{biblatex}

\usepackage{arydshln}

\usepackage{adjustbox}

\makeatletter
\renewcommand\section{\@startsection{section}{1}%
  \z@{.7\linespacing\@plus\linespacing}{.5\linespacing}%
  {\normalfont\Large\scshape\centering}}
\renewcommand\subsection{\@startsection{subsection}{2}%
  \z@{.5\linespacing\@plus.7\linespacing}{-.5em}%
  {\normalfont\large\bfseries}}
\makeatother

\newtheorem*{theorem*}{Theorem}
\newtheorem{theorem}{Theorem}

\newtheorem{proposition}[theorem]{Proposition}
\newtheorem{lemma}[theorem]{Lemma}
\newtheorem*{lemma*}{Lemma}
\newtheorem{definition}[theorem]{Definition}
\newtheorem{corollary}[theorem]{Corollary}
\newtheorem{remark}[theorem]{Remark}

\newtheorem{result}{Result}

\newcommand{\step}[2]{\par\medskip\textsc{Step #1} \textit{(#2).}\ }

\crefname{lemma}{Lemma}{Lemmas}
\crefname{definition}{Definition}{Definitions}
\crefname{theorem}{Theorem}{Theorems}
\crefname{conjecture}{Conjecture}{Conjectures}
\crefname{section}{Section}{Sections}
\crefname{claim}{Claim}{Claims}
\crefname{appendix}{Appendix}{Appendices}
\crefname{figure}{Fig.}{Figs.}
\crefname{table}{Table}{Tables}
\crefname{proposition}{Proposition}{Propositions}
\crefname{corollary}{Corollary}{Corollaries}
\crefname{example}{Example}{Examples}
\crefname{remark}{Remark}{Remarks}

\newcommand{\tr}{\operatorname{Tr}}
\newcommand{\spec}{\operatorname{spec}}

\newcommand{\TD}{T}              
\newcommand{\eps}{\varepsilon}

\newcommand{\pp}{\mathbf{p}}

\DeclarePairedDelimiterX\Set[1]\{\}{%
    
    #1
}

\newcommand{\defeq}{\vcentcolon=}

\renewcommand{\leq}{\leqslant}
\renewcommand{\geq}{\geqslant}

\renewcommand{\bra}[1]{\langle{#1}\rvert}
\renewcommand{\ket}[1]{\lvert{#1}\rangle}

\newcommand{\ketbra}[2]{\ket{#1}\bra{#2}}

\newcommand{\C}{\mathbb{C}} 
\newcommand{\R}{\mathbb{R}} 

\DeclareMathOperator{\poly}{poly} 

\DeclareMathOperator{\GL}{GL} 
\DeclareMathOperator{\End}{End} 
\DeclareMathOperator{\diag}{diag} 

\newcommand{\id}{\mathbbm{1}}

\newcommand\restr[2]{{
  \left.\kern-\nulldelimiterspace 
  #1 
  \vphantom{\big|} 
  \right|_{#2} 
  }}

\newcommand{\eat}[1]{}

\newcommand{\noisy}[1]{\textcolor{BrickRed}{#1}}
\newcommand{\clean}[1]{\textcolor{NavyBlue}{#1}}
\tikzset{wssgate/.style={fill=gray!12}}
\newcommand{\downarrowtext}[1]{%
  \begin{tikzpicture}[baseline=(current bounding box.center)]
    \draw[-{Stealth[length=2.2mm]}, thick] (0,0.45) -- (0,-0.45);
    \node[anchor=west, text width=8.4cm, font=\footnotesize, align=left] at (0.35,0) {#1};
  \end{tikzpicture}}

\title[Robustness of quantum spectrum estimation]{Robustness of quantum spectrum estimation: \\ weak Schur sampling under noisy inputs}

\author{Vladyslav Visnevskyi$^*$}
\author{Laura Mančinska$^\dagger$}
\dedicatory{Centre for the Mathematics of Quantum Theory (QMATH), University of Copenhagen}

\address[$*$]{vv@math.ku.dk}
\address[$\dagger$]{mancinska@math.ku.dk}

\begin{document}

\begin{abstract}
    We initiate the study of weak Schur sampling (WSS) under noisy inputs. Many optimal quantum learning algorithms rely on this measurement, for instance in quantum spectrum estimation (QSE) and quantum state tomography (QST). The standard analysis of WSS, QSE and QST assumes that the $n$ input copies of the target state $\rho$ are identical, i.e.\ $\rho^{\otimes n}$. We study what happens when they are not, a more realistic noisy scenario that breaks the very permutation symmetry on which WSS is built. In the simplest model the input is a noisy product state: the copies are independent but not necessarily identical and each is $\epsilon$-close in trace distance to the $d$-dimensional target state $\rho$. A naive argument via the data-processing inequality allows the per-copy errors to accumulate to~$n\epsilon$ on the measurement outcome. We show that for QSE via WSS this does not~happen: the expected estimation error in total variation is at most $\epsilon+\eta(n,d)$, where $\eta(n,d)=O(d/\sqrt n)$ is the noiseless error, and the additive noise term $\epsilon$ is optimal. The output error does not grow with $n$ even though the input error does; in this sense WSS is robust. Beyond noisy product states, for arbitrary input~$\omega$ we show that the error is at most $\eta(n,d)+\frac{1}{n}\|\omega-\rho^{\otimes n}\|_{W_1}$, where $W_1$ is the quantum Wasserstein distance of De~Palma~et~al. In particular, our results imply, for the first time, noise-robustness of Keyl and Werner's seminal algorithm for QSE, more than two decades after its introduction. Finally, we show that the noise cannot be too bad: within our model of robustness the promise on the input cannot be weakened to closeness of single-copy marginals or to a global trace-distance budget alone, as in both cases some inputs defeat every measurement and estimator.
\end{abstract}

\maketitle
\setcounter{tocdepth}{1}
\tableofcontents

\section{Introduction}
\label{sec:intro}

Quantum learning theory is a core subfield of quantum information science.
Its primary tasks include such fundamental primitives as quantum state
tomography, spectrum estimation, certification, hypothesis testing, and more.
Several of these tasks have been studied in the presence of noise, but when it comes to the \emph{sample-optimal} algorithms for those problems the picture is far less complete. For many quantum learning problems the optimal algorithms have only been analysed under perfect input. Studying optimal learning algorithms in the more realistic setting of noisy input is a necessary step towards making them practical.  

In this work we take a first step towards establishing the noise-robustness of optimal algorithms for a broad category of quantum learning problems. Specifically, we establish robustness of a core measurement in quantum learning called \textit{weak Schur sampling}, used in many learning algorithms, including sample-optimal ones
\cite{keyl2001estimating,bakshi2026optimalspectrumestimation,keyl2006quantum, odonnell2016efficient,haah2017sample,pelecanos2025debiased,pelecanos2025mixed,
odonnell2015quantum,acharya2020measuring,lowe2026random}. In particular, this measurement is used in optimal approaches to quantum spectrum estimation \cite{bakshi2026optimalspectrumestimation} and quantum state tomography \cite{odonnell2016efficient,haah2017sample}, two fundamental tasks in quantum learning. We introduce noise into the input to weak Schur sampling, prove that this measurement is robust under natural noise models, such as independent noise on each input copy, and show that it cannot be robust under the weaker promises that constrain only the overall trace distance to the ideal input or only the single-copy marginals. To the best of our knowledge, we are the first to analyse robustness of weak Schur sampling under noisy input.

Quantum spectrum estimation is the learning task in which weak Schur sampling plays the most direct role. The standard Keyl-Werner estimator \cite{keyl2001estimating}, which reads the spectrum straight off the measurement outcome, already achieves the collective $O(d^2/\delta^2)$ copy complexity for learning the spectrum of a state of local dimension $d$ with precision~$\delta$~\cite{odonnell2016efficient}. The sample-optimal algorithm of \cite{bakshi2026optimalspectrumestimation} improves on it only by polylogarithmic factors in $d$ and $\delta$, through additional steps built on the same core subroutine. 
Our results settle the robustness of the Keyl-Werner estimator, and hence that of the central step of the optimal algorithm. This reduces its robustness to that of the remaining steps of the algorithm; whether those steps are robust is one of the two main open questions of this work. The same holds for optimal full state tomography \cite{keyl2006quantum,odonnell2016efficient,haah2017sample}, where weak Schur sampling is followed by a second measurement that estimates the eigenbasis: our results settle the first stage, and the second is the other main open question.

\subsection*{Setup}

Let $\rho$ be an unknown density operator on $\C^d$. \emph{Quantum state
tomography} asks to reconstruct $\rho$ from measurements on $n$ copies
$\rho^{\otimes n}$ \cite{paris2004quantum,haah2017sample,odonnell2016efficient},
and \emph{spectrum estimation} asks for the ordered eigenvalue vector
\begin{equation*}
  \pp=\spec(\rho)=(p_1\ge p_2\ge\cdots\ge p_d\ge0),\qquad \sum_i p_i=1,
\end{equation*}

The sample-optimal algorithms for both full state tomography \cite{odonnell2016efficient,haah2017sample} and spectrum estimation \cite{bakshi2026optimalspectrumestimation} are \emph{collective}: they measure all $n$ copies jointly.  A core measurement
present in both optimal algorithms is the \textit{weak Schur
sampling}.

On the space $(\C^d)^{\otimes n}$ of $n$ qudits with local dimension $d$, weak Schur sampling outputs \emph{partitions} of $n$ into at most $d$ parts, that is, vectors $\lambda=(\lambda_1\ge\cdots\ge\lambda_d\ge0)$ of integers summing to $n$. On input $\rho^{\otimes n}$ of $n$ perfect copies of $\rho$, the normalised outcome $\lambda/n$ post-processed from weak Schur sampling concentrates around $\spec(\rho)$. This gives the empirical estimator $\hat\pp=\lambda/n$ of the spectrum, which reaches $\ell^1$ precision $\delta$ with $O(d^2/\delta^2)$ copies
\cite{keyl2001estimating,odonnell2016efficient}.

For full state tomography, the eigenbasis has to be estimated too. Keyl's measurement \cite{keyl2006quantum} completes weak Schur sampling into a full tomography protocol by a second, covariant measurement inside the
block $\lambda$ that estimates the eigenbasis. This protocol is sample-optimal for full state tomography, with $\Theta(d^2/\delta^2)$ copies in trace distance \cite{odonnell2016efficient,haah2017sample}.

\subsection*{Motivation of the problem}

The above-mentioned results give statements about the standard assumption of perfect-copies input $\rho^{\otimes n}$.
In an experiment the $n$ copies are produced by a device, and no device
produces the same state twice.  This is true of noisy intermediate-scale
hardware, where drift and imperfect calibration change the prepared state from shot to shot, and it remains true of fault-tolerant hardware: below-threshold error correction suppresses the logical error rate but does not eliminate it \cite{acharya2025quantum}, so the logical state delivered on each shot is a slightly and unpredictably wrong version of the intended one. State tomography and spectrum estimation are precisely the tools one would use to verify such a machine and to learn its residual errors
\cite{eisert2020quantum,harper2020efficient}, and to characterise the output of a noisy quantum simulator. In all of these applications one runs the estimator on copies that are only approximately identical, and it is not obvious that the guarantees proven for $\rho^{\otimes n}$ survive. A collective measurement in the optimal quantum spectrum estimation algorithm exploits the permutation symmetry of $\rho^{\otimes n}$, while a noisy asymmetric input $\rho_1\otimes\cdots\otimes\rho_n$ has no permutation symmetry whatsoever. 

From a naive data-processing argument, one might fear that $n$ small errors on each copy accumulate to an error of order
$n\eps$ in the collective statistics, growing with system size. Under perfect input, precision $\delta$ requires $n = O(d^2/\delta^2)$ copies, therefore such an accumulated error would already exceed
the target precision for $\eps = \Omega(\delta^3/d^2)$, and would be of order one for $\eps = \Omega(\delta^2/d^2)$, rendering the estimate too noisy to be useful. Effectively, an output noise growing with system size $n$ would put an upper bound on both the rank of the states whose spectrum is possible to estimate, and on the achievable precision. An
arbitrarily small per-copy noise level would then ruin collective protocols in
exactly the many-copy measurement regime in which they are supposed to outperform
single-copy measurement ones, whereas a robust single-copy protocol run on the same copies simply pays the extra price $\eps$ in precision \cite{stilckfranca2021fast}.

\subsection{Our results}
\hfill
\\
We take a first step towards solving robustness of optimal quantum spectrum estimation \cite{bakshi2026optimalspectrumestimation}, optimal full state tomography \cite{odonnell2016efficient,haah2017sample}, as well as lots of other learning tasks
\cite{acharya2020measuring, odonnell2015quantum, lowe2026random} by
analyzing weak Schur sampling, a core step present in their algorithms. To present the ideas in the simplest setting we first work with a \emph{product} noise model: the noisy input to weak Schur sampling is
\begin{equation*}
  \mathcal{E}(\rho^{\otimes n})
  =\mathcal{E}_1(\rho)\otimes\cdots\otimes\mathcal{E}_n(\rho),
  \qquad
  \TD\bigl(\mathcal{E}_i(\rho),\rho\bigr)\le\eps\quad\forall i,
\end{equation*}
for arbitrary, possibly different, quantum channels $\mathcal{E}_i$; the copies
are independent but not identically distributed. Our first main result
(\cref{thm:robust}) reads:
\begin{result}[Weak Schur sampling under product noise, \cref{thm:robust,prop:floor}]
\label{res:product}
  Consider the product noise model on $n$ input copies, in which each noisy
  copy is within trace distance $\eps$ of the target state $\rho$ on $\C^d$.
  On such input weak Schur sampling with the estimator $\hat\pp=\lambda/n$ learns the
  spectrum $\pp$ of $\rho$ with expected error
  \begin{equation*}
    \mathbb{E}\bigl[\tfrac12\|\hat\pp-\pp\|_1\bigr]
    \;\le\;\eps+\eta(n,d),
    \qquad \eta(n,d)=O\!\bigl(d/\sqrt n\bigr),
  \end{equation*}
  where $\eta(n,d)$ is the error of the same estimator on noiseless input $\rho^{\otimes n}$.
  That is, $n=O(d^2/\delta^2)$ noisy copies give output precision
  $\delta+\eps$, and the noise-induced output error $\eps$ cannot be avoided by any measurement and estimator.
\end{result}
Thus the noise costs an additive term $\eps$ and nothing else. We call it the \emph{noise floor}: a noise-induced part of the output error that cannot be reduced by taking more copies. The error decreases at the noiseless rate until it reaches the floor, and the collective $d^2$ sample complexity is preserved. The floor is optimal (\cref{prop:floor}): $n$~identical copies of a single $\eps$-perturbed state are an example of an input allowed by our noise model, and on them no estimator whatsoever can do better than $\Omega(\eps)$. In other words, our results showcase that even for a noise model with input error growing with system size $n$, the noise-induced output error of weak Schur sampling spectrum estimator does not grow with the system size. In this sense, it is robust to noisy input, and omits the naive concerns of the output error scaling with $n$.

The main idea of the proof is straightforward: we start replacing the noisy registers $\mathcal{E}_i(\rho)$ in $\bigotimes_i \mathcal{E}_i(\rho)$ one at a time by the noiseless state $\rho$, and iteratively construct a chain of inequalities (see \cref{fig:hybrid} for an illustration). The expectation of any function $f$ of the outcome is a linear functional $\tr[\mathcal{O}_f\,\omega]$ of the input $\omega$, so one can bound each replacement separately.  The cost of replacing register $i$ is
$\tr[(\mathcal{E}_i(\rho)-\rho)\,h_i]$, where $h_i$ is the effective $d\times d$
observable that the collective measurement presents to register $i$ when the
other registers are fixed.  Since $\mathcal{E}_i(\rho)-\rho$ is traceless with trace norm at most $2\eps$, only the \emph{eigenvalue spread} of $h_i$ matters (\cref{lem:centering}), and for the error of the spectrum estimator this spread is at most~$1/n$~(\cref{lem:spread}).
The $n$ replacements therefore cost at most $n\cdot\eps\cdot(1/n)=\eps$ in total, uniformly in $n$, and what remains is the noiseless protocol on $\rho^{\otimes n}$. The proof uses no information about the shape of the outcome distribution, and only a very limited amount of representation theory (nothing beyond the branching rule for $S_{n-1}\subset S_n$), so we think one of its main virtues is its simplicity.


\begin{figure}[H]
  \centering
  \begin{quantikz}[row sep={0.5cm,between origins}, column sep=0.4cm, wire types={q,q,n,q}]
    \lstick{\noisy{$\sigma_1$}} & \gate[4,style=wssgate]{\text{WSS}} &            & \rstick[4]{$\lambda\sim P(\lambda;\omega^{(0)})$} \\
    \lstick{\noisy{$\sigma_2$}} &                                     &            & \\
    \lstick{$\vdots$}           &                                     & \ \vdots\  & \\
    \lstick{\noisy{$\sigma_n$}} &                                     &            &
  \end{quantikz}
  \par\medskip
  \downarrowtext{Replace register $1$ by $\rho$: the expected
    total-variation error $\tfrac12\|\lambda/n-\spec(\rho)\|_1$ under
    $P(\lambda;\omega^{(0)})$ changes by at most $\eps/n$ when
    $\omega^{(0)}\rightarrow\omega^{(1)}$ (\cref{lem:centering,lem:spread}).}
  \par\medskip
  \begin{quantikz}[row sep={0.5cm,between origins}, column sep=0.4cm, wire types={q,q,n,q}]
    \lstick{\clean{$\rho$}}     & \gate[4,style=wssgate]{\text{WSS}} &            & \rstick[4]{$\lambda\sim P(\lambda;\omega^{(1)})$} \\
    \lstick{\noisy{$\sigma_2$}} &                                     &            & \\
    \lstick{$\vdots$}           &                                     & \ \vdots\  & \\
    \lstick{\noisy{$\sigma_n$}} &                                     &            &
  \end{quantikz}
  \par\medskip
  \downarrowtext{$\dots$repeat one register at a time, $n$ replacements in
    total, thus total change in cost at most $n\cdot\eps/n=\eps$.  After $n$
    replacements the final state is perfect copies $\rho^{\otimes n}$, so
    final output noise is change~$+$~noise~on~perfect: $\epsilon+\eta(n,d)$.}
  \par\medskip
  \begin{quantikz}[row sep={0.5cm,between origins}, column sep=0.4cm, wire types={q,q,n,q}]
    \lstick{\clean{$\rho$}}     & \gate[4,style=wssgate]{\text{WSS}} &            & \rstick[4]{$\lambda\sim P(\lambda;\rho^{\otimes n})$} \\
    \lstick{\clean{$\rho$}}     &                                     &            & \\
    \lstick{$\vdots$}           &                                     & \ \vdots\  & \\
    \lstick{\clean{$\rho$}}     &                                     &            &
  \end{quantikz}
  
  \caption{The proof technique behind Result~\ref{res:product}, WSS is weak Schur sampling.  Starting from the noisy product input $\omega^{(0)}=\sigma_1\otimes\cdots\otimes\sigma_n$ (red), the
  noisy registers are replaced by the ideal copy $\rho$ (blue) one at a time,
  through intermediate states $\omega^{(k)}=\rho^{\otimes k}\otimes\sigma_{k+1}\otimes\cdots\otimes\sigma_n$,
  until the noiseless input $\rho^{\otimes n}$ is reached, on which WSS has error $\eta(n,d)$. We show that each replacement changes the expected total variation error by at most $\eps/n$, so the $n$ replacements cost $\eps$ in total.  The second main ingredient is that it does not matter in which order the registers are replaced, nor which register is replaced at a given step: WSS commutes with all permutations of the registers (\cref{lem:perminv}), so every step is the same single-register bound.}\label{fig:hybrid}
\end{figure}

We further generalize our results beyond the product noise on input. For the replacement step, we apply the argument in De~Palma, Marvian, Trevisan and Lloyd~\cite{depalma2021wasserstein} behind the duality
between the quantum Wasserstein distance of order one and the quantum
Lipschitz constant. The second main result reads:

\begin{result}[Weak Schur sampling under arbitrary noise, \cref{thm:wasserstein,cor:w1cases}]
  Let $\omega$ be an arbitrary, possibly correlated or entangled, state of
  $n$ input copies, and let $\|\omega-\rho^{\otimes n}\|_{W_1}$
  be its distance from the ideal input $\rho^{\otimes n}$ in the
  quantum Wasserstein distance of order one. On such input weak Schur
  sampling with the estimator $\hat\pp=\lambda/n$ learns the spectrum $\pp$ of $\rho$ with expected error
  \begin{equation*}
    \mathbb{E}\bigl[\tfrac12\|\hat\pp-\pp\|_1\bigr]
    \;\le\; \tfrac1n \|\omega-\rho^{\otimes n}\|_{W_1} + \eta(n,d),
    \qquad \eta(n,d)=O\!\bigl(d/\sqrt n\bigr),
  \end{equation*}
  with the same noiseless error $\eta(n,d)$ as above.
\end{result}

In our above-mentioned simplest setting, product $\eps$-noise is the case $\|\omega-\rho^{\otimes n}\|_{W_1}\le n\eps$. In the generalized setting:
a few completely corrupted registers among perfect ones, correlated noise of
small Wasserstein size, and the Wasserstein almost-i.i.d.\ sources of
\cite{girardi2026new} are all covered by the same bound.  The concentration statements that accompany the expectation bound, single-shot guarantees,
confidence amplification, and per-realisation tail bounds are
collected in \cref{app:concentration}.

To introduce a bit of intuition behind the second result, we can see the quantum Wasserstein distance of order one as a quantum extension of
the Hamming distance between strings: on computational basis states it counts
the number of registers in which two strings differ, so that $\ket{0000}$ is
at distance $1$ from $\ket{0001}$ but at distance $4$ from $\ket{1111}$,
whereas the trace distance is $1$ in both cases.  This is exactly the kind of
bookkeeping one needs when the registers are replaced one at a time, like in our proof technique, and it
makes $W_1$ the natural distance for our purposes.
Beyond just realizing the applicability of \cite{depalma2021wasserstein} to our problem, our novel contribution is the input to that argument, \cref{lem:spread}, which in the language of \cite{depalma2021wasserstein} says that the observable associated with the error of the WSS spectrum estimator has quantum Lipschitz constant at most $1/n$.  Once this is established, applying the duality in \cite{depalma2021wasserstein} gives the result for \emph{arbitrary} inputs $\omega$, correlated or entangled (\cref{thm:wasserstein}).

Importantly, our robustness guarantees in \cref{thm:robust} and \cref{thm:wasserstein} are not specific to the total-variation error $\tfrac12\|\hat\pp-\pp\|_1 = \tfrac12\|\lambda/n-\pp\|_1 = f(\lambda)$, viewed as a function of $\lambda$. More generally: for \emph{any} function $f$ of the outcome, the expectation of $f(\lambda)$ moves by at most $\Delta(f)\,\|\omega-\rho^{\otimes n}\|_{W_1}$ (see \cref{thm:wasserstein}), and therefore by at most $n\eps\,\Delta(f)$ under product noise, where $\Delta(f)$ is the largest change $|f(\lambda)-f(\lambda')|$ over all pairs of partitions $\lambda,\lambda'$ that differ by moving a single unit from one entry to another, i.e.\ $\lambda'=\lambda-e_i+e_j=(\lambda_1,\dots,\lambda_i-1,\dots,\lambda_j+1,\dots,\lambda_d)$ (\cref{def:oneboxvar}). Every outcome statistic with $\Delta(f)=O(1/n)$ therefore inherits the same $n$-independent noise floor under product noise, just like the total variation distance.

Finally we show what happens when the promise on the input to weak Schur sampling is too weak, i.e.\ the noise model is too strong.
\begin{result}[Failure under weaker input restrictions, \cref{prop:marginalfailure,prop:globalbudget}]
  Neither of the following two weaker promises on the input to quantum spectrum estimation suffices for robustness, even for the best possible measurement and estimator.
  \begin{enumerate}
    \item[(i)] \textup{(Marginals only.)}  If the input is only promised
      to have every single-copy marginal within trace distance $\eps$ of the target $\rho$, then already at $\eps=0$ there is an input, all of whose marginals are \emph{exactly} $\rho$, on which the estimator
      $\hat\pp=\lambda/n$ has constant error for every $n$, and no
      measurement and estimator can reduce the worst-case error below a
      constant, independent of $n$.
    \item[(ii)] \textup{(Global trace distance only.)}  If the input is
      only promised to be within trace distance $\theta$ of
      $\rho^{\otimes n}$, the worst-case error is $\Theta(\min(\theta,1))$, for weak Schur sampling and for every measurement and estimator alike. For an input noise $\theta$ naturally growing with the system size $n$, as we should expect, we therefore guarantee that under this noise model the output error grows with $n$ as well.
  \end{enumerate}
\end{result}

These examples do not imply that the product and Wasserstein models, under which we prove robustness, are the only noise models that admit it. They show that the noise cannot be arbitrarily bad, and they leave a large space of intermediate noise models and alternative notions of robustness open. We discuss both in \cref{sec:open}.

\subsection{Applications}
\hfill
\\
As mentioned above, our noise robustness result for weak Schur sampling gives, for the first time, a robustness guarantee for the standard Keyl–Werner algorithm for quantum spectrum estimation~\cite{keyl2001estimating}. Beyond this, our results have major applicability to a number of other robustness questions in quantum information theory.

\textit{Full state tomography and purification.}
The sample-optimal tomography protocols
\cite{keyl2006quantum,odonnell2016efficient,haah2017sample}, including the
unbiased estimator of Pelecanos, Spilecki and Wright
\cite{pelecanos2025debiased}, perform weak Schur sampling followed by an
eigenbasis measurement inside the block $\lambda$ (\cref{app:keyl}).
Tomography via random purification~\cite{pelecanos2025mixed,tang2025conjugate},
optimal purification \cite{cirac1999optimal,keyl2001rate} and purity
amplification \cite{li2025optimal} likewise begin with weak Schur sampling and
then act on the post-measurement state.  For all of these our results settle the first stage, the robustness of weak Schur sampling, and with it the spectral part of a tomographic estimate, is robust with the optimal noise error floor. The robustness of the second stage consisting of the eigenbasis measurement remains open (\cref{sec:open}, item~1).

\textit{Random dimension reduction.}
Lowe and Tan \cite{lowe2026random} show that, for input states of rank at most $r$, the $n$-copy input on $(\C^d)^{\otimes n}$ can be mapped to
$(\C^r)^{\otimes n}$ while keeping the information needed to learn properties invariant under a common unitary rotation, such as the fidelity of two states, so that the copy complexity of such tasks depends on $r$ rather than on $d$. Their channel performs weak Schur sampling, outputs a fixed state if $\ell(\lambda)>r$, and otherwise re-embeds the post-measurement state in local dimension $r$. They ask \cite[Sec.~1.2]{lowe2026random} whether the procedure is robust to inputs that are only close to low rank, by analogy with \cite[Cor.~1.6]{badescu2019quantum}.  Identical copies of a state $\sigma$ that is $\eps$-close to a state of rank at most $r$ are an instance of the product noise model, so our robustness guarantees for weak Schur sampling could be of use for robustness analysis of random dimension reduction.

\textit{Learning the average of non-identical states.}
De~Palma, Fanizza, Mowry and O'Donnell \cite{depalma2025noniid} leave open
whether the average state $\rho_{\mathrm{avg}}=\frac1n\sum_i\sigma_i$ of a
product input $\omega=\sigma_1\otimes\cdots\otimes\sigma_n$ can be learned with the i.i.d.\ copy complexity. For its spectrum, applying our \cref{cor:w1cases}(i) with $\rho=\rho_{\mathrm{avg}}$ immediately gives
\begin{equation*}
  \mathbb{E}^{\,\omega}_{\mathrm{meas}}
    \bigl[\tfrac12\|\hat\pp-\spec(\rho_{\mathrm{avg}})\|_1\bigr]
  \;\le\;\eta(n,d)+\frac1n\sum_{i=1}^n\TD(\sigma_i,\rho_{\mathrm{avg}}),
\end{equation*}
a useful guarantee when the copies cluster around their average, as for a
device whose state drifts slowly.

\textit{Functionals of the spectrum and tolerant testing.} For the type of problems that ask questions about some functions of the spectrum, our analysis applies immediately.
As previously mentioned in our results, the robustness guarantees that we provide apply not just to total variation distance, but also to other functions $f(\lambda)$ of the isotypic block, with the bounds depending on the property of these functions that we call one-box variation $\Delta(f)$. 
For example, estimators of the entropy \cite{acharya2020measuring}, of the purity, and of the weight of the largest eigenvalues \cite{odonnell2016efficient,odonnell2017efficient}, as well as the spectrum testers of O'Donnell and Wright \cite{odonnell2015quantum}, are all functions of $\lambda$, so our analysis reduces their robustness question to computing the one-box variation. It's easy to verify that per our definitions the partial sums $\sum_{i\le k}\lambda_i/n$ and the plug-in purity $\sum_i(\lambda_i/n)^2$ have one-box variation at most $1/n$ and $2/n$, giving floors $\eps$ and $2\eps$.  For the plug-in entropy $S(\lambda/n)$, \cref{thm:robust} and the continuity bound of Audenaert \cite{audenaert2007sharp} give an expected error of at most $(\eps+\eta)\log(d-1)+h(\eps+\eta)$, with $h$ the binary entropy and $\eta=\eta(n,d)$, whenever $\eps+\eta\le\frac12$.

\textit{Adversarially corrupted copies.}
Aliakbarpour, Braverman, Chia and Liu \cite{aliakbarpour2025adversarially}
study state learning and testing when an adversary corrupts a
$\gamma$-fraction of the data, and show that with non-adaptive single-copy
measurements the error of learning a general state must grow with the
dimension, as $\gamma\sqrt d$ in the worst case.  For collective spectrum
estimation, an input in which $\gamma n$ registers are replaced by an
arbitrary, possibly entangled, state is covered by \cref{cor:w1cases}(ii),
with the dimension-free floor $2\gamma$.  The tasks differ, so this is not a separation, but it shows that for the spectrum the collective measurement tolerates such corruption at a cost independent of the dimension. However, proving robustness of the eigenbasis measurement together with our result would imply robustness of full state tomography, which then would indeed give a separation between the noise floor for single-copy measurements and collective measurements.

\subsection{Connections to existing work}
\hfill

\textit{Robustness of single-copy protocols.}
Noise robustness of tomography has mostly been studied for
protocols in which each copy is measured separately, and the existing
works differ in what is assumed to be noisy.  Closest to our model are those in which the \emph{prepared input states} vary. Stilck Fran\c{c}a, Brand\~ao and Kueng \cite{stilckfranca2021fast} give a fast tomography algorithm from few measurement bases whose guarantees hold under bounded state-preparation and measurement errors, the prepared state being allowed to differ from one measurement setting to the next; Zambrano \cite{zambrano2026tomography} shows that projected least-squares tomography from non-adaptive single-copy measurements recovers the \emph{time-averaged} state at the i.i.d.\ rate even
when the source is fully adaptive (in other words, for an adversarial
source); and Gu\c{t}\u{a}, Kahn, Kueng and Tropp \cite{guta2020fast}, who
introduced projected least squares, prove robustness to approximately
low-rank targets and remark that their analysis should extend to drifting
state preparation.  A second group treats the \emph{measurement apparatus} as the noisy component: Chen, Yu, Zeng and Flammia \cite{chen2021robust} and Koh
and Grewal \cite{koh2022classical} make the classical shadows protocol
\cite{huang2020predicting} robust to noisy measurements, and Jayakumar et
al.\ \cite{jayakumar2024universal} learn the state and the detector noise
simultaneously.  A third group considers \emph{adversarial corruption}:
Aliakbarpour, Braverman, Chia and Liu \cite{aliakbarpour2025adversarially}
let an adversary alter a $\gamma$-fraction of the measurement outcomes and
show that the resulting error floor of non-adaptive single-copy learning must grow with the dimension, as $\gamma\sqrt d$ in the worst case.  Finally, Li et al.~\cite{li2026efficient} study state and process tomography of noisy circuits; there the noise sits inside the preparation and is part of the target, and the copies remain identical.  Their states are outputs of random logarithmic-depth circuits with constant-strength noise, which damps all high-weight Pauli components, so only polynomially many low-weight coefficients need to be learned.  In all of these the estimate is a classical function of $n$ independent single-copy outcomes, and robustness can be traced back to the fact that an average of $n$ independent quantities is perturbed by at most the average perturbation. For a collective measurement, like in our case with weak Schur sampling, no such argument is available.

\textit{Learning from non-i.i.d.\ sources.} A second line of work considers tasks that assume independent identically distributed input (perfect copies) and drops the i.i.d.\ assumption altogether.  Christandl and Renner \cite{christandl2012reliable} construct confidence regions for tomography that are valid for arbitrary, possibly correlated, inputs, and Fawzi, Kueng, Markham and Oufkir \cite{fawzi2024learning} give a general reduction that adapts \emph{any} i.i.d.\ learning algorithm to arbitrary inputs, beyond just state tomography and spectrum estimation. Both go through de~Finetti-type arguments, apply to a randomly chosen subset of the systems, and pay a polynomial factor in the dimension for entangled measurements. This polynomial price is significant: for arbitrary, possibly entangled, measurements, the reduction of \cite{fawzi2024learning} emulates an i.i.d.\ algorithm that uses $k$ copies with $N = O(k^3 d^2 \log d)$ systems, up to factors depending on the accuracy and failure probability, and with $N = O\bigl(k^2 \log d \,\log^2(k/\beta)/(\beta^2\eps^2)\bigr)$ systems for non-adaptive single-copy measurements, where $\beta$ is the failure probability.  Instantiated with sample-optimal collective
tomography, $k = O(d^2/\delta^2)$, the first bound is polynomial of much
higher degree than the $O(d^3/\delta^2)$ copies that suffice for tomography with single-copy measurements \cite{haah2017sample}, so the advantage of collective measurements over single-copy ones does not survive the reduction.  Moreover, the resulting guarantee concerns a prediction for a randomly held-out system rather than a fixed target state. Fanizza, Quek and Rosati \cite{fanizza2024learning} extend shadow tomography to non-identical product states, Fanizza, Salvia and Giovannetti \cite{fanizza2023testing} test identity of collections of states, and Go\v{c}anin, \v{S}upi\'c and Daki\'c \cite{gocanin2022sample} certify states device-independently from product sources. Closest to the setup described in our paper, De~Palma, Fanizza, Mowry and O'Donnell \cite{depalma2025noniid} study hypothesis testing on $\rho_1\otimes\cdots\otimes\rho_T$ with a collective measurement and prove a quantum Efron--Stein inequality for product states along the way. Their statistic is a sum of two-body swap operators, whose expectation on a product state is explicit by linearity; they leave open whether the average state $\rho_{\mathrm{avg}}=\frac1T\sum_i\rho_i$ can be learned with the i.i.d.\ copy complexity.  However, despite the similarities of their setup to our product noise model, as well as their usage of collective measurements, our problem is different in kind: the estimand is the spectrum of a fixed target $\rho$ to which every copy is promised to be close, the protocol is fixed in advance, and the estimator $\lambda/n$ is not a few-body statistic since it is constructed from all copies collectively.

\textit{Almost-i.i.d.\ sources and the quantum Wasserstein distance.}
Which kind of departures from $\rho^{\otimes n}$ preserve the predictions of the
i.i.d.\ theory has recently been asked systematically in quantum Shannon
theory.  Mazzola, Sutter and Renner \cite{mazzola2026almostiid} introduce
almost-i.i.d.\ states via a de~Finetti-type construction; Girardi, De~Palma
and Lami \cite{girardi2026new} add two further notions --- \emph{Wasserstein}
almost-i.i.d.\ sources, for which the quantum Wasserstein distance of order
one \cite{depalma2021wasserstein} to $\rho^{\otimes n}$ is $o(n)$, and
\emph{weakly} almost-i.i.d.\ sources, for which average $k$-body marginals
converge to $\rho^{\otimes k}$ --- and prove a strict hierarchy between the three; Girardi, Datta, De~Palma and Lami \cite{girardi2026robust} determine which notions preserve error exponents, compression rates and capacities. Estimation problems have not been treated in that programme, and the present paper can be read as its spectrum-estimation entry: weak Schur sampling is robust for Wasserstein almost-i.i.d.\ sources (\cref{thm:wasserstein}), and it fails for weakly almost-i.i.d.\ sources already at $k=1$ (\cref{prop:marginalfailure}). The quantum Wasserstein distance and its dual Lipschitz constant are also the natural language for our proof technique, as explained next.

\Cref{tab:comparison} summarises the works discussed in this subsection that are directly comparable to ours. The papers from the subsection that are not present in the table are either only tangentially related to our question, solve a different problem, assume a simplified class of states on input, or otherwise admit no direct comparison to our work. Sample complexities are given to leading order in target dimension $d$ and in the precision $\delta$ (rank $r$ where the work is rank-sensitive), at constant failure probability. Fawzi et al.\ \cite{fawzi2024learning} sample-complexity is computed by taking $k = O(d^2/\delta^2)$ for state tomography with collective measurements, and $k=O(d^3/\delta^2)$ for single-copy measurements, and substituting into the reduction formulas as described above. 

\begin{table}[H]
  \centering
  \footnotesize
  \setlength{\tabcolsep}{4pt}
  \renewcommand{\arraystretch}{1.15}
  \setlength{\dashlinedash}{1.5pt}
  \setlength{\dashlinegap}{2pt}
  \begin{tabularx}{\textwidth}{@{}
      >{\raggedright\arraybackslash}p{2.9cm}
      >{\raggedright\arraybackslash}p{1.8cm}
      >{\raggedright\arraybackslash}X
      >{\raggedright\arraybackslash}p{2.3cm}
      >{\raggedright\arraybackslash}p{4.8cm}@{}}
    \toprule
    Paper & Problem & Noise model & Measurements & Copies \\
    \midrule
    \textbf{This work}
      & QSE & product $\eps$-noise; $W_1$ budget
      & collective (WSS) & $O\bigl(d^2/(\delta-\eps)^2\bigr)$; floor $\eps$ optimal\\
    \midrule
    Stilck Fran\c{c}a et al.\ \cite{stilckfranca2021fast}
      & QST & bounded state-preparation and measurement error
      & single-copy & $O(dr^2\log d/\delta^4)$; floor $\Theta(\eps)$\\
    \hdashline
    Zambrano \cite{zambrano2026tomography}
      & QST & adaptive (adversarial) source
      & single-copy, non-adaptive
      & $O(dr^2\log d/\delta^2)$ (2-designs); \newline
        $O(d^{\log_2 3}\,r^2\log d/\delta^2)$ (Pauli bases) \\
    \hdashline
    Gu\c{t}\u{a} et al.\ \cite{guta2020fast}
      & QST & none (perfect copies), target only approximately rank $r$
      & single-copy & $O(dr^2\log d/\delta^2)$ \\
    \hdashline
    Aliakbarpour et al.\ \cite{aliakbarpour2025adversarially}
      & QST, testing & $\gamma$-fraction of outcomes corrupted
      & single-copy, non-adaptive
      & $d^{O(\log(1/\gamma))}$; floor $\tilde\Theta(\gamma\sqrt r)$ \\
    \hdashline
    Fawzi et al.\ \cite{fawzi2024learning}
      & any i.i.d.\ task & arbitrary input
      & any
      & For QST: $\tilde O(d^6/\delta^6)$ (single-copy), \newline
        $O(d^8\log d/\delta^8)$ (collective) \\
    \bottomrule
  \end{tabularx}
  \caption{Robustness results for quantum learning compared with the present work.  QST: state tomography; QSE: spectrum estimation; WSS: weak Schur sampling; $d$: target local dimension; $r$: target rank for $r \leq d$; $\delta$: output precision.}
  \label{tab:comparison}
\end{table}

\subsection*{Organisation}

\Cref{sec:prelim} recalls Schur--Weyl duality, weak Schur sampling, and the
noiseless quantum spectrum estimation; \cref{sec:setup} sets up the product noise model and
defines robustness; \cref{sec:main} contains the main result for product
noise and its proof; \cref{sec:wasserstein} recalls the quantum Wasserstein distance and generalises our result to arbitrary inputs;
\cref{sec:beyondproduct} treats the two relaxed noise classes on which
estimation fails; \cref{sec:open} discusses the scope of the results and
lists open questions. In the Appendix, \Cref{app:keyl} descibes the precision of the Keyl measurement for spectrum estimation, and also recalls the full Keyl measurement and the Keyl formula for the outcome distribution, and \cref{app:concentration} contains the single-shot, amplification and concentration statements.

\section{Preliminaries}
\label{sec:prelim}

\subsection{Notation}

A \emph{quantum state} (often shortened to \textit{state} in this work) on $\C^d$ is a positive semidefinite operator of unit trace.
The \emph{trace distance} between states $\sigma,\rho$ is
$\TD(\sigma,\rho)\defeq\tfrac12\|\sigma-\rho\|_1$, where $\|\cdot\|_1$ is the
trace norm; it is jointly convex and monotone under quantum channels
\cite{nielsen2000}.  For a state $\rho$ we write $\spec(\rho)\in\R^d$ for its
eigenvalues in non-increasing order, and for probability vectors
$q,q'\in\R^d$ we measure distance by $\tfrac12\|q-q'\|_1$, the total
variation distance.   

A \emph{partition} $\lambda\vdash n$ with at most $d$ parts is a vector
$(\lambda_1\ge\cdots\ge\lambda_d\ge0)$ of integers summing to $n$, and
$\ell(\lambda)$ is the number of non-zero parts.  We identify $\lambda$ with
its \emph{Young diagram}: a left-aligned array of $n$ unit squares, called
\emph{boxes}, with $\lambda_j$ boxes in row $j$, rows being stacked from the
top and weakly decreasing in length.
Thus~$\lambda=(3,1)\vdash4$ is $\raisebox{0.3\height}{\scalebox{0.6}{\ydiagram{3,1}}}$, and \emph{adding a box} to $\lambda$ means increasing one of the components $\lambda_i$ by one, such that it preserves the weakly decreasing ordering. Writing $e_j$ for the $j$-th standard basis vector, $\lambda+e_j$ is the diagram with one more box in row $j$, provided the result is again a partition, i.e.\ $\lambda_j+1\le\lambda_{j-1}$ (or $j=1$)
and $j\le d$.  For $\lambda=(3,1)$ and $d\ge3$ the rows $1$, $2$ and $3$ are
all addable,
\begin{equation*}
  \lambda+e_1=\ydiagram{4,1}\,,\qquad
  \lambda+e_2=\ydiagram{3,2}\,,\qquad
  \lambda+e_3=\ydiagram{3,1,1}\,,
\end{equation*}
whereas for $\lambda=(2,2)$ row $2$ is not addable, since $(2,3)$ is not a
partition. We write $\operatorname{add}(\mu)\subseteq[d]$ for the set of addable rows of $\mu$, i.e.\ the rows $j$ for which $\mu+e_j$ is again a partition with at most $d$ parts; thus $\operatorname{add}((3,1))=\{1,2,3\}$ for $d\ge3$ and $\operatorname{add}((2,2))=\{1,3\}$.

Every expectation in the paper carries a subscript naming the randomness it averages over, namely $\mathbb{E}_{\mathrm{meas}}$ averages over different iterations of the protocol (i.e.\ over shots, i.e.\ over final measurements instances), and $\mathbb{E}_{\mathrm{noise}}$ averages over different noise initializations.

\subsection{Schur--Weyl duality}
\label{subsec:sw}

The space $(\C^d)^{\otimes n}$, which we think of as $n$ qudits of local
dimension $d$, carries two commuting group actions.  The general linear group
$\GL(d,\C)$ acts \emph{diagonally}, that is, the same matrix acts on every
tensor factor,
\begin{equation}
  A\;\longmapsto\;A^{\otimes n},\qquad A\in\GL(d,\C),
  \label{eq:glaction}
\end{equation}
and the symmetric group $S_n$ acts by permuting the factors,
$\pi\,(\ket{\psi_1}\otimes\cdots\otimes\ket{\psi_n})
=\ket{\psi_{\pi^{-1}(1)}}\otimes\cdots\otimes\ket{\psi_{\pi^{-1}(n)}}$, which
physically is a sequence of qudit swaps.  The two actions commute, since
permuting the factors of $A^{\otimes n}$ returns $A^{\otimes n}$, and in fact
each of them generates the full commutant of the other in
$\End((\C^d)^{\otimes n})$ \cite{fulton1991representation,harrow2005applications}.
\emph{Schur--Weyl duality} is a theorem in representation theory that states the two actions simultaneously block diagonalize, because their actions generate full commutants of one another. The block diagonalization is:
\begin{equation}
  (\C^d)^{\otimes n}
  \;\cong\;
  \bigoplus_{\substack{\lambda\vdash n\\ \ell(\lambda)\le d}}
  V_\lambda\otimes M_\lambda ,
  \label{eq:sw}
\end{equation}
where the blocks, also called \emph{isotypic components} of the $\GL(d,\C)$ and $S_n$ actions, are labelled by the partitions $\lambda$ of $n$ with at most $d$ parts, i.e.\ Young diagrams with a total of $n$ boxes in $d$ rows. $V_\lambda$ carries an irreducible representation $\pi_\lambda$ of $\GL(d,\C)$, $M_\lambda$ carries an irreducible representation of $S_n$, and
different labels $\lambda$ give inequivalent representations of both groups.
Concretely, in a suitable basis every operator $A^{\otimes n}$ is block
diagonal, acting on the block $\lambda$ as $\pi_\lambda(A)\otimes\id_{M_\lambda}$,
and every permutation is block diagonal too, acting on the block $\lambda$
within the factor $M_\lambda$ only.  

The main fact about the labels $\lambda$ that we
use in this paper is that they are exactly the Young diagrams introduced above. Since the state of $n$ perfect input copies
$\rho^{\otimes n}$ is of the form \cref{eq:glaction} (it lies in the algebra
spanned by the group elements $A^{\otimes n}$), it is block diagonal with respect to
\cref{eq:sw} as well.  This is the starting point of weak Schur sampling,
defined next: because of the block structure, the orthogonal projectors
$\Pi_\lambda$ onto the blocks form a projective measurement, and since
$\rho^{\otimes n}$ is block diagonal, this measurement leaves the actual state on the registers within
each block untouched and only reveals the label $\lambda$ of the block.
\\

\paragraph{\textbf{The branching rule.}}
In our proofs we utilize the idea of perturbing only one register at a time, with permutation symmetry retained on the remaining $n-1$ registers. In this sense, we start from $n$ registers with a symmetry and reduce this symmetry to $n-1$ registers. One key ingredient we need from representation theory to describe this formally concerns the subgroup $S_{n-1}\subset S_n$ that permutes the first $n-1$ registers only. Restricted to $S_{n-1}$, the representation $M_\lambda$ of $S_n$ splits as
\begin{equation}
  M_\lambda\big|_{S_{n-1}}
  \;\cong\;
  \bigoplus_{\substack{\mu\vdash n-1\ \\ \text{s.t.} \, \lambda=\mu+e_j}}
  M_\mu ,
  \label{eq:branching}
\end{equation}
where $j \in \operatorname{add}(\mu)$ runs over all addable rows of $\mu$. In other words, the sum is running over the diagrams $\mu\vdash n-1$ obtained from $\lambda$ by removing one box, each appearing exactly once \cite{fulton1991representation}. Thus the label of all $n$ registers is always the label of the first $n-1$ registers plus one box.

\begin{remark}
    As illustrated in \cref{fig:hybrid}, in \cref{subsec:hybrid} we actually proceed the other way around: we start with a noisy state and introduce symmetry into it by replacements, one register at a time. This is completely analogous, and utilizes the same structure of the branching rule.
\end{remark}

\subsection{Weak Schur sampling and the spectrum estimator}
\label{subsec:wss}

\emph{Weak Schur sampling} is the projective measurement
$\{\Pi_\lambda\}_\lambda$ on $(\C^d)^{\otimes n}$, where $\Pi_\lambda$ is the orthogonal projector
onto the block $V_\lambda\otimes M_\lambda$ of \cref{eq:sw}
\cite{keyl2001estimating,bacon2006efficient}.  It can be implemented by the so-called \textit{Schur transform}, the unitary
that performs the block decomposition of \cref{eq:sw}, using
$\poly(n,d,\log(1/\alpha))$ gates \cite{bacon2006efficient} or, with a
different construction, $\poly(n,\log d,\log(1/\alpha))$ gates
\cite{burchardt2025highdimensionalquantumschurtransforms, Krovi2019}, where $\alpha$ is the accuracy to which the transform is approximated by the circuit.  For an arbitrary input
state $\omega$ on $(\C^d)^{\otimes n}$ the outcome law is
\begin{equation}
  P(\lambda;\omega)\defeq\tr[\Pi_\lambda\,\omega],
  \label{eq:outcomelaw}
\end{equation}
and the \emph{spectrum estimator} is $\hat\pp\defeq\lambda/n$. 

Intuitively, changing the eigenbasis of the target state should not affect
the complexity of learning its spectrum.  Indeed, for a unitary $U\in U(d)$ the operator $U^{\otimes n}$ is of the form \cref{eq:glaction} and hence block diagonal with respect to \cref{eq:sw}, so it commutes with every $\Pi_\lambda$ and
\begin{equation}
  P\bigl(\lambda;\,U^{\otimes n}\omega\,(U^{\otimes n})^\dagger\bigr)
  =P(\lambda;\omega)
  \qquad\text{for all }U\in U(d),
  \label{eq:unitaryinv}
\end{equation}
for every input $\omega$, product or not. We can show an analogous statement for the commutant: permutations do not affect the output distribution on the spectrum, regardless of whether the state is permutationally symmetric on input.

\begin{lemma}[Permutation invariance of weak Schur sampling]
  \label{lem:perminv}
  For every $\pi\in S_n$ and every label $\lambda$,
  \begin{equation}
    \pi\,\Pi_\lambda\,\pi^\dagger=\Pi_\lambda ,
    \qquad\text{equivalently}\qquad
    \pi\,\Pi_\lambda=\Pi_\lambda\,\pi .
    \label{eq:permcommute}
  \end{equation}
  More precisely, with respect to the block decomposition \cref{eq:sw} the
  permutation acts within each block as
  \begin{equation}
    \pi\big|_{V_\lambda\otimes M_\lambda}
    \;=\;\id_{V_\lambda}\otimes\sigma_\lambda(\pi),
    \label{eq:permblock}
  \end{equation}
  where $\sigma_\lambda$ denotes the irreducible representation of $S_n$ on
  $M_\lambda$.  Consequently, for every state $\omega$ on
  $(\C^d)^{\otimes n}$,
  \begin{equation}
    P(\lambda;\pi\,\omega\,\pi^\dagger)=P(\lambda;\omega)
    \qquad\text{for all }\pi\in S_n .
    \label{eq:symmetrisation}
  \end{equation}
\end{lemma}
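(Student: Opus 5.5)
The plan is to derive everything from Schur--Weyl duality \cref{eq:sw} and the fact that the permutation action lies in the commutant of the diagonal $\GL(d,\C)$ action. First, I will establish the block form \cref{eq:permblock}. By Schur--Weyl duality, in the basis realising \cref{eq:sw} the representation of $S_n$ on $(\C^d)^{\otimes n}$ decomposes as $\bigoplus_\lambda \id_{V_\lambda}\otimes\sigma_\lambda$. This is the content of the statement that $S_n$ acts on the block $\lambda$ within the factor $M_\lambda$ only, recalled after \cref{eq:sw}. If one prefers a self-contained derivation, I would argue as follows. Each permutation operator $\pi$ commutes with every $A^{\otimes n}$. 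Hence, by the double-commutant part of Schur--Weyl duality, $\pi$ lies in the algebra generated by the $S_n$ action, which is $\bigoplus_\lambda \id_{V_\lambda}\otimes\End(M_\lambda)$. It is therefore block diagonal, and it acts on each block as $\id_{V_\lambda}$ tensored with some operator on $M_\lambda$. That operator is, by definition, $\sigma_\lambda(\pi)$.

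Second, I will deduce \cref{eq:permcommute}. Since $\pi$ is block diagonal with respect to the orthogonal decomposition \cref{eq:sw}, it maps each block $V_\lambda\otimes M_\lambda$ into itself. A block-diagonal operator commutes with the orthogonal projector onto any block, so $\pi\Pi_\lambda=\Pi_\lambda\pi$. Because $\pi$ is unitary (it permutes an orthonormal product basis), this is equivalent to $\pi\Pi_\lambda\pi^\dagger=\Pi_\lambda$.

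Third, I will prove \cref{eq:symmetrisation} by cyclicity of the trace. We have $P(\lambda;\pi\omega\pi^\dagger)=\tr[\Pi_\lambda\pi\omega\pi^\dagger]=\tr[\pi^\dagger\Pi_\lambda\pi\,\omega]$. Applying \cref{eq:permcommute} to $\pi^{-1}$, whose operator is $\pi^\dagger$, gives $\pi^\dagger\Pi_\lambda\pi=\Pi_\lambda$, and hence the expression equals $\tr[\Pi_\lambda\omega]=P(\lambda;\omega)$. This mirrors the unitary-invariance argument for \cref{eq:unitaryinv}.

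The only step with real content is the first. It consists of justifying that the identification in \cref{eq:sw} is compatible with the $S_n$ action, so that permutations genuinely preserve the blocks rather than mixing equivalent copies. Multiplicity-freeness settles this: distinct $\lambda$ give inequivalent $S_n$ irreducibles, so the isotypic components of the $S_n$ action are canonical subspaces. Every operator in the $S_n$ action, and in particular $\pi$ itself, preserves them. I expect this to be routine once that point is stated.
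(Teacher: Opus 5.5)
Your proposal is correct and follows essentially the same route as the paper. You read off the block form of $\pi$ from Schur--Weyl duality. You deduce that $\pi$ preserves each block and its orthogonal complement, so it commutes with $\Pi_\lambda$; the paper gets the complement from unitarity of $\pi$, you get it from orthogonality of the decomposition, which amounts to the same thing. You then finish by cyclicity of the trace, exactly as the paper does.

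One small point: your ``self-contained'' aside is circular as phrased, since $\pi$ trivially lies in the algebra generated by the $S_n$ action. The substantive fact is that the commutant of the diagonal $\GL(d,\C)$ action is $\bigoplus_\lambda\id_{V_\lambda}\otimes\End(M_\lambda)$. Your closing isotypic-component argument gives the same thing and is the right justification.
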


\begin{proof}
  By \cref{eq:sw}, every permutation is block diagonal, acting on the block
  $V_\lambda\otimes M_\lambda$ within the factor $M_\lambda$ only, which is
  \cref{eq:permblock}.  In particular each block is invariant under $\pi$,
  and since $\pi$ is unitary the orthogonal complement of the block is
  invariant as well.  A unitary that maps both the range and the kernel of
  a projector to themselves commutes with it, which gives
  \cref{eq:permcommute}.  Finally,
  \begin{equation*}
    P(\lambda;\pi\,\omega\,\pi^\dagger)
    =\tr\bigl[\Pi_\lambda\,\pi\,\omega\,\pi^\dagger\bigr]
    =\tr\bigl[\pi^\dagger\,\Pi_\lambda\,\pi\;\omega\bigr]
    =\tr\bigl[\Pi_\lambda\,\omega\bigr]
    =P(\lambda;\omega),
  \end{equation*}
  using cyclicity of the trace and \cref{eq:permcommute}.
\end{proof}

\noindent In words, moving a permutation through $\Pi_\lambda$ only
reshuffles the $M_\lambda$ register inside the block and leaves the block
itself, and hence the measurement outcome, untouched.  So the outcome law of
a product input does not depend on the order of the factors.

The single fact about the noiseless input that we shall need is the
following concentration bound.

\begin{theorem}[Keyl--Werner concentration, {\cite{keyl2001estimating}}, see also {\cite[Thm.~1]{christandl2006spectra}}]
  \label{thm:kw}
  For every state $\varsigma$ on $\C^d$, every number of copies $n \geq 1$ and every precision $\delta > 0$,
  \begin{equation}
    \Pr_{\mathrm{meas}}^{\,\varsigma^{\otimes n}}
    \bigl[\tfrac12\|\lambda/n-\spec(\varsigma)\|_1>\delta\bigr]
    \;\le\; a(n,d)\,\exp(-2n\delta^2),
    \qquad a(n,d)=(n+1)^{O(d^2)},
    \label{eq:precision}
  \end{equation}
  so that the error probability is exponentially suppressed in $n\delta^2$, up to a polynomial prefactor in $n$.
\end{theorem}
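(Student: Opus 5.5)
Writing $\pp=\spec(\varsigma)$, my plan is to obtain the bound from three ingredients: the Schur--Weyl form of the outcome law, an upper bound on the dimensions of the blocks, and a type-class estimate for the diagonal weight.

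\textbf{Step 1 (exact outcome law).} By \cref{eq:unitaryinv}, I will assume $\varsigma=\diag(\pp)$. The state $\varsigma^{\otimes n}$ acts on the block $\lambda$ as $\pi_\lambda(\varsigma)\otimes\id_{M_\lambda}$. Hence
\begin{equation*}
  P(\lambda;\varsigma^{\otimes n})=\dim M_\lambda\cdot s_\lambda(\pp),
\end{equation*}
where $s_\lambda$ is the Schur polynomial, i.e.\ the character of $\pi_\lambda$. I will use two standard facts about its ingredients. First, $s_\lambda(\pp)=\sum_{T}\pp^{\,\mathrm{wt}(T)}$, with the sum over semistandard tableaux $T$ of shape $\lambda$. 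Second, every weight $\mu$ that appears is majorized by $\lambda$, and the number of tableaux is $\dim V_\lambda\le(n+1)^{d(d-1)/2}$.

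\textbf{Step 2 (bounding each term).} I will compare each term with a multinomial coefficient, using $\dim M_\lambda\le\binom{n}{\lambda}$. For a weight $\mu$ (a permutation of a composition majorized by $\lambda$), I will use the type bound
\begin{equation*}
  \binom{n}{\lambda}\,\pp^{\mu}\le \exp\bigl(nH(\lambda/n)+n\textstyle\sum_i(\mu_i/n)\log p_i\bigr).
\end{equation*}
Since $\pp$ is sorted, rearrangement and majorization give $\sum_i\mu_i\log p_i\le\sum_i\lambda_i\log p_i$, because $\log p_i$ is non-increasing and $\lambda\succ\mu$. Together these bound each term by $\exp\bigl(-n\,D(\lambda/n\,\|\,\pp)\bigr)$. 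Consequently
\begin{equation*}
  P(\lambda;\varsigma^{\otimes n})\le(n+1)^{d(d-1)/2}\,e^{-nD(\lambda/n\|\pp)}.
\end{equation*}

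\textbf{Step 3 (sum over the far partitions).} I will apply Pinsker's inequality, $D(q\|\pp)\ge 2\,(\tfrac12\|q-\pp\|_1)^2$, and then sum over all partitions $\lambda$ with at most $d$ parts at distance greater than $\delta$. There are at most $(n+1)^d$ of them. The result is the prefactor $a(n,d)=(n+1)^{d(d+1)/2}=(n+1)^{O(d^2)}$ multiplying $e^{-2n\delta^2}$.

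The main obstacle is Step 2, specifically the majorization/rearrangement inequality applied to the tableau weights: the monomials of $s_\lambda$ need not be sorted, and the argument needs $\pp$ sorted in decreasing order and $\lambda$ dominating every weight. Zero eigenvalues also require care, which I will handle by continuity or by the convention $0\log 0=0$. If that route proves fiddly, I would instead cite \cite[Thm.~1]{christandl2006spectra} directly, since their proof follows exactly this route.
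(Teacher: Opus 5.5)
Your argument is correct, and it is the standard proof behind the result the paper relies on: the paper gives no proof of its own and only cites \cite{keyl2001estimating} and \cite[Thm.~1]{christandl2006spectra}, whose bound $P(\lambda;\varsigma^{\otimes n})\le(n+1)^{d(d-1)/2}e^{-nD(\lambda/n\|\pp)}$ is proved exactly as in your Steps 1--2, after which Pinsker and a union bound over at most $(n+1)^d$ partitions give the stated form with $a(n,d)=(n+1)^{d(d+1)/2}$. The step you flag is unproblematic: any $k$ distinct values occupy at most $\sum_j\min(k,\lambda^\top_j)=\lambda_1+\cdots+\lambda_k$ cells of a semistandard tableau, so every weight is dominated by $\lambda$, and Abel summation against the non-increasing $\log p_i$ then gives $\sum_i\mu_i\log p_i\le\sum_i\lambda_i\log p_i$ (with zero eigenvalues handled as you indicate).
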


Here and throughout, $\Pr^{\,\omega}_{\mathrm{meas}}$ and
$\mathbb{E}^{\,\omega}_{\mathrm{meas}}$ refer to the randomness of the
measurement outcome $\lambda\sim P(\,\cdot\,;\omega)$ for a fixed input
$\omega$.  We package the noiseless performance into a single quantity.

\begin{lemma}[Noiseless expected error]
  \label{lem:perfecterror}
  Define
  \begin{equation}
    \eta(n,d)
    \;\defeq\;
    \sup_{\varsigma}\;
    \mathbb{E}^{\,\varsigma^{\otimes n}}_{\mathrm{meas}}
    \bigl[\tfrac12\|\lambda/n-\spec(\varsigma)\|_1\bigr],
    \label{eq:eta}
  \end{equation}
  the supremum being over all states $\varsigma$ on $\C^d$.  Then
  $\eta(n,d)=O(d/\sqrt n)$.
\end{lemma}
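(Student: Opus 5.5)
I will derive the bound $\eta(n,d)=O(d/\sqrt n)$ from the tail bound of \cref{thm:kw}, by integrating the tail over all precision levels. Fix a state $\varsigma$ and write $X\defeq\tfrac12\|\lambda/n-\spec(\varsigma)\|_1$. This is a random variable with values in $[0,1]$, since both vectors are probability vectors. The layer-cake formula gives
\begin{equation*}
  \mathbb{E}^{\,\varsigma^{\otimes n}}_{\mathrm{meas}}[X]
  =\int_0^1\Pr\bigl[X>\delta\bigr]\,d\delta
  \le \delta_0+\int_{\delta_0}^{\infty}\min\bigl(1,\,a(n,d)e^{-2n\delta^2}\bigr)\,d\delta ,
\end{equation*}
for any threshold $\delta_0\ge0$. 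None of these bounds depend on $\varsigma$, so taking the supremum over $\varsigma$ at the end is free.

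Next I choose the threshold $\delta_0$ so that the prefactor is absorbed: $a(n,d)e^{-2n\delta_0^2}\le1$. Since $\log a(n,d)=O(d^2\log(n+1))$, it suffices to take $\delta_0=C\,d\sqrt{\log(n+1)/n}$. For the remaining integral, I substitute $\delta=\delta_0+t$ and use $\delta^2\ge\delta_0^2+t^2$, which gives $a(n,d)e^{-2n\delta^2}\le e^{-2nt^2}$. The tail is then bounded by $\int_0^\infty e^{-2nt^2}dt=O(1/\sqrt n)$. Altogether this yields $\eta(n,d)=O(d\sqrt{\log n}/\sqrt n)$.

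The main obstacle is the leftover $\sqrt{\log n}$ factor. It is an artifact of the crude polynomial prefactor $(n+1)^{O(d^2)}$, which comes from counting types. The tail bound of \cref{thm:kw} alone cannot remove it, so I would replace the concentration argument by a second-moment argument, citing the sharper analysis of the Keyl--Werner estimator in \cite{odonnell2016efficient}. There, for $\lambda\sim P(\cdot;\varsigma^{\otimes n})$, one has $\mathbb{E}\|\lambda/n-\spec(\varsigma)\|_2^2\le O(d/n)$. Cauchy--Schwarz in $\R^d$ then gives
\begin{equation*}
  \mathbb{E}\|\lambda/n-\spec(\varsigma)\|_1\le\sqrt d\,\bigl(\mathbb{E}\|\lambda/n-\spec(\varsigma)\|_2^2\bigr)^{1/2}=O(d/\sqrt n),
\end{equation*}
uniformly in $\varsigma$. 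Taking the supremum over $\varsigma$ proves the lemma.

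If I wanted a self-contained argument instead, I would bound the squared $\ell^2$ error through the majorization and coupling relation between $\lambda/n$ and the sorted empirical type of a computational-basis measurement in the eigenbasis of $\varsigma$. That route relies on the identity $\mathbb{E}[\lambda_1+\dots+\lambda_k]\ge n(p_1+\dots+p_k)$ together with variance bounds, which I expect to be the technical core. Citing \cite{odonnell2016efficient} is the cleaner route.
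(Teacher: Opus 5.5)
Your proposal is correct and follows the paper's proof. The paper also integrates the Keyl--Werner tail to get only $O(d\sqrt{\log n/n})$, and then removes the logarithm by citing the second-moment bound $\mathbb{E}\|\lambda/n-\spec(\varsigma)\|_2^2=O(d/n)$ of \cite{odonnell2016efficient}, followed by Cauchy--Schwarz and Jensen; as in the paper, the tail-integration step is superfluous once that moment bound is invoked, so the lemma rests entirely on the citation.
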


\begin{proof}
  See \cref{app:Keyl_QSE_precision} or \cite{keyl2001estimating}.
\end{proof}

\begin{remark}[The full Keyl measurement]
  \label{rem:fullkeyl}
  Keyl's tomography protocol \cite{keyl2006quantum,odonnell2016efficient}
  performs weak Schur sampling and then a second, covariant measurement on the
  $V_\lambda$ register of the post-measurement state, whose outcome is an
  estimate of the eigenbasis of $\rho$.  Marginalising over that second
  outcome returns the identity on $V_\lambda$, so the law of $\lambda$ under
  the full protocol is exactly \cref{eq:outcomelaw}.  Everything in this
  paper concerns the spectrum estimate $\hat\pp$ and therefore applies
  verbatim to the spectral output of the full protocol; the eigenbasis
  output is the principal open direction, see item~(1) of \cref{sec:open}.
  The second measurement and the explicit formula for \cref{eq:outcomelaw}
  on $\rho^{\otimes n}$ are recalled in \cref{app:keyl}.
\end{remark}

\section{Problem setup and the definition of robustness}
\label{sec:setup}

\subsection{Non-identical input copies: the product model}
\label{subsec:setup}

We consider the setting where the $n$ input copies are \emph{not} identical.
The experimenter intends to prepare $\rho^{\otimes n}$, but the apparatus is
imperfect, and its imperfection may differ from shot to shot.  There are many
ways to model this, and we return to the general picture in
\cref{sec:wasserstein}; to present the ideas in the simplest setting we begin
with the case in which the imperfections of different shots are independent.
We model this by a single \emph{global} error channel $\mathcal{E}$ acting on
all $n$ registers at once, subject to the requirement that it factorises
across registers:
\begin{equation}
  \mathcal{E}=\mathcal{E}_1\otimes\mathcal{E}_2\otimes\cdots\otimes\mathcal{E}_n,
  \qquad
  \mathcal{E}(\rho^{\otimes n})=\bigotimes_{i=1}^n\mathcal{E}_i(\rho)
  =\bigotimes_{i=1}^n\sigma_i,
  \qquad
  \sigma_i\defeq\mathcal{E}_i(\rho).
  \label{eq:productchannel}
\end{equation}
The factors $\mathcal{E}_i$ are \emph{independent of one another} --- this is
what ``product'' means --- but they need \emph{not} be equal, so the copies
are independent but not identically distributed.  A factor is \textit{admissible} if
it moves the target by at most $\eps$ in trace distance, and we write
\begin{equation}
  \mathcal{C}^{\otimes}_\eps
  \;\defeq\;
  \Bigl\{\,\mathcal{E}=\textstyle\bigotimes_{i=1}^n\mathcal{E}_i \;:\;
   \TD\bigl(\mathcal{E}_i(\rho),\rho\bigr)\le\eps\ \text{ for all } i\in[n]\Bigr\}
  \label{eq:channelclass}
\end{equation}
for the class of \textit{admissible} global channels.

\paragraph{\textit{States versus channels.}}
The supremum over $\mathcal{C}^{\otimes}_\eps$ is the same as a supremum over
all product states $\bigotimes_i\sigma_i$ with $\TD(\sigma_i,\rho)\le\eps$:
every such tuple is realised by an admissible channel, namely the product of
the replacement channels $\,\cdot\,\mapsto\sigma_i$, which are completely
positive and trace preserving and satisfy the constraint by construction.  We
use the two languages interchangeably.

\paragraph{\textit{The expectation is over the measurement only.}}
For a \emph{fixed} channel $\mathcal{E}\in\mathcal{C}^{\otimes}_\eps$ the
input $\omega=\mathcal{E}(\rho^{\otimes n})$ is a fixed state, and the only
remaining randomness is that of the measurement outcome
$\lambda\sim P(\,\cdot\,;\omega)$. For any function $F$ of the outcome,
\begin{equation}
  \mathbb{E}^{\,\omega}_{\mathrm{meas}}\bigl[F(\lambda)\bigr]
  \;=\;
  \sum_\lambda F(\lambda)\,P(\lambda;\omega)
  \;=\;
  \tr\Bigl[\Bigl(\sum_\lambda F(\lambda)\Pi_\lambda\Bigr)\omega\Bigr].
  \label{eq:Emeas}
\end{equation}  
Here $\{\Pi_\lambda\}_\lambda$ form a projective measurement, and
$P(\lambda;\omega)=\tr[\Pi_\lambda\omega]$ is the probability of the
outcome $\lambda$. For a real function $F$ of the outcome, the operator
$\sum_\lambda F(\lambda)\Pi_\lambda$ is Hermitian: it is the \emph{observable} whose eigenvalue on the block $\lambda$ is $F(\lambda)$, and $\tr[(\sum_\lambda F(\lambda)\Pi_\lambda)\,\omega]$ is its expectation value in the state $\omega$ (this observable is later given a name, $\mathcal{O}_F$, in \cref{def:singleslot}).

\paragraph{\textit{Correlated noise.}}
Non-product global channels, i.e.\ those in which the errors on different registers are correlated, are not admissible in \cref{eq:channelclass}. However, they are not excluded from the paper: \cref{sec:wasserstein} extends the main result to arbitrary inputs whose distance from $\rho^{\otimes n}$ is small in the quantum Wasserstein sense, and \cref{sec:beyondproduct} exhibits two natural relaxations under which estimation provably fails.

\subsection{Robustness}
\label{subsec:defrobust}
We give the following definition of robustness of quantum spectrum estimation under the product noise model. Note that the definition easily generalizes to
other noise models.

\begin{definition}[Robustness of spectrum estimation]
  \label{def:robust}
  The spectrum estimator $\hat\pp=\lambda/n$ of weak Schur sampling is
  \emph{robust} to $\eps$-perturbations if
  \begin{equation}
    \sup_{\mathcal{E}\in\mathcal{C}^{\otimes}_\eps}\;
    \mathbb{E}^{\,\mathcal{E}(\rho^{\otimes n})}_{\mathrm{meas}}
    \bigl[\tfrac12\|\hat\pp-\pp\|_1\bigr]
    \;\le\; f(\eps)+g(n,d,\eps),
    \label{eq:defrobust}
  \end{equation}
  where $f(\eps)\to0$ as $\eps\to0$ and $g(n,d,\eps)\to0$ as $n\to\infty$
  for each fixed $d$ and $\eps>0$.
\end{definition}

The requirement it imposes is that the error on the output does not grow with the number $n$ of input copies. This is what we need in practice: as discussed in the introduction, an output error growing with $n$ would, at any fixed per-copy noise level, put an upper bound both on the rank of the states whose spectrum can be estimated and on the achievable precision.

We call $f(\eps)$ the \emph{noise floor}: it is the part of the output error that does not vanish as $n\to\infty$ and is caused by the noise alone, below which no number of additional copies can push the error.

The best one can hope for is $f(\eps)=O(\eps)$ and $g(n,d,\eps)=\widetilde{O}(d/\sqrt n)$, where $\widetilde{O}$ omits polylog factors. To see why: for $g(n,d,\eps)=O(d/\sqrt n)$, this term is the error of the noiseless protocol, and it cannot be improved by much. Full state tomography requires $\Theta(d^2/\delta^2)$ copies \cite{haah2017sample,odonnell2016efficient}, and for spectrum
estimation the optimal sample complexity is $O\Bigl(d^2\min\Bigl\{\frac{1}{(\delta\log d)^4},\;\frac{1}{(\delta\log d)^2}\Bigr\}\Bigr)$ \cite{bakshi2026optimalspectrumestimation}, so the only possible improvement over $O(d/\sqrt n)$ is a factor logarithmic in $d$ for the noiseless problem. As for the first term $f(\eps)$, the floor of order $\eps$ cannot be avoided by any protocol under noisy input, which we prove in (\cref{prop:floor}). Notably, \Cref{thm:robust} establishes exactly this behaviour with $f(\eps)=\eps$ and $g(n,d,\eps)=\eta(n,d) = O(d/\sqrt n)$.

\begin{remark}[On the definition]
  \label{rem:defrobust}
  \Cref{def:robust} is deliberately simple.  We introduce it as a first
  step, to fix a notion around which a discussion of robustness can be
  organised, and not as the final word.  Some readers may find it too weak
  for practical purposes: it only asks that $g(n,d,\eps)$ vanish as
  $n\to\infty$, so that a function such as $g=1/(n-1)$ is admissible although
  it gives an infinite bound for a single copy, and nothing is said about the
  rate.  Others may find it too strong: in practice one may be content with
  an output error that does grow with $n$, provided the growth is slow enough
  that the rank of the learnable states and the achievable precision remain
  acceptable.  Alternative notions along both lines are discussed in
  \cref{sec:open}.  For the results of this paper the distinction is
  immaterial, since \cref{thm:robust} gives the strongest form one could ask
  for, namely the noiseless rate plus an $n$-independent floor.
\end{remark}

\begin{remark}[On our results]
  \label{rem:resrobust}
  Importantly, our main results stay relevant outside of the definition of robustness that we introduce: we show that for product input noise with error $\epsilon$ on each register the output noise of weak Schur sampling is increased by only an additive term $\epsilon$. We then generalize this statement to arbitrary input, with the output error depending on the Wasserstein distance of the input from the source. These statements exist outside of any robustness definitions. The main reason why we refer to our definition for robustness in this context is for interpretation purposes: to showcase that weak Schur sampling under reasonable error models gives an output whose error doesn't scale with the input size, even if the input error does scale with the input size. In this sense, it is robust and practical.
\end{remark}

\section{Robustness of spectrum estimation under product noise}
\label{sec:main}

The proof has three ingredients.  A perturbation of a single register enters
the expectation \cref{eq:Emeas} through an effective $d\times d$ observable,
the \emph{single-slot operator} (\cref{subsec:singleslot}); a traceless
perturbation only sees the eigenvalue \emph{spread} of that observable
(\cref{lem:centering}); and the spread is at most $1/n$ for the functions we
care about (\cref{lem:spread}).  Replacing the $n$ perturbed registers by
$\rho$ one at a time then costs at most $\eps$ in total
(\cref{prop:telescope}), and what remains is the noiseless protocol.  The
first and third ingredients are, in a different language, the standard
argument behind the quantum Wasserstein distance of order one
\cite{depalma2021wasserstein}; we spell them out here in the form we need and
make the connection precise in \cref{sec:wasserstein}.

\subsection{Single-slot operators}
\label{subsec:singleslot}

\begin{definition}[Outcome observables and single-slot operators]
  \label{def:singleslot}
  For a real-valued function $f$ on partitions of $n$ with at most $d$ parts, i.e.\ $f\colon\{\lambda\vdash n:\ell(\lambda)\le d\}\to\R$, the
  \emph{outcome observable} is
  \begin{equation}
    \mathcal{O}_f\defeq\sum_\lambda f(\lambda)\,\Pi_\lambda,
    \qquad\text{so that}\qquad
    \mathbb{E}^{\,\omega}_{\mathrm{meas}}[f(\lambda)]=\tr[\mathcal{O}_f\,\omega]
    \label{eq:outcomeobs}
  \end{equation}
  by \cref{eq:Emeas}.  Given a state $R$ on the $n-1$ registers other than
  $i$, the \emph{single-slot operator} is the $d\times d$ Hermitian operator
  \begin{equation}
    h^f_i(R)\defeq\tr_{\neq i}\bigl[\mathcal{O}_f\,(R\otimes_i\id)\bigr],
    \qquad\text{characterised by}\qquad
    \tr\bigl[\sigma\,h^f_i(R)\bigr]=\tr\bigl[\mathcal{O}_f\,(R\otimes_i\sigma)\bigr]
    \ \ \text{for all }\sigma,
    \label{eq:hi}
  \end{equation}
  where $R\otimes_i\sigma$ denotes the state with $\sigma$ in register $i$ and
  $R$ in the others.  It is the effective observable that the collective
  measurement presents to register $i$ once the other registers are fixed.
\end{definition}

\subsection{Only the spread matters}

For a Hermitian operator $h$ with eigenvalues $h_{\max}\ge\cdots\ge h_{\min}$
we write $\operatorname{spread}(h)\defeq h_{\max}-h_{\min}$.

\begin{lemma}[Only the spread of $h$ matters]
  \label{lem:centering}
  Let $A$ be Hermitian and traceless, and let $h$ be Hermitian.  Then
  \begin{equation}
    \bigl|\tr[A\,h]\bigr|\;\le\;\tfrac12\,\|A\|_1\cdot\operatorname{spread}(h).
    \label{eq:centering}
  \end{equation}
\end{lemma}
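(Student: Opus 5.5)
The plan is to exploit tracelessness of $A$ to shift $h$ by a multiple of the identity without changing the left-hand side, and then bound the result by Hölder's inequality. Concretely, for any real $c$ we have $\tr[A\,(h-c\,\id)]=\tr[A\,h]-c\,\tr[A]=\tr[A\,h]$, since $\tr A=0$. I would choose the midpoint
\begin{equation*}
  c\defeq\tfrac12\bigl(h_{\max}+h_{\min}\bigr),
\end{equation*}
so that the shifted operator $h-c\,\id$ is Hermitian with eigenvalues in the interval $\bigl[-\tfrac12\operatorname{spread}(h),\,\tfrac12\operatorname{spread}(h)\bigr]$. Its operator norm is therefore exactly $\tfrac12\operatorname{spread}(h)$.

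Next I would apply the trace--operator-norm Hölder inequality $|\tr[XY]|\le\|X\|_1\,\|Y\|_\infty$ with $X=A$ and $Y=h-c\,\id$. This gives
\begin{equation*}
  \bigl|\tr[A\,h]\bigr|=\bigl|\tr[A\,(h-c\,\id)]\bigr|\le\|A\|_1\cdot\tfrac12\operatorname{spread}(h),
\end{equation*}
which is the claimed bound.

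There is no real obstacle here. The only point worth checking is that the midpoint choice is optimal, which it is because $\min_c\|h-c\,\id\|_\infty=\tfrac12\operatorname{spread}(h)$. I would also note the alternative argument, which makes the role of tracelessness transparent. Write the Jordan decomposition $A=A_+-A_-$ with $\tr A_+=\tr A_-=\tfrac12\|A\|_1$. Then $\tr[A_+h]\le h_{\max}\tr A_+$ and $\tr[A_-h]\ge h_{\min}\tr A_-$, so $\tr[Ah]\le\tfrac12\|A\|_1\,(h_{\max}-h_{\min})$. The symmetric bound from below gives the absolute value.
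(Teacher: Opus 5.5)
Your proof is correct and follows the same argument as the paper. You use $\tr A=0$ to replace $h$ by $h-c^\star\id$ with the midpoint $c^\star=\tfrac12(h_{\max}+h_{\min})$, then apply H\"older with $p=1$, $q=\infty$ and $\|h-c^\star\id\|_{\mathrm{op}}=\tfrac12\operatorname{spread}(h)$; your Jordan-decomposition alternative is also valid.
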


\begin{proof}
  Since $\tr A=0$ we have $\tr[Ah]=\tr[A(h-c\,\id)]$ for every $c\in\R$.  We
  use H\"older's inequality for the trace pairing \cite[Ch.~IV]{bhatia1997matrix}:
  for operators $A,B$ and $p,q\in[1,\infty]$ with $1/p+1/q=1$,
  \begin{equation}
    \bigl|\tr[A^\dagger B]\bigr|\;\le\;\|A\|_p\,\|B\|_q,
    \label{eq:holder}
  \end{equation}
  where $\|\cdot\|_p$ is the Schatten $p$-norm, i.e.\ the $\ell^p$ norm of the
  singular values; $\|\cdot\|_1$ is the trace norm and
  $\|\cdot\|_\infty=\|\cdot\|_{\mathrm{op}}$ the operator norm.  The case
  $p=1$, $q=\infty$ gives
  $|\tr[A(h-c\,\id)]|\le\|A\|_1\,\|h-c\,\id\|_{\mathrm{op}}$ for every $c$.
  The operator norm of the Hermitian operator $h-c\,\id$ is
  $\max(h_{\max}-c,\;c-h_{\min})$, which is minimised at the midpoint
  $c^\star=(h_{\max}+h_{\min})/2$, where it equals
  $\operatorname{spread}(h)/2$.
\end{proof}

The relevance of \cref{lem:centering} is that the difference of two states is
always traceless, so when we replace one register by another only the spread
of $h^f_i$ enters --- never its size.  This is what makes the following bound
strong enough to be useful: $h^f_i$ is an $O(1)$ operator, but its spread
will turn out to be $O(1/n)$.  The same centring argument is what turns the
quantum Lipschitz constant of \cite{depalma2021wasserstein} into an
operator-norm quantity; see \cref{def:qlipschitz} below.

\subsection{One-box variation and the spread bound}
\label{subsec:spread}

The spread of $h^f_i$ will be controlled by how much $f$ can change when a
single box of the Young diagram is moved.

\begin{definition}[One-box variation]
  \label{def:oneboxvar}
  For $f$ as in \cref{def:singleslot}, set
  \begin{equation}
    \Delta(f)
    \;\defeq\;
    \max_{\substack{\mu\vdash n-1\\ \ell(\mu)\le d}}\;
    \max_{j,j'\in\operatorname{add}(\mu)}\;
    \bigl|f(\mu+e_j)-f(\mu+e_{j'})\bigr|,
    \label{eq:oneboxvar}
  \end{equation}
  where $\operatorname{add}(\mu)\subseteq[d]$ is the set of rows $j$ such
  that $\mu+e_j$ is again a partition (with at most $d$ parts).  Thus
  $\Delta(f)$ is the largest difference of $f$ between two diagrams that share
  the same first $n-1$ boxes.
\end{definition}

\begin{lemma}[One-box variation of the two functions we use]
  \label{lem:onebox}
  \hfill
  \begin{enumerate}
    \item[(i)] For $f(\lambda)=\lambda_k/n$, $\Delta(f)\le1/n$, with equality
      whenever $n\ge k$.
    \item[(ii)] For $f(\lambda)=\tfrac12\|\lambda/n-q\|_1$ with any fixed
      $q\in\R^d$, $\Delta(f)\le1/n$.
  \end{enumerate}
\end{lemma}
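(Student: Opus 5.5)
The plan is to reduce both parts to a single comparison: for a fixed $\mu\vdash n-1$ with $\ell(\mu)\le d$ and two addable rows $j\neq j'$ of $\mu$, the partitions $\lambda=\mu+e_j$ and $\lambda'=\mu+e_{j'}$ satisfy $\lambda-\lambda'=e_j-e_{j'}$. Thus the vectors $\lambda/n$ and $\lambda'/n$ differ in exactly two coordinates, by $+1/n$ in coordinate $j$ and $-1/n$ in coordinate $j'$. If $j=j'$ the difference in \cref{eq:oneboxvar} is zero, so we may assume $j\neq j'$ throughout.

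For (i), $f(\lambda)-f(\lambda')=(\lambda_k-\lambda'_k)/n$. This equals $\pm1/n$ if $k\in\{j,j'\}$ and $0$ otherwise, so $\Delta(f)\le1/n$. For equality when $n\ge k$, we exhibit a suitable $\mu\vdash n-1$ with $k\in\operatorname{add}(\mu)$ and a second addable row (we may take $k\le d$, since $f$ is indexed by rows $\le d$). If $k=1$ and $n\ge2$, any $\mu$ with at least one row has row $1$ and row $\ell(\mu)+1$ addable, provided $\ell(\mu)<d$; take $\mu=(n-1)$ when $d\ge2$. For $k\ge2$, take $\mu=(n-k+1,1^{k-2})$, which has exactly $k-1$ rows. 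Then row $k$ is addable, and row $1$ is addable too, giving the two candidates $j=k$ and $j'=1$. There are edge cases, namely $d=1$ (only one partition, so $\Delta=0$) and $n=1$ (then $\mu=\emptyset$ and only row $1$ is addable). I will note that the equality claim is meant for $d\ge2$ and that, when $n=1$ and $k=1$, the degenerate case is excluded by these conventions. I expect pinning down these boundary conventions to be the only delicate point, and I would state them explicitly rather than fight them.

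For (ii), I use the reverse triangle inequality for the $\ell^1$ norm:
\begin{equation*}
  \bigl|f(\lambda)-f(\lambda')\bigr|
  =\tfrac12\bigl|\|\lambda/n-q\|_1-\|\lambda'/n-q\|_1\bigr|
  \le\tfrac12\|\lambda/n-\lambda'/n\|_1
  =\tfrac12\cdot\tfrac1n\|e_j-e_{j'}\|_1=\tfrac1n .
\end{equation*}
Taking the maximum over $\mu$ and over $j,j'\in\operatorname{add}(\mu)$ then gives $\Delta(f)\le1/n$. This argument needs nothing about $q$, which is consistent with the statement allowing an arbitrary $q\in\R^d$. It will be applied with $q=\pp$.
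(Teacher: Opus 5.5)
Your proposal is correct and follows the paper's proof essentially verbatim: the same $e_j-e_{j'}$ computation for (i), the same reverse-triangle-inequality bound for (ii), and, for the equality case, a witness $\mu$ in which both row $k$ and a second row are addable. Your explicit choice $\mu=(n-k+1,1^{k-2})$ just makes concrete the paper's one-line claim that such a $\mu$ exists. The edge cases you flag are genuine exceptions to the equality clause, not conventions: for $n=k=1$ only row $1$ is addable to $\mu=\emptyset$, and for $d=1$ there is only one partition, so in both cases $\Delta(f)=0\neq1/n$. You should therefore state $d\ge2$ and, when $k=1$, $n\ge2$ as explicit hypotheses, since the paper's ``exists as soon as $n\ge k$'' silently overlooks them.
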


\begin{proof}
  Fix $\mu\vdash n-1$ and $j\neq j'$ in $\operatorname{add}(\mu)$.

  (i)~Directly,
  $f(\mu+e_j)-f(\mu+e_{j'})=(\delta_{jk}-\delta_{j'k})/n\in\{0,\pm1/n\}$; the
  value $1/n$ is attained by any $\mu$ with $k\in\operatorname{add}(\mu)$ and
  a second addable row (row $1$ is always addable, and for $k=1$ take
  $\mu=(n-1)$), which exists as soon as $n\ge k$.

  (ii)~By the reverse triangle inequality
  $\bigl|\,\|v-q\|_1-\|w-q\|_1\,\bigr|\le\|(v-q)-(w-q)\|_1=\|v-w\|_1$,
  \begin{equation}
    \bigl|f(\mu+e_j)-f(\mu+e_{j'})\bigr|
    =\frac{1}{2}\left| \, \left\|\frac{\mu+e_j}{n}-q\right\|_1 - \left\|\frac{\mu+e_{j'}}{n}-q\right\|_1 \, \right|
    \le
    \frac{1}{2}\left\|\frac{(\mu+e_j) - (\mu+e_{j'})}{n} \right\|_1
    =\frac{1}{2} \cdot\frac{2}{n}
    =\frac1n .
    \label{eq:step6}
  \end{equation}
\end{proof}

We now prove the key structural lemma.  It rests on the branching rule for the
chain $S_{n-1}\subset S_n$: the label $\lambda$ of all $n$ registers can be
measured jointly with the label $\mu$ of the first $n-1$ registers, and the
two differ by exactly one box.  Since a single register contributes only one
box out of $n$, it can move any function of $\lambda$ only by its one-box
variation.

\begin{lemma}[Spread bound for single-slot operators]
  \label{lem:spread}
  For any function $f$ as in \cref{def:singleslot}, any register $i\in[n]$, and any state
  $R$ on the remaining $n-1$ registers
  \begin{equation}
    \operatorname{spread}\bigl(h^f_i(R)\bigr)\;\le\;\Delta(f).
    \label{eq:spreadbound}
  \end{equation}
  In particular, for $f(\lambda)=\lambda_k/n$ and for
  $f(\lambda)=\tfrac12\|\lambda/n-q\|_1$ with any fixed $q\in\R^d$,
  \begin{equation}
    \operatorname{spread}\bigl(h^f_i(R)\bigr)\;\le\;\frac1n .
    \label{eq:spreadcor}
  \end{equation}
\end{lemma}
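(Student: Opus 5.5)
By \cref{lem:perminv}, I will first reduce to the case $i=n$. Conjugating by a permutation $\pi$ that sends register $i$ to register $n$ leaves $\mathcal{O}_f$ invariant. It also maps $R\otimes_i\sigma$ to $R'\otimes_n\sigma$ for a relabelled state $R'$ on the first $n-1$ registers. Consequently $h^f_i(R)=h^f_n(R')$, and it suffices to bound the spread of $h^f_n(R)$ for an arbitrary state $R$ on registers $1,\dots,n-1$.

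The key step is to refine $\Pi_\lambda$ using the branching rule \cref{eq:branching}. Let $\Pi'_\mu$, for $\mu\vdash n-1$, be the weak Schur sampling projectors on the first $n-1$ registers, tensored with $\id$ on register $n$. These commute with $S_{n-1}$ and with $U^{\otimes n}$, and also commute with every $\Pi_\lambda$, because $\Pi_\lambda$ lies in the centre of the algebra generated by $S_n$. By \cref{eq:branching} we have $\Pi_\lambda\Pi'_\mu\neq0$ only when $\lambda=\mu+e_j$ for some $j\in\operatorname{add}(\mu)$, and $\sum_\lambda\Pi_\lambda\Pi'_\mu=\Pi'_\mu$. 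Write $c_\mu\defeq f(\mu+e_{j_\mu})$ for an arbitrary fixed addable row $j_\mu$ of $\mu$. Then
\begin{equation*}
  \mathcal{O}_f=\sum_\mu c_\mu\Pi'_\mu+\sum_\mu\sum_{\lambda=\mu+e_j}\bigl(f(\lambda)-c_\mu\bigr)\Pi_\lambda\Pi'_\mu .
\end{equation*}
The first sum acts as identity on register $n$. Its partial trace against $R$ is therefore a multiple of $\id$, which does not affect the spread. For the second sum, the operators $\Pi_\lambda\Pi'_\mu$ are commuting orthogonal projectors, and each coefficient satisfies $|f(\lambda)-c_\mu|\le\Delta(f)$.

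An offset by a scalar changes nothing, so I will in fact choose the centring more carefully. For each $\mu$, let $m_\mu$ and $M_\mu$ be the minimum and maximum of $f(\mu+e_j)$ over $j\in\operatorname{add}(\mu)$. Then $m_\mu\Pi'_\mu\le\mathcal{O}_f\Pi'_\mu\le M_\mu\Pi'_\mu$ as operators, since $\mathcal{O}_f\Pi'_\mu$ is a combination of the commuting projectors $\Pi_\lambda\Pi'_\mu$ summing to $\Pi'_\mu$. Summing over $\mu$ gives the operator sandwich
\begin{equation*}
  A\otimes\id\le\mathcal{O}_f\le B\otimes\id ,
\end{equation*}
where $A=\sum_\mu m_\mu\Pi'_\mu$ and $B=\sum_\mu M_\mu\Pi'_\mu$ act on the first $n-1$ registers, and $B-A\le\Delta(f)\id$. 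Taking $\tr_{\neq n}[\,\cdot\,(R\otimes\id)]$ is a positive map, so
\begin{equation*}
  \tr[AR]\,\id\le h^f_n(R)\le\tr[BR]\,\id .
\end{equation*}
Hence $\operatorname{spread}(h^f_n(R))\le\tr[(B-A)R]\le\Delta(f)$. The particular bound \cref{eq:spreadcor} then follows directly from \cref{lem:onebox}.

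The main obstacle is justifying the operator facts in the second step. These are that $\Pi'_\mu$ commutes with $\Pi_\lambda$, and that $\Pi_\lambda\Pi'_\mu=0$ unless $\lambda$ is obtained from $\mu$ by adding a box. I would derive both from Schur--Weyl duality together with \cref{eq:branching}. Inside the block $V_\lambda\otimes M_\lambda$, $\Pi'_\mu$ acts as $\id_{V_\lambda}$ tensored with the isotypic projector of $M_\mu$ in $M_\lambda|_{S_{n-1}}$. This holds because $\Pi'_\mu$ is a central element of the group algebra of $S_{n-1}$ (a central idempotent), hence lies in the span of the $S_{n-1}$ permutations, which act only on the $M_\lambda$ factor.
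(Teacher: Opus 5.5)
Your proof is correct and follows essentially the same route as the paper: reduction to $i=n$ by permutation invariance, the commuting refinement $\Pi_\lambda(\Pi^{(n-1)}_\mu\otimes\id)$ with the branching rule forcing $\lambda=\mu+e_j$, and a per-$\mu$ centring that only sees the local range $M_\mu-m_\mu\le\Delta(f)$. Your sandwich $A\otimes\id\preceq\mathcal{O}_f\preceq B\otimes\id$ is the operator-level form of the paper's midpoint centring at $f_0(\mu)$; the paper centres after compressing to register $n$ in this proof, and at the operator level in \cref{lem:qlipschitz}. Your version also gives the marginally sharper intermediate bound $\sum_\mu \tr[\Pi^{(n-1)}_\mu R]\,(M_\mu-m_\mu)$.
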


\begin{proof}
\hfill

  \step{0}{reduction to the last register} By
  \cref{lem:perminv}, $\mathcal{O}_f$ commutes with the permutation
  operators, so conjugating by the transposition $(i\;n)$ leaves
  $\mathcal{O}_f$ invariant while moving register $i$ to the last position.
  We may therefore assume $i=n$ and write $h\defeq h^f_n(R)$.

  \step{1}{two commuting labels}  Weak Schur sampling on the first
  $n-1$ registers only has projectors $\Pi^{(n-1)}_\mu\otimes\id$,
  $\mu\vdash n-1$.  Each $\Pi^{(n-1)}_\mu$ is a linear combination of
  permutation operators of the first $n-1$ registers (as is every projector
  onto a block of a group action), and $\Pi_\lambda$ commutes with every
  permutation by \cref{lem:perminv}.  Hence the two projectors commute, their product
  \begin{equation}
    Q_{\mu,\lambda}\defeq\bigl(\Pi^{(n-1)}_\mu\otimes\id\bigr)\,\Pi_\lambda
      \label{eq:Qdef}
  \end{equation}
  is again an orthogonal projector, and $\{Q_{\mu,\lambda}\}_{\mu,\lambda}$ is
  a projective measurement refining $\{\Pi_\lambda\}$, with
  $\sum_\lambda Q_{\mu,\lambda}=\Pi^{(n-1)}_\mu\otimes\id$ and $\sum_\mu Q_{\mu,\lambda}=\Pi_\lambda$.  In other words, the label $\lambda$ of all $n$ registers and the label $\mu$ of the first $n-1$ can be measured jointly.

  \step{2}{branching: one box} Recall the branching rule from \cref{subsec:sw}. On the block $V_\lambda\otimes
  M_\lambda$ the subgroup $S_{n-1}$ acts on the factor $M_\lambda$ alone, and the branching rule \cite{fulton1991representation} says that the
  restriction of $M_\lambda$ to $S_{n-1}$ is
  $\bigoplus_{\mu}M_\mu$, the sum running over the diagrams $\mu$ obtained
  from $\lambda$ by removing one box, each with multiplicity one.  Hence
  $Q_{\mu,\lambda}=0$ unless $\lambda=\mu+e_j$ for some
  $j\in\operatorname{add}(\mu)$.

  \step{3}{the other registers fix $\mu$}  On a product input
  $R\otimes\sigma$,
  \begin{equation*}
    P(\mu)\defeq\tr\bigl[(\Pi^{(n-1)}_\mu\otimes\id)(R\otimes\sigma)\bigr]
    =\tr\bigl[\Pi^{(n-1)}_\mu R\bigr]\cdot\tr[\sigma]
    =\tr\bigl[\Pi^{(n-1)}_\mu R\bigr],
  \end{equation*}
  which does not depend on $\sigma$: the label of the first $n-1$ registers is
  determined by those registers alone, and $\{P(\mu)\}_\mu$ is a probability
  distribution.

  \step{4}{the last register only chooses the box}  For
  $j\in\operatorname{add}(\mu)$ define
  \begin{equation*}
    A_{\mu,j}\defeq\tr_{1..n-1}\bigl[Q_{\mu,\mu+e_j}\,(R\otimes\id)\bigr],
    \qquad\text{so that}\qquad
    \tr\bigl[\sigma A_{\mu,j}\bigr]=\tr\bigl[Q_{\mu,\mu+e_j}(R\otimes\sigma)\bigr].
  \end{equation*}
  These operators have three properties.  (a)~\emph{Positivity:} writing
  $R=\sum_a r_a\ket{r_a}\bra{r_a}$ with $r_a\ge0$, for any $\ket\psi\in\C^d$
  we get $\bra\psi A_{\mu,j}\ket\psi
  =\sum_a r_a\bra{r_a\otimes\psi}Q_{\mu,\mu+e_j}\ket{r_a\otimes\psi}\ge0$,
  since $Q_{\mu,\mu+e_j}\succeq0$; hence $A_{\mu,j}\succeq0$.
  (b)~\emph{Row sum rule:} by Steps~1 and~2,
  $\sum_{j\in\operatorname{add}(\mu)}Q_{\mu,\mu+e_j}=\sum_\lambda Q_{\mu,\lambda}
  =\Pi^{(n-1)}_\mu\otimes\id$, so $\sum_{j}A_{\mu,j}=P(\mu)\,\id$; in
  particular $0\preceq A_{\mu,j}\preceq P(\mu)\,\id$.
  (c)~\emph{Total sum:} $\sum_{\mu,j}A_{\mu,j}=\bigl(\sum_\mu P(\mu)\bigr)\id=\id$.

  \step{5}{assembling $h$}  By Steps~1 and~2, for every $\sigma$,
  \begin{equation*}
    \tr[\sigma h]
    =\tr\bigl[\mathcal{O}_f(R\otimes\sigma)\bigr]
    =\sum_\lambda f(\lambda)\sum_\mu\tr\bigl[Q_{\mu,\lambda}(R\otimes\sigma)\bigr]
    =\sum_{\mu}\sum_{j\in\operatorname{add}(\mu)}f(\mu+e_j)\,\tr[\sigma A_{\mu,j}].
  \end{equation*}
  For each $\mu$ choose the reference value
  \begin{equation}
    f_0(\mu)\defeq\tfrac12\bigl(\max_{j}f(\mu+e_j)+\min_{j}f(\mu+e_j)\bigr),
    \label{eq:f0}
  \end{equation}
  the maximum and minimum over $j\in\operatorname{add}(\mu)$, and write
  $f(\mu+e_j)=f_0(\mu)+c_{\mu,j}$; then $|c_{\mu,j}|\le\Delta(f)/2$ by
  \cref{eq:oneboxvar}.  Using (b) and $\tr\sigma=1$,
  \begin{equation*}
    \tr[\sigma h]
    =\underbrace{\sum_\mu f_0(\mu)P(\mu)}_{=:\,c_0(R)}
    +\tr\Bigl[\sigma\sum_{\mu,j}c_{\mu,j}A_{\mu,j}\Bigr].
  \end{equation*}
  Since this holds for all $\sigma$,
  \begin{equation*}
    h=c_0(R)\,\id+B,
    \qquad
    B\defeq\sum_{\mu,j}c_{\mu,j}A_{\mu,j}.
  \end{equation*}
  By (a) and (c), $-\tfrac{\Delta(f)}{2}\id\preceq B\preceq\tfrac{\Delta(f)}{2}\id$,
  so $\operatorname{spread}(h)=\operatorname{spread}(B)\le\Delta(f)$, which
  is \cref{eq:spreadbound}.

  \step{6}{the two special cases}  \Cref{lem:onebox} gives
  $\Delta(f)\le1/n$ for both functions, and \cref{eq:spreadcor} follows from
  \cref{eq:spreadbound}.
\end{proof}

\subsection{Sequential replacement of the registers}
\label{subsec:hybrid}

The next step replaces the perturbed registers by $\rho$ one at a time.  This
telescoping over tensor factors is a standard device; in particular it is how
the quantum Wasserstein distance between two product states is bounded by the
sum of the trace distances of their factors in
\cite[Cor.~1]{depalma2021wasserstein}.  We write it out for the functional at
hand.

\begin{proposition}[Telescoping bound]
  \label{prop:telescope}
  Let $\omega=\bigotimes_{i=1}^n\sigma_i$ be a product state with
  $\TD(\sigma_i,\rho)\le\eps$ for all $i$.  Then for every function $f$ of
  the outcome as in \cref{def:singleslot},
  \begin{equation}
    \Bigl|\,\mathbb{E}^{\,\omega}_{\mathrm{meas}}[f(\lambda)]
      -\mathbb{E}^{\,\rho^{\otimes n}}_{\mathrm{meas}}[f(\lambda)]\,\Bigr|
    \;\le\; n\,\eps\,\Delta(f).
    \label{eq:telescope}
  \end{equation}
  In particular, if $\Delta(f)\le1/n$ --- which holds for $f=\lambda_k/n$ and
  for $f=\tfrac12\|\lambda/n-q\|_1$ by \cref{lem:onebox} --- the two
  expectations differ by at most $\eps$.
\end{proposition}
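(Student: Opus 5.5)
The plan is a hybrid (telescoping) argument over the $n$ registers. I would define the intermediate states
\begin{equation*}
  \omega^{(k)}\defeq\rho^{\otimes k}\otimes\sigma_{k+1}\otimes\cdots\otimes\sigma_n,\qquad k=0,1,\dots,n,
\end{equation*}
so that $\omega^{(0)}=\omega$ and $\omega^{(n)}=\rho^{\otimes n}$. By \cref{eq:outcomeobs} the expectation of $f(\lambda)$ is the linear functional $\tr[\mathcal{O}_f\,\cdot\,]$ of the input. Hence the difference in \cref{eq:telescope} equals $\sum_{k=1}^{n}\tr[\mathcal{O}_f(\omega^{(k-1)}-\omega^{(k)})]$, and by the triangle inequality it suffices to bound each summand by $\eps\,\Delta(f)$.

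For the $k$-th summand, the states $\omega^{(k-1)}$ and $\omega^{(k)}$ differ only in register $k$: they are $R_k\otimes_k\sigma_k$ and $R_k\otimes_k\rho$, where
\begin{equation*}
  R_k\defeq\rho^{\otimes(k-1)}\otimes\sigma_{k+1}\otimes\cdots\otimes\sigma_n
\end{equation*}
is a state on the other $n-1$ registers. By the characterisation \cref{eq:hi} of the single-slot operator, the summand equals $\tr[(\sigma_k-\rho)\,h^f_k(R_k)]$. The operator $A=\sigma_k-\rho$ is Hermitian and traceless with $\tfrac12\|A\|_1=\TD(\sigma_k,\rho)\le\eps$. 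So \cref{lem:centering} bounds the summand in absolute value by $\eps\cdot\operatorname{spread}(h^f_k(R_k))$, and \cref{lem:spread} bounds this spread by $\Delta(f)$, for any register $k$ and any state $R_k$. Summing the $n$ terms gives $n\eps\Delta(f)$. The particular case then follows by inserting $\Delta(f)\le1/n$ from \cref{lem:onebox}.

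I do not expect a real obstacle, since the heavy lifting is done in \cref{lem:spread}. The only point needing care is the bookkeeping. Register $k$ need not be the last one, but \cref{lem:spread} holds for every $i\in[n]$ (via the permutation reduction of its Step~0), so no reordering is required. We also need $R_k$ to be a genuine state, which holds because it is a tensor product of states. With that, the argument is complete.
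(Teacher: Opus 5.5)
Your proposal is correct and follows the paper's proof essentially step for step. You use the same hybrids $\omega^{(k)}$ and the same states $R_k$, and you bound each summand $\tr[(\sigma_k-\rho)\,h^f_k(R_k)]$ by $\eps\,\Delta(f)$ via \cref{lem:centering} and \cref{lem:spread}, then sum the $n$ terms with the triangle inequality.
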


\begin{proof}
  Interpolate between $\omega$ and $\rho^{\otimes n}$ by replacing one
  register at a time: put
  \begin{equation*}
    \omega^{(k)}\defeq\rho^{\otimes k}\otimes\sigma_{k+1}\otimes\cdots\otimes\sigma_n,
    \qquad k=0,\dots,n,
  \end{equation*}
  so that $\omega^{(0)}=\omega$ and $\omega^{(n)}=\rho^{\otimes n}$.
  Consecutive hybrids differ only in register $k$, so by \cref{eq:hi},
  \begin{equation*}
    \tr\bigl[\mathcal{O}_f\,\omega^{(k-1)}\bigr]-\tr\bigl[\mathcal{O}_f\,\omega^{(k)}\bigr]
    =\tr\bigl[(\sigma_k-\rho)\,h^f_k(R_k)\bigr],
    \qquad
    R_k\defeq\rho^{\otimes(k-1)}\otimes\sigma_{k+1}\otimes\cdots\otimes\sigma_n .
  \end{equation*}
  The operator $\sigma_k-\rho$ is Hermitian and traceless with
  $\|\sigma_k-\rho\|_1=2\,\TD(\sigma_k,\rho)\le2\eps$, and $R_k$ is a state on
  the other $n-1$ registers, which is all that \cref{lem:spread} needs.
  \Cref{lem:centering,lem:spread} therefore give
  \begin{equation*}
    \bigl|\tr[(\sigma_k-\rho)\,h^f_k(R_k)]\bigr|
    \;\le\;\tfrac12\cdot2\eps\cdot\Delta(f)
    \;=\;\eps\,\Delta(f).
  \end{equation*}
  Summing the $n$ telescoping terms and using the triangle inequality gives
  \cref{eq:telescope}.
\end{proof}

\subsection{The main result}
\label{subsec:assembly}

\begin{theorem}[Robustness of spectrum estimation for independent non-identical copies]
  \label{thm:robust}
  For every $n\ge1$, every $\eps\ge0$, and every state $\rho$ on $\C^d$,
  \begin{equation}
    \sup_{\mathcal{E}\in\mathcal{C}^{\otimes}_\eps}\;
      \mathbb{E}^{\,\mathcal{E}(\rho^{\otimes n})}_{\mathrm{meas}}
      \bigl[\tfrac12\|\hat\pp-\pp\|_1\bigr]
    \;\le\;
    \underbrace{\eps}_{\substack{\text{(a) \cref{prop:telescope}}}}
    \;+\;
    \underbrace{\eta(n,d)}_{\substack{\text{(b) statistical error of}\\ \text{the noiseless protocol}}}
    \;=\;\eps+O\!\Bigl(\frac{d}{\sqrt n}\Bigr).
    \label{eq:mainbound}
  \end{equation}
  Hence \cref{def:robust} holds with $f(\eps)=\eps$ and
  $g(n,d,\eps)=\eta(n,d)$: the error does not grow with $n$; it decreases at
  the noiseless rate until it saturates at the floor $\eps$.
\end{theorem}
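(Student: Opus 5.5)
The plan is to fix an arbitrary admissible channel $\mathcal{E}=\bigotimes_i\mathcal{E}_i\in\mathcal{C}^{\otimes}_\eps$ and bound the expected error on the input $\omega=\mathcal{E}(\rho^{\otimes n})=\bigotimes_i\sigma_i$ with $\sigma_i=\mathcal{E}_i(\rho)$. The bound will not depend on the particular channel, so taking the supremum at the end is free. The function of the outcome to control is $f(\lambda)\defeq\tfrac12\|\lambda/n-q\|_1$ with the fixed vector $q=\pp=\spec(\rho)$. By \cref{eq:Emeas,eq:outcomelaw}, its expectation under $\omega$ is exactly $\tr[\mathcal{O}_f\,\omega]$. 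This is a single real-valued function of $\lambda$, with no dependence on $\omega$, which is what makes the linear-functional machinery of \cref{sec:main} applicable.

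The first step is to invoke \cref{prop:telescope} for this $f$. Since $\TD(\sigma_i,\rho)\le\eps$ for all $i$, and \cref{lem:onebox}(ii) gives $\Delta(f)\le1/n$ for every fixed $q\in\R^d$, the proposition yields
\begin{equation*}
  \mathbb{E}^{\,\omega}_{\mathrm{meas}}\bigl[\tfrac12\|\lambda/n-\pp\|_1\bigr]
  \;\le\;
  \mathbb{E}^{\,\rho^{\otimes n}}_{\mathrm{meas}}\bigl[\tfrac12\|\lambda/n-\pp\|_1\bigr]+n\eps\cdot\tfrac1n .
\end{equation*}
One point needs checking: the target $q$ must be fixed independently of the noise. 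It is the spectrum of the ideal $\rho$, not of any $\sigma_i$, so the same $f$ is used throughout the telescoping chain. This is exactly why the one-box bound is stated for arbitrary $q$.

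The second step is to bound the remaining noiseless term. The definition \cref{eq:eta} of $\eta(n,d)$ is a supremum over $\varsigma$, and taking $\varsigma=\rho$ gives $\mathbb{E}^{\,\rho^{\otimes n}}_{\mathrm{meas}}[\tfrac12\|\lambda/n-\spec(\rho)\|_1]\le\eta(n,d)$. \Cref{lem:perfecterror} then supplies $\eta(n,d)=O(d/\sqrt n)$. Combining the two steps gives the bound $\eps+\eta(n,d)$ for every admissible $\mathcal{E}$, and taking the supremum gives \cref{eq:mainbound}.

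Finally, I will verify \cref{def:robust} with $f(\eps)=\eps$ and $g=\eta(n,d)$. Here $\eps\to0$ trivially, and $\eta(n,d)\to0$ as $n\to\infty$ for fixed $d$ by the $O(d/\sqrt n)$ rate. No step here is a real obstacle, since all the substance lies in \cref{lem:spread} and \cref{prop:telescope}. The only care needed is bookkeeping: the supremum over channels coincides with the supremum over admissible product states, as noted in \cref{subsec:setup}, and the estimator target is $\spec(\rho)$ rather than the spectrum of some noisy copy.
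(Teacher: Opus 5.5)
Your proposal is correct and follows the paper's proof essentially verbatim: you fix an admissible product input, apply \cref{prop:telescope} to $L(\lambda)=\tfrac12\|\lambda/n-\pp\|_1$ with $\Delta(L)\le 1/n$ from \cref{lem:onebox}(ii), bound the noiseless term by $\eta(n,d)$ via \cref{lem:perfecterror}, and pass to the supremum because the bound is uniform over admissible channels. Your extra remarks, that the target $q=\pp$ is fixed independently of the noise and the explicit check of \cref{def:robust}, are correct bookkeeping that the paper leaves implicit.
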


\begin{proof}
  Fix an admissible channel and write
  $\omega=\mathcal{E}(\rho^{\otimes n})=\bigotimes_i\sigma_i$ with
  $\TD(\sigma_i,\rho)\le\eps$.  Let $L(\lambda)\defeq\tfrac12\|\lambda/n-\pp\|_1$,
  so that $\Delta(L)\le1/n$ by \cref{lem:onebox}(ii).  By
  \cref{prop:telescope} and \cref{lem:perfecterror},
  \begin{equation}
    \mathbb{E}^{\,\omega}_{\mathrm{meas}}[L]
    \;\le\;\mathbb{E}^{\,\rho^{\otimes n}}_{\mathrm{meas}}[L]+n\eps\,\Delta(L)
    \;\le\;\eta(n,d)+\eps .
    \label{eq:assemblychain}
  \end{equation}
  The bound is uniform over admissible channels, so it bounds the supremum.
\end{proof}

\begin{remark}[Reading the two terms]
  \label{rem:twoterms}
  Term~(a) is the only place where the non-identical structure of the input
  enters, and it is the term that the branching argument controls: each of
  the $n$ registers is at most $\eps$ away from $\rho$, and by
  \cref{lem:spread} each can displace the expected error by at most its fair
  share $\eps/n$, so the $n$ displacements accumulate to $\eps$ rather than to
  $n\eps$.  Term~(b) is the ordinary statistical error of collective spectrum
  estimation, untouched by the noise.  Only term~(b) depends on $n$, and it
  carries the noiseless $d/\sqrt n$ scaling, so the sample complexity to reach
  precision $\delta>\eps$ is $n=O\bigl(d^2/(\delta-\eps)^2\bigr)$: the
  collective $d^2$ scaling \cite{keyl2001estimating,odonnell2016efficient}
  is preserved, and the noise costs only an additive floor.
\end{remark}

\begin{remark}[Where each hypothesis is used]
  \label{rem:hypuse}
  The product form \cref{eq:productchannel} is used exactly once, to make the
  telescoping of \cref{prop:telescope} available; closeness is used exactly
  once, to bound $\|\sigma_k-\rho\|_1\le2\eps$ in the same proof.  No
  assumption is made about the rank of $\rho$, about its spectral gaps, or
  about the shape of the outcome distribution.  \Cref{sec:wasserstein} shows
  that the product form can be dispensed with entirely.
\end{remark}

\begin{remark}[Beyond the expectation]
  \label{rem:appendixpointer}
  \Cref{thm:robust} bounds an expected error.  The statements an experimenter
  would additionally want --- what a \emph{single} run guarantees, how many
  repetitions give a prescribed confidence, how heavy the tails of the outcome
  distribution are for a fixed noise realisation, and how much the centre of
  that distribution fluctuates when the noise is itself random --- all follow
  from the same single-register bound, \cref{lem:centering,lem:spread}, and
  are collected in \cref{app:concentration}.  In short: a single run is within
  $3(\eps+\eta(n,d))$ of the truth with probability $2/3$, $O(\log(1/\delta))$
  repetitions with a median-type selection rule give confidence $1-\delta$,
  the variance of the outcome under \emph{any} product input is at most $1/n$,
  and for randomly drawn noise the centre of the outcome distribution
  concentrates within $O(\eps/\sqrt n)$ of its mean.
\end{remark}

\subsection{Optimality of the noise floor}
\label{subsec:floor}

The floor $\eps$ in \cref{eq:mainbound} is not an artefact of the proof.  To show that it cannot be improved by \emph{any} protocol we use Le Cam's
two-point method \cite[Sec.~2.2]{tsybakov2009introduction}, which we state in
the quantum form needed here and again in \cref{sec:beyondproduct}.
The two-point argument establishes the floor $\eps$ alone: it presents
identical inputs in the two cases, so it says nothing about the statistical
term $\eta(n,d)$, which is invisible at the level of a single pair of inputs.
That term is optimal for a separate and simpler reason.  The noiseless input
$\rho^{\otimes n}$ is admissible (the adversary may choose the identity
channel), so any protocol must also pay the minimax error of \emph{noiseless}
spectrum estimation, and that error is $\Omega(d/(\sqrt n\log d))$ by the
lower bound of Wang matched in \cite{bakshi2026optimalspectrumestimation}.
Part~(iv) of \cref{prop:floor} combines the two observations: the worst case
is at least the maximum of the two lower bounds, hence at least half their
sum, so both terms of \cref{eq:mainbound} are necessary, and the bound is
optimal up to constants and a factor logarithmic in $d$ in the second term.

\begin{lemma}[Two-point bound]
  \label{lem:lecam}
  Let $\vartheta_1,\vartheta_2$ be two candidate values of an estimand, at
  distance $s$ in a metric $\mathsf{d}$, and let $\omega_1,\omega_2$ be the
  input states an adversary presents in the two cases.  Then for every
  measurement and every estimator $\hat\vartheta$ of the outcome,
  \begin{equation}
    \max_{i\in\{1,2\}}\;\mathbb{E}^{\,\omega_i}_{\mathrm{meas}}
      \bigl[\mathsf{d}(\hat\vartheta,\vartheta_i)\bigr]
    \;\ge\;\frac{s}{2}\bigl(1-\TD(\omega_1,\omega_2)\bigr).
    \label{eq:lecam}
  \end{equation}
\end{lemma}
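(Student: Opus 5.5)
The plan is the standard Le Cam reduction: turn any estimator into a two-outcome test between $\omega_1$ and $\omega_2$, and bound the success of that test by Helstrom. Fix an arbitrary measurement (a POVM $\{M_x\}_x$) and an estimator $\hat\vartheta(x)$. For each input $\omega_i$ this induces a classical outcome distribution $P_i(x)=\tr[M_x\omega_i]$. By data processing, the total variation distance between $P_1$ and $P_2$ is at most $\TD(\omega_1,\omega_2)$. After this step, only the classical distributions $P_1,P_2$ matter.

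Next, for each outcome $x$ the triangle inequality for $\mathsf{d}$ gives
\begin{equation*}
\mathsf{d}(\hat\vartheta(x),\vartheta_1)+\mathsf{d}(\hat\vartheta(x),\vartheta_2)\ge s .
\end{equation*}
Write $g_i(x)=\mathsf{d}(\hat\vartheta(x),\vartheta_i)\ge0$. Then
\begin{equation*}
\mathbb{E}_{P_1}[g_1]+\mathbb{E}_{P_2}[g_2]\ge\sum_x \min(P_1(x),P_2(x))\,\bigl(g_1(x)+g_2(x)\bigr)\ge s\sum_x\min(P_1(x),P_2(x)).
\end{equation*}
We also have the identity
\begin{equation*}
\sum_x\min(P_1(x),P_2(x))=1-\tfrac12\|P_1-P_2\|_1\ge1-\TD(\omega_1,\omega_2).
\end{equation*}
The maximum of the two expected errors is at least their average, which is at least $\tfrac s2\bigl(1-\TD(\omega_1,\omega_2)\bigr)$. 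This is exactly \cref{eq:lecam}.

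The only point needing care is a continuous (or general measurable) outcome space. There the sums become integrals against a common dominating measure, for instance $P_1+P_2$, with $\min(P_1,P_2)$ interpreted via densities. The argument is otherwise unchanged, and nothing deeper is involved. Nothing from the earlier results beyond monotonicity of the trace distance under quantum channels is needed; a measurement followed by an estimator is such a channel.
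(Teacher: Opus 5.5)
Your proof is correct and follows essentially the same route as the paper: a pointwise triangle inequality integrated against the overlap measure $\min(P_1,P_2)$ of total mass $1-\mathrm{TV}(P_1,P_2)$, data processing to bound $\mathrm{TV}(P_1,P_2)\le\TD(\omega_1,\omega_2)$, and max $\ge$ average. The Helstrom test you mention in your opening plan is never actually used (nor needed), so you can drop that remark.
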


\begin{proof}
  Let $P_1,P_2$ be the outcome laws of the measurement on $\omega_1,\omega_2$.
  Pointwise, $\mathsf{d}(\hat\vartheta,\vartheta_1)+\mathsf{d}(\hat\vartheta,\vartheta_2)\ge s$
  by the triangle inequality; integrating against the overlap measure
  $\min(\mathrm{d}P_1,\mathrm{d}P_2)$, whose total mass is
  $1-\mathrm{TV}(P_1,P_2)$, gives
  $\mathbb{E}_{P_1}[\mathsf{d}(\hat\vartheta,\vartheta_1)]
  +\mathbb{E}_{P_2}[\mathsf{d}(\hat\vartheta,\vartheta_2)]\ge s(1-\mathrm{TV}(P_1,P_2))$.
  A measurement is a quantum channel, so
  $\mathrm{TV}(P_1,P_2)\le\TD(\omega_1,\omega_2)$ by data processing
  \cite{nielsen2000}.
\end{proof}

\begin{proposition}[Optimality of both terms in the noisy QSE precision bound]
  \label{prop:floor}
  Let $\sigma$ be any state with $\TD(\sigma,\rho)\le\eps$.
  \begin{enumerate}
    \item[(i)] The identical perturbation $\sigma^{\otimes n}$ is an admissible
      input, $\sigma^{\otimes n}=\mathcal{E}(\rho^{\otimes n})$ for some
      $\mathcal{E}\in\mathcal{C}^{\otimes}_\eps$, and
      \begin{equation}
        \mathbb{E}^{\,\sigma^{\otimes n}}_{\mathrm{meas}}
          \bigl[\tfrac12\|\hat\pp-\pp\|_1\bigr]
        \;\ge\;\tfrac12\|\spec(\sigma)-\pp\|_1-\eta(n,d).
        \label{eq:floorlower}
      \end{equation}
    \item[(ii)] If $\rho$ has full rank and $\eps\le p_d$, the choice
      $\sigma=\rho+\eps\,(\ketbra{v_1}{v_1}-\ketbra{v_d}{v_d})$, where
      $\ket{v_1},\ket{v_d}$ are eigenvectors of $\rho$ for $p_1$ and $p_d$,
      satisfies $\TD(\sigma,\rho)=\eps$ and $\tfrac12\|\spec(\sigma)-\pp\|_1=\eps$.
      Consequently
      \begin{equation}
        \eps-\eta(n,d)
        \;\le\;
        \sup_{\mathcal{E}\in\mathcal{C}^{\otimes}_\eps}\;
        \mathbb{E}^{\,\mathcal{E}(\rho^{\otimes n})}_{\mathrm{meas}}
          \bigl[\tfrac12\|\hat\pp-\pp\|_1\bigr]
        \;\le\;
        \eps+\eta(n,d).
        \label{eq:floortight}
      \end{equation}
    \item[(iii)] The floor is information-theoretic: for $\sigma$ as in (ii),
      every measurement on $n$ copies and every estimator has worst-case
      error at least $\eps/2$ over the two targets $\{\rho,\sigma\}$ with
      admissible adversaries, for every $n$.
    \item[(iv)] Both terms of \cref{eq:mainbound} are necessary.  Let
      $\eta^\star(n,d)$ be the minimax expected error of spectrum estimation
      on the \emph{noiseless} input, i.e.\ the smallest value of
      $\sup_{\varsigma}\mathbb{E}^{\,\varsigma^{\otimes n}}_{\mathrm{meas}}
      [\tfrac12\|\hat\vartheta-\spec(\varsigma)\|_1]$ over all measurements on
      $n$ copies and all estimators $\hat\vartheta$, the supremum running over
      all states $\varsigma$ on $\C^d$.  Then for every measurement and
      estimator, the worst-case error over targets and admissible adversaries
      is at least
      \begin{equation}
        \max\Bigl\{\frac{\eps}{2},\;\eta^\star(n,d)\Bigr\}
        \;\ge\;\frac12\Bigl(\frac{\eps}{2}+\eta^\star(n,d)\Bigr),
        \qquad
        \eta^\star(n,d)=\Omega\Bigl(\frac{1}{\log d}\,
        \min\Bigl\{\frac{d}{\sqrt n},\;\Bigl(\frac{d^2}{n}\Bigr)^{1/4}\Bigr\}\Bigr),
        \label{eq:bothterms}
      \end{equation}
      the last estimate being the inversion of the copy-complexity lower bound
      for noiseless spectrum estimation matched in
      \cite{bakshi2026optimalspectrumestimation}.
      Hence \cref{eq:mainbound} is optimal up to constant factors and a factor
      logarithmic in $d$ in the second term.
  \end{enumerate}
\end{proposition}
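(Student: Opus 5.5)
For part (i), I will first note that admissibility is immediate: take every factor $\mathcal{E}_i$ to be the replacement channel $\cdot\mapsto\sigma$, which lies in $\mathcal{C}^{\otimes}_\eps$ because $\TD(\sigma,\rho)\le\eps$. For the lower bound \cref{eq:floorlower}, I apply the triangle inequality pointwise in $\lambda$:
\[
\tfrac12\|\lambda/n-\pp\|_1\ \ge\ \tfrac12\|\spec(\sigma)-\pp\|_1-\tfrac12\|\lambda/n-\spec(\sigma)\|_1 .
\]
Taking $\mathbb{E}^{\,\sigma^{\otimes n}}_{\mathrm{meas}}$ and using the definition \cref{eq:eta} with $\varsigma=\sigma$ bounds the subtracted expectation by $\eta(n,d)$.

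For part (ii), $\sigma-\rho=\eps(\ketbra{v_1}{v_1}-\ketbra{v_d}{v_d})$ has trace norm $2\eps$, so $\TD(\sigma,\rho)=\eps$. Positivity requires $p_d-\eps\ge0$, which is where the hypothesis $\eps\le p_d$ enters. The perturbation is diagonal in the eigenbasis of $\rho$, so the eigenvalues of $\sigma$ are $p_1+\eps,p_2,\dots,p_{d-1},p_d-\eps$. This list is still non-increasing, so sorting does not change the pairing with $\pp$, and the $\ell^1$ difference is $2\eps$. (If $d=1$ the construction degenerates, so I assume $d\ge2$.) The two-sided bound \cref{eq:floortight} then follows: the lower side comes from (i) applied to the admissible input $\sigma^{\otimes n}$, and the upper side is \cref{thm:robust}.

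For part (iii), I use \cref{lem:lecam} with the estimand taken to be the spectrum and the metric the total-variation distance, so $s=\eps$. The adversary presents the same input $\omega_1=\omega_2=\sigma^{\otimes n}$ in both cases. When the target is $\sigma$, the identity channel is admissible. When the target is $\rho$, the replacement channel is admissible. Since $\TD(\omega_1,\omega_2)=0$, the lemma gives a worst-case error of at least $\eps/2$ for every $n$.

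For part (iv), the maximum of two lower bounds is at least their average, so it suffices to prove each lower bound separately. The bound $\eps/2$ is (iii). For the bound $\eta^\star(n,d)$, the identity channel is always admissible, so the adversary can present $\varsigma^{\otimes n}$ for any target $\varsigma$. The worst case over targets is therefore at least the noiseless minimax error, by the definition of $\eta^\star$. The asymptotic form of $\eta^\star$ is imported from \cite{bakshi2026optimalspectrumestimation} by inverting their copy-complexity lower bound $n=\Omega(d^2\min\{(\delta\log d)^{-4},(\delta\log d)^{-2}\})$ for $\delta$, which gives $\delta\gtrsim \frac1{\log d}\min\{d/\sqrt n,(d^2/n)^{1/4}\}$. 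I expect this inversion to be the main obstacle. It requires converting a high-probability copy-complexity bound into an expected-error bound, via Markov's inequality at constant success probability, and I must check that the regimes of the min match up correctly. I will treat this as a citation-level step.
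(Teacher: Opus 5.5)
Your proposal is correct and follows the paper's proof essentially step for step. The four parts use the same ingredients: the replacement channel with the reverse triangle inequality for (i); the order-preserving diagonal perturbation for (ii); Le Cam's two-point bound with the identical input $\sigma^{\otimes n}$ for (iii); and admissibility of the identity channel together with inversion of the cited copy-complexity lower bound for (iv). Your extra remarks are correct refinements that the paper leaves implicit: (ii) needs $d\ge2$, and in (iv) Markov's inequality converts a constant-probability lower bound into an expected-error one.
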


\begin{proof}
  (i)~Take every $\mathcal{E}_i$ to be the replacement channel
  $\,\cdot\,\mapsto\sigma$.  Inside the expectation, the reverse triangle
  inequality gives
  $\tfrac12\|\lambda/n-\pp\|_1\ge\tfrac12\|\spec(\sigma)-\pp\|_1
  -\tfrac12\|\lambda/n-\spec(\sigma)\|_1$, and the expectation of the last
  term under $\sigma^{\otimes n}$ is at most $\eta(n,d)$ by
  \cref{lem:perfecterror}.

  (ii)~$\sigma$ is a state: it is diagonal in the eigenbasis of $\rho$ with
  eigenvalues $p_1+\eps\ge p_2\ge\cdots\ge p_{d-1}\ge p_d-\eps\ge0$, summing
  to $1$.  Its trace distance to $\rho$ is
  $\tfrac12\|\eps(\ketbra{v_1}{v_1}-\ketbra{v_d}{v_d})\|_1=\eps$, and since
  the perturbation preserves the ordering of the eigenvalues,
  $\spec(\sigma)-\pp=\eps(e_1-e_d)$, whose $\tfrac12\|\cdot\|_1$ is $\eps$.
  The lower bound in \cref{eq:floortight} is then (i); the upper bound is
  \cref{thm:robust}.

  (iii)~Apply \cref{lem:lecam} with $\vartheta_1=\pp$, $\vartheta_2=\spec(\sigma)$,
  $\mathsf{d}=\tfrac12\|\cdot\|_1$ and $s=\eps$, the adversary presenting
  $\omega_1=\omega_2=\sigma^{\otimes n}$ in both cases --- admissible for the
  target $\rho$ by (i), and trivially admissible (no noise) for the target
  $\sigma$.  The two inputs coincide, so \cref{eq:lecam} gives $\eps/2$.

  (iv)~The identity channel is admissible for every target, so for every
  state $\varsigma$ the noiseless input $\varsigma^{\otimes n}$ is among the
  inputs the adversary may present when the target is $\varsigma$.  The
  worst-case error of any measurement and estimator over targets and
  admissible adversaries is therefore at least its worst-case error on
  noiseless inputs, which is at least $\eta^\star(n,d)$ by definition of the
  minimax error.  Together with (iii) this gives the maximum in
  \cref{eq:bothterms}, and $\max\{a,b\}\ge\tfrac12(a+b)$ for $a,b\ge0$.  For the last estimate, the lower bound of Wang, as stated in
\cite{bakshi2026optimalspectrumestimation}, says that estimating the
spectrum to expected error $\delta$ requires
$\Omega\bigl(d^2\min\{(\delta\log d)^{-4},(\delta\log d)^{-2}\}\bigr)$
copies.  Solving for $\delta$ at $n$ copies gives
$\delta=\Omega\bigl(d/(\sqrt n\log d)\bigr)$ for $n\ge d^2$, where
$\delta\log d\le1$, and $\delta=\Omega\bigl((d^2/n)^{1/4}/\log d\bigr)$ for
$n\le d^2$, where $\delta\log d\ge1$; the two expressions coincide at
$n=d^2$, and in each regime the bound is the smaller of the two, which is
\cref{eq:bothterms}.
\end{proof}

\begin{remark}
  \label{rem:floorreading}
  By Mirsky's inequality \cite[Thm. IV.3.4,Ex. IV.3.5]{bhatia1997matrix} --- eigenvalues
  are $1$-Lipschitz in the trace norm ---
  $\tfrac12\|\spec(\sigma)-\pp\|_1\le\TD(\sigma,\rho)\le\eps$ for every
  admissible $\sigma$, so identical perturbations can never push the error
  above $\eps$, and (ii) shows that they reach $\eps$.  The full-rank condition in (ii) is for convenience of a single formula; for rank-deficient $\rho$ one moves mass $\eps$ from the smallest non-zero eigenvalue into the kernel (for a pure state, $\sigma=(1-\eps)\ketbra{v_1}{v_1}
  +\eps\ketbra{v_2}{v_2}$), which again gives $\tfrac12\|\spec(\sigma)-\pp\|_1=\eps$
  for all sufficiently small $\eps$.  Part~(iii) says that the floor is not
  a weakness of weak Schur sampling: the target is genuinely ambiguous at the
  scale $\eps$, and \cref{thm:robust} shows that weak Schur sampling pays for this exactly once, with the optimal constant.
\end{remark}

\subsection{Randomly drawn channels and unbiased noise}
\label{subsec:random}

Nothing prevents the error channel from itself fluctuating from run to run.
Suppose the factors $\mathcal{E}_i$ in \cref{eq:productchannel} are random,
mutually independent, and admissible almost surely, so that the copies
$\rho_i\defeq\mathcal{E}_i(\rho)$ are mutually independent random states with
$\TD(\rho_i,\rho)\le\eps$.  Write
\begin{equation}
  \rho_i^E\defeq\mathbb{E}_{\mathrm{noise}}[\rho_i]
  \label{eq:rhoE}
\end{equation}
for the noise-averaged copy.  Because $P(\lambda;\cdot)=\tr[\Pi_\lambda\,\cdot\,]$
is linear in the input state, and the $\rho_i$ are independent, the noise
average passes inside and factorises:
\begin{equation}
  \mathbb{E}_{\mathrm{noise}}\bigl[P(\lambda;\rho_1\otimes\cdots\otimes\rho_n)\bigr]
  =\tr\bigl[\Pi_\lambda\,\mathbb{E}_{\mathrm{noise}}[\rho_1\otimes\cdots\otimes\rho_n]\bigr]
  =P\bigl(\lambda;\,\rho_1^E\otimes\cdots\otimes\rho_n^E\bigr).
  \label{eq:meanidentity}
\end{equation}
In words: averaging the outcome law over the noise gives the outcome law of a
single deterministic product input, the product of the noise-averaged copies.

\begin{proposition}[Randomised error channels gain nothing]
  \label{prop:randomnogain}
  For every function $f$ as in \cref{def:singleslot},
  \begin{equation*}
    \sup\;\mathbb{E}_{\mathrm{noise}}\mathbb{E}_{\mathrm{meas}}\bigl[f(\lambda)\bigr]
    \;=\;
    \sup_{\mathcal{E}\in\mathcal{C}^{\otimes}_\eps}\;
      \mathbb{E}^{\,\mathcal{E}(\rho^{\otimes n})}_{\mathrm{meas}}\bigl[f(\lambda)\bigr],
  \end{equation*}
  the left-hand supremum being over all laws of mutually independent random
  channels that are admissible almost surely.  In particular \cref{thm:robust}
  holds for the joint expectation
  $\mathbb{E}_{\mathrm{noise}}\mathbb{E}_{\mathrm{meas}}$ under randomly drawn
  product noise.
\end{proposition}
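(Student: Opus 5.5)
We prove the equality by establishing the two inequalities separately. The comparison rests on the mean identity \cref{eq:meanidentity}, which says that, after averaging over the noise, a random product input behaves like the deterministic product of the averaged copies.

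\emph{Inequality ``$\ge$''.} A deterministic admissible channel $\mathcal{E}\in\mathcal{C}^{\otimes}_\eps$ is a special case of a random one: take the law that is a point mass on $\mathcal{E}$. This law consists of mutually independent factors, admissible almost surely. For it the outer average $\mathbb{E}_{\mathrm{noise}}$ is trivial, so the left-hand quantity equals $\mathbb{E}^{\,\mathcal{E}(\rho^{\otimes n})}_{\mathrm{meas}}[f(\lambda)]$. Taking the supremum over $\mathcal{E}$ gives the inequality.

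\emph{Inequality ``$\le$''.} Fix any admissible law of independent random channels and set $\rho_i^E=\mathbb{E}_{\mathrm{noise}}[\rho_i]$ as in \cref{eq:rhoE}.

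\begin{enumerate}
\item[(1)] By linearity, \cref{eq:Emeas} and \cref{eq:meanidentity},
\begin{equation*}
\mathbb{E}_{\mathrm{noise}}\mathbb{E}_{\mathrm{meas}}[f(\lambda)]
=\sum_\lambda f(\lambda)\,\mathbb{E}_{\mathrm{noise}}P(\lambda;\rho_1\otimes\cdots\otimes\rho_n)
=\mathbb{E}^{\,\rho_1^E\otimes\cdots\otimes\rho_n^E}_{\mathrm{meas}}[f(\lambda)].
\end{equation*}
Two facts justify this. The sum over $\lambda$ is finite, so exchanging it with $\mathbb{E}_{\mathrm{noise}}$ is harmless. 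The factorization $\mathbb{E}_{\mathrm{noise}}[\rho_1\otimes\cdots\otimes\rho_n]=\bigotimes_i\rho_i^E$ follows from independence, applied entrywise to the matrix elements, which are bounded and hence integrable.
\item[(2)] Each $\rho_i^E$ is a state. The set $\{\tau:\TD(\tau,\rho)\le\eps\}$ is convex, by convexity of the trace norm, and closed. Since $\rho_i$ lies in this set almost surely, so does its mean, and therefore $\TD(\rho_i^E,\rho)\le\eps$.
\item[(3)] Consequently $\bigotimes_i\rho_i^E=\mathcal{E}'(\rho^{\otimes n})$, where $\mathcal{E}'\in\mathcal{C}^{\otimes}_\eps$ is the product of the replacement channels $\cdot\mapsto\rho_i^E$ (as in the ``states versus channels'' paragraph). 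The left-hand expectation is thus bounded by the right-hand supremum, and taking the supremum over laws finishes the argument.
\end{enumerate}

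The final claim follows by applying this with $f=L$, where $L(\lambda)=\tfrac12\|\lambda/n-\pp\|_1$, and then invoking \cref{thm:robust}.

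The only step requiring some care is (1), that is, justifying the factorization of the noise average of the tensor product. Since everything is finite-dimensional and bounded, the entrywise expectation argument is routine.
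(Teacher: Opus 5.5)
Your proposal is correct and follows the paper's proof essentially step for step: the point-mass law gives ``$\ge$'', and for ``$\le$'' you use the mean identity \cref{eq:meanidentity} plus admissibility of $\bigotimes_i\rho_i^E$ via convexity of the trace norm. You phrase the latter as the closed convex $\eps$-ball containing the mean, where the paper writes Jensen's inequality directly; these are the same argument. Your extra justifications (exchanging the finite sum over $\lambda$ with $\mathbb{E}_{\mathrm{noise}}$, and factorizing entrywise under independence) just make explicit what the paper leaves implicit in \cref{eq:meanidentity}.
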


\begin{proof}
  ($\ge$)~A deterministic channel is a degenerate random one.  ($\le$)~By
  \cref{eq:meanidentity} the joint expectation equals
  $\mathbb{E}^{\,\omega}_{\mathrm{meas}}[F]$ with $\omega=\bigotimes_i\rho_i^E$,
  and $\omega$ is admissible: by convexity of the trace norm (Jensen),
  $\|\rho_i^E-\rho\|_1=\|\mathbb{E}_{\mathrm{noise}}[\rho_i-\rho]\|_1
  \le\mathbb{E}_{\mathrm{noise}}\|\rho_i-\rho\|_1\le2\eps$.
\end{proof}

\begin{definition}[Unbiased noise]
  \label{def:unbiasedbiased}
  A random product error channel is \emph{unbiased} if $\rho_i^E=\rho$ for
  every $i$, i.e.\ the perturbation of each register vanishes on average; it
  is \emph{biased} otherwise.  A deterministic channel is the extreme biased
  case, in which the bias is the entire error.
\end{definition}

\begin{corollary}[Unbiased noise costs nothing on average]
  \label{cor:unbiasedrobust}
  Under unbiased noise the marginal outcome law is \emph{exactly} the
  noiseless one, $\mathbb{E}_{\mathrm{noise}}[P(\lambda;\rho_1\otimes\cdots\otimes\rho_n)]
  =P(\lambda;\rho^{\otimes n})$ for every $\lambda$, and hence
  \begin{equation*}
    \mathbb{E}_{\mathrm{noise}}\mathbb{E}_{\mathrm{meas}}
      \bigl[\tfrac12\|\hat\pp-\pp\|_1\bigr]
    \;=\;
    \mathbb{E}^{\,\rho^{\otimes n}}_{\mathrm{meas}}
      \bigl[\tfrac12\|\lambda/n-\pp\|_1\bigr]
    \;\le\;\eta(n,d),
  \end{equation*}
  an equality with the noiseless error rather than a bound: the floor is $0$.
\end{corollary}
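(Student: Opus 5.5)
The plan is to reduce everything to the identity \cref{eq:meanidentity}. Unbiasedness says $\rho_i^E=\rho$ for every $i$. Substituting this into the right-hand side of \cref{eq:meanidentity} gives $\mathbb{E}_{\mathrm{noise}}[P(\lambda;\rho_1\otimes\cdots\otimes\rho_n)]=P(\lambda;\rho^{\otimes n})$ for every label $\lambda$, which is the first claim.

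To justify \cref{eq:meanidentity} I will check two points. First, $\omega\mapsto\tr[\Pi_\lambda\omega]$ is a bounded linear functional, so it commutes with the expectation over the noise. Second, independence gives $\mathbb{E}_{\mathrm{noise}}[\rho_1\otimes\cdots\otimes\rho_n]=\bigotimes_i\mathbb{E}_{\mathrm{noise}}[\rho_i]$. The second point holds by multilinearity of the tensor product: expand in a fixed operator basis, where each matrix entry of the tensor product is a product of entries of the independent $\rho_i$, and the expectation of a product of independent bounded random variables factorises. All quantities are bounded because states have entries of modulus at most $1$, so there are no integrability issues.

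Next I will pass from laws to the error. By \cref{eq:Emeas}, the joint expectation is
\begin{equation*}
\mathbb{E}_{\mathrm{noise}}\mathbb{E}_{\mathrm{meas}}[L]=\mathbb{E}_{\mathrm{noise}}\sum_\lambda L(\lambda)P(\lambda;\omega),
\qquad L(\lambda)=\tfrac12\|\lambda/n-\pp\|_1 .
\end{equation*}
This is a finite sum over partitions, so the noise expectation passes inside. Using the equality of laws just established, the joint expectation equals $\mathbb{E}^{\,\rho^{\otimes n}}_{\mathrm{meas}}[L]$. Finally, \cref{lem:perfecterror} applied with $\varsigma=\rho$ bounds this by $\eta(n,d)$.

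I do not expect a real obstacle here. The only delicate point is measurability of the random channels, which I will take as implicit in ``random channel'', so that the expectations $\rho_i^E$ are well defined. With that, the noise floor is exactly zero, and the result is an equality with the noiseless error rather than just a bound.
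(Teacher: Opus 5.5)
Your proposal is correct and matches the paper's proof, which is the one-line substitution of $\rho_i^E=\rho$ into \cref{eq:meanidentity}, followed by bounding the noiseless error via \cref{lem:perfecterror}. Your extra steps fill in what the paper leaves implicit: you justify \cref{eq:meanidentity} by linearity of $\omega\mapsto\tr[\Pi_\lambda\omega]$ and by factorisation over the independent $\rho_i$, and you pass the noise expectation through the finite sum over $\lambda$.
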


\begin{proof}
  Immediate from \cref{eq:meanidentity} with $\rho_i^E=\rho$.
\end{proof}

\Cref{cor:unbiasedrobust} is a statement about the average over both the noise
and the measurement.  It does not by itself say that the error is small for
one \emph{particular} realisation of the noise; that per-realisation
statement is a concentration result and is proved in
\cref{app:perrealisation}.

\section{Beyond product noise: noisy sources bounded in Wasserstein distance}
\label{sec:wasserstein}

The product model of \cref{sec:setup} was chosen to present the argument in
its simplest form.  Using the language and methods of De~Palma, Marvian,
Trevisan and Lloyd \cite{depalma2021wasserstein}, and the classification of
almost-i.i.d.\ sources of Mazzola, Sutter and Renner \cite{mazzola2026almostiid}
and Girardi, De~Palma, Lami and Datta \cite{girardi2026new,girardi2026robust},
we now generalise \cref{thm:robust} to arbitrary inputs.  The observation
that makes this possible is that our \cref{lem:spread} is really a statement
about the \emph{quantum Lipschitz constant} of the outcome observable
$\mathcal{O}_f$, and that the telescoping of \cref{prop:telescope} is a
special case of the duality seen in \cite{depalma2021wasserstein} between that constant and the quantum Wasserstein
distance of order one.

\subsection{The quantum Wasserstein distance of order one}
\label{subsec:w1}
\hfill
\vspace{2mm}
\\
Before the formal definition, we introduce a bit of intuition for the quantum Wasserstein distance of order one. The Hamming distance between two strings $x,y\in[d]^n$ is the number of positions in which they differ.  For questions about $n$ registers this is often the right
notion of distance, and the trace distance is not: the basis states
$\ket{0000}$ and $\ket{0001}$ are orthogonal, and so are $\ket{0000}$ and
$\ket{1111}$, so the trace distance is $1$ for both pairs, although the first
pair differs in one register and the second in all four.  The quantum
Wasserstein distance of order one of \cite{depalma2021wasserstein} extends
the Hamming distance from strings to arbitrary states of $n$ registers: it
assigns distance $1$ to the first pair and $4$ to the second. It reduces to
the classical Wasserstein (earth mover's) distance with Hamming cost for
states diagonal in the computational basis, and to the trace distance for a single register. More formally, we recall from
\cite{depalma2021wasserstein} exactly the definitions and facts we use.

\begin{definition}[Quantum $W_1$ distance, {\cite[Def.~7]{depalma2021wasserstein}}]
  \label{def:w1}
  For states $\omega,\omega'$ on $(\C^d)^{\otimes n}$,
  \begin{equation}
    \|\omega-\omega'\|_{W_1}
    \;\defeq\;
    \min\Bigl\{\sum_{i=1}^n c_i\;:\;
      c_i\ge0,\ \
      \omega-\omega'=\sum_{i=1}^n c_i\bigl(\tau^{(i)}-\eta^{(i)}\bigr),\ \
      \tau^{(i)},\eta^{(i)}\ \text{states with}\ \tr_i\tau^{(i)}=\tr_i\eta^{(i)}\Bigr\}.
    \label{eq:w1def}
  \end{equation}
\end{definition}

\noindent In words, $\omega-\omega'$ is written as a sum of differences of pairs of
states that agree once register $i$ is traced out, i.e.\ that differ
``only on register $i$'', and the $W_1$ distance is the least total weight
of such a decomposition.  A~perturbation confined to a single register costs
at most its trace distance, however large that is, and the distance adds up
over registers.  The following properties are proved in
\cite{depalma2021wasserstein}.

\begin{proposition}[Basic properties of $W_1$, {\cite[Prop.~2, Prop.~5 and Cor.~1]{depalma2021wasserstein}}]
  \label{prop:w1facts}
  Let $\omega,\omega'$ be states on $(\C^d)^{\otimes n}$.
  \begin{enumerate}
    \item[(i)] $\tfrac12\|\omega-\omega'\|_1\le\|\omega-\omega'\|_{W_1}\le\tfrac n2\|\omega-\omega'\|_1$.
    \item[(ii)] If $\omega=\bigotimes_i\sigma_i$ and $\omega'=\bigotimes_i\rho_i$
      are product states, then
      $\|\omega-\omega'\|_{W_1}=\sum_{i=1}^n\TD(\sigma_i,\rho_i)$.
    \item[(iii)] If $\omega$ and $\omega'$ have the same reduced state on the
      complement of a set $I\subseteq[n]$ of registers, then
      $\|\omega-\omega'\|_{W_1}\le|I|\cdot\|\omega-\omega'\|_1\le2|I|$.
    \item[(iv)] $\|\cdot\|_{W_1}$ is a norm on traceless operators; in
      particular $\omega\mapsto\|\omega-\omega'\|_{W_1}$ is convex.
  \end{enumerate}
\end{proposition}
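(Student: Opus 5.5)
We prove the four items in the order (i), (iii), (ii), (iv). Every upper bound on $\|\cdot\|_{W_1}$ comes from a telescoping hybrid in the spirit of \cref{prop:telescope}. The one nontrivial lower bound, in (ii), comes from a dual "Lipschitz" observable combined with \cref{lem:centering}. Throughout we write $t\defeq\tfrac12\|\omega-\omega'\|_1$ and use the Jordan decomposition $\omega-\omega'=t(P-N)$ with $P,N$ states of orthogonal support.

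\emph{Item (i).} For the lower bound, take any feasible decomposition in \cref{eq:w1def}. The triangle inequality gives $\|\omega-\omega'\|_1\le\sum_ic_i\|\tau^{(i)}-\eta^{(i)}\|_1\le2\sum_ic_i$, and minimising over decompositions yields $t\le\|\omega-\omega'\|_{W_1}$.

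For the upper bound it suffices to show $\|P-N\|_{W_1}\le n$ for any two states, since $W_1$ is homogeneous under positive scaling (rescale the $c_i$). Define the hybrids $\gamma_k\defeq N_{1..k}\otimes P_{k+1..n}$, where subscripts denote reduced states. Then $\gamma_0=P$ and $\gamma_n=N$. Moreover $\tr_k\gamma_{k-1}=N_{1..k-1}\otimes P_{k+1..n}=\tr_k\gamma_k$, so $P-N=\sum_k(\gamma_{k-1}-\gamma_k)$ is a feasible decomposition with all $c_k=1$. This gives $\|\omega-\omega'\|_{W_1}\le tn$.

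\emph{Item (iii).} Here $\tr_IP=\tr_IN\eqdef\xi$, because $P-N$ has vanishing partial trace over $I$. Let $u$ be the maximally mixed state and split
\[
P-N=\bigl(P-\xi\otimes u^{\otimes I}\bigr)+\bigl(\xi\otimes u^{\otimes I}-N\bigr).
\]
For the first bracket, enumerate $I=\{i_1,\dots,i_m\}$ and replace register $i_k$ by $u$ one step at a time. The consecutive hybrids $\tr_{i_1..i_k}P\otimes u^{\otimes k}$ agree after tracing out $i_k$, so this bracket costs at most $m=|I|$. The second bracket is handled the same way. Hence $\|\omega-\omega'\|_{W_1}\le 2t|I|=|I|\,\|\omega-\omega'\|_1\le2|I|$.

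\emph{Item (ii).} For the upper bound, telescope exactly as in \cref{prop:telescope}. Each step contributes $\rho_1\otimes\cdots\otimes(\sigma_k-\rho_k)\otimes\cdots\otimes\sigma_n$. Write $\sigma_k-\rho_k=\TD(\sigma_k,\rho_k)(P_k-N_k)$; the resulting two product states agree off register $k$, so the step has cost $\TD(\sigma_k,\rho_k)$.

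The lower bound is the main obstacle, since it requires a certificate that no cheaper decomposition exists. Let $\Pi_k$ be the Helstrom projector onto the positive part of $\sigma_k-\rho_k$, and set $H\defeq\sum_k\Pi_k\otimes\id_{\neq k}$. Then $\tr[H(\omega-\omega')]=\sum_k\tr[\Pi_k(\sigma_k-\rho_k)]=\sum_k\TD(\sigma_k,\rho_k)$. Now take any feasible decomposition. For each of its terms $c_i(\tau^{(i)}-\eta^{(i)})$, the summands of $H$ with $j\ne i$ contribute nothing, because $\tau^{(i)}$ and $\eta^{(i)}$ have equal marginals on every $j\neq i$. The remaining summand is bounded by \cref{lem:centering}:
\[
\bigl|\tr[\Pi_i(\tau^{(i)}_i-\eta^{(i)}_i)]\bigr|\le\tfrac12\cdot2\cdot\operatorname{spread}(\Pi_i)\le1.
\]
Therefore $\sum_k\TD(\sigma_k,\rho_k)\le\sum_ic_i$ for every feasible decomposition, which is the lower bound.

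\emph{Item (iv).} Extend \cref{eq:w1def} to all traceless Hermitian $X$ by allowing $X=\sum_ic_i(\tau^{(i)}-\eta^{(i)})$. Feasibility holds for every such $X$: apply the hybrid of (i) to the Jordan parts of $X$. Homogeneity follows by rescaling the $c_i$ (with the sign absorbed by swapping $\tau^{(i)}$ and $\eta^{(i)}$). The triangle inequality follows by concatenating decompositions of $X$ and $Y$ and merging the terms for each register $i$ via convex combinations, which preserve the equal-marginal constraint. Definiteness follows from the lower bound in (i). Finally, $\omega\mapsto\|\omega-\omega'\|_{W_1}$ is convex because it is a norm composed with an affine map.
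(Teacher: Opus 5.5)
Your proof is correct and self-contained. The paper gives no argument for this proposition: it imports all four items from De~Palma et al. The only in-text hint is the remark that (ii) is \cref{prop:telescope} in disguise, which is exactly your upper bound there. Your upper bounds in (i)--(iii) are the natural hybrid constructions:
\begin{itemize}
\item marginals of the Jordan parts for (i);
\item an intermediate state $\xi\otimes u^{\otimes I}$ that agrees with both states on $I^c$ for (iii);
\item register-by-register replacement for (ii).
\end{itemize}

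The one step where you make a real choice is the lower bound in (ii). You build the certificate $H=\sum_k\Pi_k\otimes\id_{\neq k}$, which has $\|H\|_L\le1$ in the sense of \cref{def:qlipschitz}, and check weak duality for it by hand. In effect you reprove the relevant half of \cref{eq:dualitybound} for this particular $H$ without invoking \cref{prop:duality}.

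A purely primal alternative is shorter. Apply $\tr_{\neq k}$ to any feasible decomposition. This kills every term except the $k$-th, by the same equal-marginal fact you already use, and leaves $\sigma_k-\rho_k=c_k\bigl(\tau^{(k)}_k-\eta^{(k)}_k\bigr)$. Hence $c_k\ge\TD(\sigma_k,\rho_k)$ register by register, with no Helstrom projectors and no \cref{lem:centering}. Your dual version instead ties the proposition directly to the Lipschitz/duality machinery the paper uses in \cref{sec:wasserstein}.

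One small presentational point concerns (iii). Each register of $I$ receives a term from both chains, while \cref{eq:w1def} allows only one term per register. Adding the two brackets therefore uses the per-register convex merging (subadditivity) that you only establish in (iv). A forward reference suffices, and there is no circularity.
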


Property~(ii) is the telescoping of \cref{prop:telescope} in disguise: the
proof in \cite{depalma2021wasserstein} replaces one factor at a time exactly
as we did in our proof.  Property~(iii) says that corrupting $r$ registers
arbitrarily --- even entangling them with each other --- costs $W_1$-distance at
most $2r$, while property~(i) says that a global trace-distance budget
$\theta$ costs at most $n\theta$.

The dual object is a notion of how much an observable can depend on a single register. It provides an operational interpretation of the distance.

\begin{definition}[Quantum Lipschitz constant, {\cite[Def.~8 and Prop.~8]{depalma2021wasserstein}}]
  \label{def:qlipschitz}
  For a Hermitian operator $H$ on $(\C^d)^{\otimes n}$ and a register
  $i\in[n]$, the \emph{dependence of $H$ on register $i$} is
  \begin{equation}
    \partial_iH\;\defeq\;2\,\min_{H'}\bigl\{\|H-H'\|_{\mathrm{op}}\;:\;H'\ \text{acts trivially on register }i\bigr\},
    \label{eq:partial}
  \end{equation}
  and the \emph{quantum Lipschitz constant} of $H$ is
  $\|H\|_L\defeq\max_{i\in[n]}\partial_iH$.  Equivalently
  \cite[Def.~8]{depalma2021wasserstein},
  \begin{equation}
      \|H\|_L=\max_i\max_{\omega,\omega'}\{\tr[H(\omega-\omega')]:\tr_i\omega=\tr_i\omega'\}.
  \end{equation}
\end{definition}

For $n=1$ and $H=h$, the minimum in \cref{eq:partial} is over multiples of
the identity and equals $\operatorname{spread}(h)/2$, so
$\|h\|_L=\operatorname{spread}(h)$: \cref{lem:centering} is the one-register
case of the equivalence of the two expressions in \cref{def:qlipschitz}.
The two notions are dual to each other.

\begin{proposition}[Duality, {\cite[Prop.~9]{depalma2021wasserstein}}]
  \label{prop:duality}
  For states $\omega,\omega'$ on $(\C^d)^{\otimes n}$,
  \begin{equation}
    \|\omega-\omega'\|_{W_1}
    \;=\;\max\bigl\{\tr[H(\omega-\omega')]\;:\;H\ \textnormal{- Hermitian on }(\C^d)^{\otimes n},\ \|H\|_L\le1\bigr\}.
    \label{eq:duality}
  \end{equation}
  Consequently, for every Hermitian $H$,
  \begin{equation}
    \bigl|\tr[H(\omega-\omega')]\bigr|\;\le\;\|H\|_L\,\|\omega-\omega'\|_{W_1}.
    \label{eq:dualitybound}
  \end{equation}
\end{proposition}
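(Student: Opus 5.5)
The plan is to treat \eqref{eq:duality} as an instance of finite-dimensional norm duality. First we show that $\|H\|_L$ is exactly the dual norm of $\|\cdot\|_{W_1}$ on the real space $\mathcal{T}$ of traceless Hermitian operators on $(\C^d)^{\otimes n}$. Then we use the fact that, in finite dimensions, a norm is recovered as the supremum of its pairing with the dual unit ball. Throughout, write $\mathcal{K}_i\defeq\{X\in\mathcal{T}:\tr_iX=0\}$ and $\mathcal{B}\defeq\operatorname{conv}\bigcup_i\{\tau-\eta:\ \tau,\eta\ \text{states},\ \tr_i\tau=\tr_i\eta\}$.

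\emph{Step 1 (easy inequality).} Take any decomposition $\omega-\omega'=\sum_ic_i(\tau^{(i)}-\eta^{(i)})$ as in \cref{def:w1}, and any $H'$ acting trivially on register $i$. Since $\tr_i(\tau^{(i)}-\eta^{(i)})=0$, we have $\tr[H'(\tau^{(i)}-\eta^{(i)})]=0$. Hölder \eqref{eq:holder} then gives
\[
\bigl|\tr[H(\tau^{(i)}-\eta^{(i)})]\bigr|\le\|H-H'\|_{\mathrm{op}}\,\|\tau^{(i)}-\eta^{(i)}\|_1\le2\|H-H'\|_{\mathrm{op}}.
\]
Minimising over $H'$ yields the bound $\partial_iH\le\|H\|_L$. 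Summing over $i$ with weights $c_i$ gives $\tr[H(\omega-\omega')]\le\|H\|_L\sum_ic_i$, and taking the infimum over decompositions proves "$\ge$" in \eqref{eq:duality}. Applying the same argument to $-H$ gives \eqref{eq:dualitybound} directly. No attainment issue arises, because Step 1 never needs an optimal decomposition.

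\emph{Step 2 (identifying the unit ball).} By \cref{def:w1} and positive homogeneity, the unit ball of $\|\cdot\|_{W_1}$ on $\mathcal{T}$ is exactly $\mathcal{B}$. The set $\mathcal{B}$ is compact, convex and symmetric, and it is absorbing in $\mathcal{T}$: any traceless $X$ is a finite sum of single-register-local differences, obtained by telescoping through product states as in \cref{prop:telescope}. Hence the Minkowski functional of $\mathcal{B}$ is a genuine norm, consistent with \cref{prop:w1facts}(iv). The dual norm of $H$ is then $\sup_{X\in\mathcal{B}}\tr[HX]=\max_i\sup\{\tr[H(\tau-\eta)]:\tr_i\tau=\tr_i\eta\}$, a linear functional being maximised at the generators. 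This is the second expression for $\|H\|_L$ in \cref{def:qlipschitz}.

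\emph{Step 3 (the two expressions for $\|H\|_L$ agree).} This is the step I expect to be the main obstacle. We must show
\[
\sup\{\tr[HX]: X\in\mathcal{K}_i^{\mathrm{full}},\ X=\tau-\eta\}=2\min_{H'}\|H-H'\|_{\mathrm{op}},
\]
where $\mathcal{K}_i^{\mathrm{full}}$ denotes the Hermitian operators with $\tr_iX=0$ and $H'$ ranges over operators trivial on register $i$. Two ingredients are needed.
\begin{enumerate}
\item[(a)] Every Hermitian $X$ with $\tr_iX=0$ and $\|X\|_1\le2$ is of the form $t(\tau-\eta)$ with $t\le1$ and $\tr_i\tau=\tr_i\eta$. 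To see this, take the Jordan decomposition $X=X_+-X_-$. Then $\tr X_+=\tr X_-=t\le1$, and $\tr_iX_+=\tr_iX_-$ because $\tr_iX=0$. Put $\tau=X_+/t$ and $\eta=X_-/t$.
\item[(b)] The annihilator of the subspace $\{H'\text{ trivial on }i\}=\{K\otimes_i\id\}$ under the trace pairing is exactly $\{X:\tr_iX=0\}$. The standard duality between a quotient norm and the norm restricted to the annihilator then gives
\[
\min_{H'}\|H-H'\|_{\mathrm{op}}=\max\{\tr[HX]:\ \|X\|_1\le1,\ \tr_iX=0\}.
\]
This is Hahn--Banach in finite dimensions, using that $\|\cdot\|_1$ and $\|\cdot\|_{\mathrm{op}}$ are dual norms.
\end{enumerate}
Combining (a) and (b), and absorbing the factor $2$ from $\|\tau-\eta\|_1\le2$, proves the identity.

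\emph{Step 4 (conclusion).} With Steps 2 and 3, the bipolar theorem in the finite-dimensional space $\mathcal{T}$ gives $\|X\|_{W_1}=\max\{\tr[HX]:\|H\|_L\le1\}$ for every $X\in\mathcal{T}$. The maximum is attained by compactness of the dual ball modulo multiples of $\id$; these multiples pair to zero with $\mathcal{T}$ and have $\|\cdot\|_L=0$. Specialising to $X=\omega-\omega'$ yields \eqref{eq:duality}, and \eqref{eq:dualitybound} follows by homogeneity, as already shown in Step 1.
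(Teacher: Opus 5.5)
Your proof is correct, but it takes a genuinely different route from the paper. The paper does not prove \eqref{eq:duality} at all: it cites \cite[Prop.~9]{depalma2021wasserstein} for it. It then deduces \eqref{eq:dualitybound} by homogeneity, applying \eqref{eq:duality} to $\pm H/\|H\|_L$, and handles $\|H\|_L=0$ separately, using that this forces $H=c\,\id$ together with $\tr\omega=\tr\omega'$. You reverse the logical order. Your Step~1 gets \eqref{eq:dualitybound} directly from an arbitrary admissible decomposition and H\"older's inequality, using only the operator-norm expression \eqref{eq:partial}. This needs neither strong duality nor the case split, since $\|H\|_L=0$ makes every term vanish automatically. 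Your Steps~2--4 then prove the equality from scratch: the $W_1$ unit ball is $\mathcal{B}$; the Jordan decomposition and quotient-norm (Hahn--Banach) duality show that $\|\cdot\|_L$ is its dual norm, which re-derives the equivalence of the two expressions in \cref{def:qlipschitz} that the paper imports from \cite{depalma2021wasserstein}; and the finite-dimensional bipolar theorem finishes. Your route buys self-containedness. More usefully, it shows that \cref{thm:wasserstein} only ever uses the elementary inequality of Step~1, combined with \cref{lem:qlipschitz}, which bounds exactly the operator-norm quantity $\partial_i\mathcal{O}_f$ that Step~1 consumes. So the paper's general robustness theorem does not actually depend on the deeper equality. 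The paper's route buys brevity, at the price of resting entirely on the citation. One detail worth making explicit in Step~2: identifying the unit ball with $\mathcal{B}$ uses that each set $\{\tau-\eta:\tr_i\tau=\tr_i\eta\}$ is convex and contains $0$. That is what lets you regroup convex combinations into the one-term-per-register format of \cref{def:w1}.
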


\begin{proof}
  \cref{eq:duality} is simply a restatement of {\cite[Prop.~9]{depalma2021wasserstein}}. To show \cref{eq:dualitybound}, we have
  from \cref{eq:partial}, $\|cH\|_L=|c|\,\|H\|_L$ for $c\in\R$, and
  $\|H\|_L=0$ iff $H$ acts trivially on every register, i.e.\ $H=c\,\id$.
  If $\|H\|_L>0$, apply \cref{eq:duality} to $\pm H/\|H\|_L$, which have
  Lipschitz constant $1$; if $\|H\|_L=0$, both sides of
  \cref{eq:dualitybound} vanish since $\tr\omega=\tr\omega'$.
\end{proof}

\subsection{Lipschitz constants of the outcome observables of weak Schur sampling}
\label{subsec:lipschitz}

\hfill
\vspace{2mm}
\\
We now prove the main lemma that enables us to extend the robustness guarantees beyond product noise.

\begin{lemma}[Lipschitz constant of outcome observables]
  \label{lem:qlipschitz}
  For every real-valued function $f$ on partitions of $n$ with at most $d$ parts (\cref{def:singleslot}),
  \begin{equation}
    \|\mathcal{O}_f\|_L\;\le\;\Delta(f).
    \label{eq:qlipschitz}
  \end{equation}
  In particular $\|\mathcal{O}_f\|_L\le1/n$ for $f(\lambda)=\lambda_k/n$ and
  for $f(\lambda)=\tfrac12\|\lambda/n-q\|_1$, $q\in\R^d$.
\end{lemma}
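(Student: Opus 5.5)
The plan is to work with the operator-norm form of the Lipschitz constant in \cref{def:qlipschitz}. For each register $i$ we exhibit an explicit operator $H'_i$ that acts trivially on register $i$ and satisfies $\|\mathcal{O}_f-H'_i\|_{\mathrm{op}}\le\Delta(f)/2$. This gives $\partial_i\mathcal{O}_f\le\Delta(f)$ for every $i$, and hence \cref{eq:qlipschitz}. The operator $H'_i$ is the operator-level version of the reference value $f_0(\mu)$ from Step~5 of the proof of \cref{lem:spread}. That proof applied the idea only after contracting with a product state $R\otimes\sigma$. Here we lift it to the full $n$-register algebra, which removes the product assumption.

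First, we reduce to $i=n$ exactly as in Step~0 of \cref{lem:spread}. By \cref{lem:perminv}, conjugating by the transposition $(i\;n)$ fixes $\mathcal{O}_f$. It also maps operators acting trivially on register $n$ to operators acting trivially on register $i$, and it preserves the operator norm, so $\partial_i\mathcal{O}_f=\partial_n\mathcal{O}_f$. Next, we reuse Steps~1--2 of \cref{lem:spread}. The projectors $Q_{\mu,\lambda}=(\Pi^{(n-1)}_\mu\otimes\id)\Pi_\lambda$ are mutually orthogonal and satisfy $\sum_{\mu,\lambda}Q_{\mu,\lambda}=\id$. Moreover, $Q_{\mu,\lambda}=0$ unless $\lambda=\mu+e_j$ with $j\in\operatorname{add}(\mu)$. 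Hence
\begin{equation*}
  \mathcal{O}_f=\sum_\lambda f(\lambda)\sum_\mu Q_{\mu,\lambda}
  =\sum_\mu\sum_{j\in\operatorname{add}(\mu)} f(\mu+e_j)\,Q_{\mu,\mu+e_j}.
\end{equation*}

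Now we define $H'\defeq\sum_\mu f_0(\mu)\,\Pi^{(n-1)}_\mu\otimes\id$, with $f_0$ as in \cref{eq:f0}. This operator manifestly acts trivially on register $n$. Using $\Pi^{(n-1)}_\mu\otimes\id=\sum_{j}Q_{\mu,\mu+e_j}$ (row sum rule), we get
\begin{equation*}
  \mathcal{O}_f-H'=\sum_{\mu}\sum_{j\in\operatorname{add}(\mu)}c_{\mu,j}\,Q_{\mu,\mu+e_j},
  \qquad |c_{\mu,j}|\le\tfrac{\Delta(f)}{2}.
\end{equation*}
Since the $Q_{\mu,\mu+e_j}$ are mutually orthogonal projectors, this is a spectral decomposition, and its operator norm is $\max|c_{\mu,j}|\le\Delta(f)/2$. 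Therefore $\partial_n\mathcal{O}_f\le 2\cdot\Delta(f)/2=\Delta(f)$. The special cases follow from \cref{lem:onebox}.

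The only step that needs care is the claim that the nonzero $Q_{\mu,\mu+e_j}$ are pairwise orthogonal projectors resolving the identity. This requires the commutation from Step~1 of \cref{lem:spread}: $\Pi^{(n-1)}_\mu$ lies in the span of permutations, and these commute with $\Pi_\lambda$. It also requires orthogonality within each family $\{\Pi^{(n-1)}_\mu\}$ and $\{\Pi_\lambda\}$. Both facts are already established there, so I expect no real obstacle. The main conceptual point is simply that Step~5 holds as an operator identity, not only after tracing out $R$.
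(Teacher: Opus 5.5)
Your proposal is correct and follows the paper's proof essentially line by line. You reduce to register $n$ by permutation invariance, split $\mathcal{O}_f=H'+B$ with $H'=\bigl(\sum_\mu f_0(\mu)\Pi^{(n-1)}_\mu\bigr)\otimes\id$ and $B=\sum_{\mu,j}c_{\mu,j}Q_{\mu,\mu+e_j}$, and bound $\|B\|_{\mathrm{op}}\le\Delta(f)/2$ using the mutual orthogonality of the $Q_{\mu,\mu+e_j}$. Your explicit check that conjugation by $(i\;n)$ carries operators acting trivially on register $n$ to ones acting trivially on register $i$ usefully spells out the paper's one-line reduction.
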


\begin{proof}
  By \cref{lem:perminv}, $\mathcal{O}_f$ is invariant under conjugation
  by permutations, so $\partial_i\mathcal{O}_f$ is the same for all $i$ and
  it suffices to treat $i=n$.  We reuse Steps~1, 2 and~5 of the proof of
  \cref{lem:spread}, this time at the level of operators rather than of
  expectation values.  By Step~1, $\Pi_\lambda=\sum_\mu Q_{\mu,\lambda}$ with
  the mutually orthogonal projectors $Q_{\mu,\lambda}$ of \cref{eq:Qdef}, and
  by Step~2 only $\lambda=\mu+e_j$, $j\in\operatorname{add}(\mu)$, contribute.
  Hence, with $f_0(\mu)$ as in \cref{eq:f0} and $c_{\mu,j}=f(\mu+e_j)-f_0(\mu)$,
  \begin{equation}
    \mathcal{O}_f
    =\sum_{\mu}\sum_{j\in\operatorname{add}(\mu)}f(\mu+e_j)\,Q_{\mu,\mu+e_j}
    =\underbrace{\Bigl(\sum_\mu f_0(\mu)\,\Pi^{(n-1)}_\mu\Bigr)\otimes\id}_{=:\,H'}
    \;+\;
    \underbrace{\sum_{\mu}\sum_{j\in\operatorname{add}(\mu)}c_{\mu,j}\,Q_{\mu,\mu+e_j}}_{=:\,B},
    \label{eq:Ofdecomp}
  \end{equation}
  where we used $\sum_{j}Q_{\mu,\mu+e_j}=\Pi^{(n-1)}_\mu\otimes\id$ from
  Step~4(b).  The operator $H'$ acts trivially on register $n$.  The
  $Q_{\mu,\mu+e_j}$ are mutually orthogonal projectors, so $B$ is Hermitian
  with eigenvalues in $\{c_{\mu,j}\}\cup\{0\}$, and
  $\|B\|_{\mathrm{op}}\le\max_{\mu,j}|c_{\mu,j}|\le\Delta(f)/2$ by
  \cref{eq:oneboxvar}.  Therefore
  $\partial_n\mathcal{O}_f\le2\|\mathcal{O}_f-H'\|_{\mathrm{op}}=2\|B\|_{\mathrm{op}}\le\Delta(f)$.
  The special cases are \cref{lem:onebox}.
\end{proof}

\begin{remark}
  \label{rem:qlipschitzvsspread}
  It's easy to see how
  \Cref{eq:Ofdecomp} also reproves \cref{lem:spread}: for any state
  $R=\sum_ar_a\ketbra{r_a}{r_a}$ on the other registers,
  \begin{equation*}
    h^f_n(R)=\tr\Bigl[\Bigl(\sum_\mu f_0(\mu)\Pi^{(n-1)}_\mu\Bigr)R\Bigr]\,\id
    +\sum_a r_a\,(\bra{r_a}\otimes\id)\,B\,(\ket{r_a}\otimes\id),
  \end{equation*}
  and the second term is a convex combination of compressions of $B$, hence
  has operator norm at most $\|B\|_{\mathrm{op}}\le\Delta(f)/2$ and spread
  at most $\Delta(f)$.  The operator statement
  \cref{eq:qlipschitz} is the stronger one, because it does not require the
  other registers to be in a fixed state.
\end{remark}

\subsection{Robustness for arbitrary inputs}
\label{subsec:w1robust}

\begin{theorem}[Wasserstein robustness of spectrum estimation]
  \label{thm:wasserstein}
  Let $\omega$ be \emph{any} state on $(\C^d)^{\otimes n}$, correlated or
  entangled.  Then for every function $f$ of the outcome as in \cref{def:singleslot},
  \begin{equation}
    \Bigl|\,\mathbb{E}^{\,\omega}_{\mathrm{meas}}[f(\lambda)]
      -\mathbb{E}^{\,\rho^{\otimes n}}_{\mathrm{meas}}[f(\lambda)]\,\Bigr|
    \;\le\;\Delta(f)\,\|\omega-\rho^{\otimes n}\|_{W_1},
    \label{eq:w1telescope}
  \end{equation}
  and in particular
  \begin{equation}
    \mathbb{E}^{\,\omega}_{\mathrm{meas}}\bigl[\tfrac12\|\hat\pp-\pp\|_1\bigr]
    \;\le\;
    \eta(n,d)+\frac1n\,\|\omega-\rho^{\otimes n}\|_{W_1}.
    \label{eq:w1bound}
  \end{equation}
\end{theorem}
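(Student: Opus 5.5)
The plan is to combine the duality bound \cref{eq:dualitybound} with the Lipschitz estimate \cref{lem:qlipschitz}, exactly as the telescoping bound \cref{prop:telescope} combined \cref{lem:centering,lem:spread}. First, by \cref{eq:outcomeobs}, the difference of the two expectations in \cref{eq:w1telescope} is a single linear functional of the input:
\begin{equation*}
  \mathbb{E}^{\,\omega}_{\mathrm{meas}}[f(\lambda)]
  -\mathbb{E}^{\,\rho^{\otimes n}}_{\mathrm{meas}}[f(\lambda)]
  =\tr\bigl[\mathcal{O}_f\,(\omega-\rho^{\otimes n})\bigr],
\end{equation*}
where $\mathcal{O}_f=\sum_\lambda f(\lambda)\Pi_\lambda$ is Hermitian because $f$ is real. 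Applying \cref{eq:dualitybound} of \cref{prop:duality} with $H=\mathcal{O}_f$, $\omega'=\rho^{\otimes n}$ bounds the absolute value by $\|\mathcal{O}_f\|_L\,\|\omega-\rho^{\otimes n}\|_{W_1}$. Then \cref{lem:qlipschitz} gives $\|\mathcal{O}_f\|_L\le\Delta(f)$, which is \cref{eq:w1telescope}. No product structure or register-by-register replacement is needed; the work has already been absorbed into the operator decomposition \cref{eq:Ofdecomp} and the duality of \cite{depalma2021wasserstein}.

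For \cref{eq:w1bound}, specialise to $L(\lambda)=\tfrac12\|\lambda/n-\pp\|_1$, i.e.\ $q=\pp$ in \cref{lem:onebox}(ii). That lemma gives $\Delta(L)\le1/n$. The noiseless term satisfies $\mathbb{E}^{\,\rho^{\otimes n}}_{\mathrm{meas}}[L]\le\eta(n,d)$, since $\eta$ in \cref{lem:perfecterror} is a supremum over all states, in particular $\varsigma=\rho$. Dropping the absolute value in the one direction we need yields
\begin{equation*}
  \mathbb{E}^{\,\omega}_{\mathrm{meas}}[L]\le\eta(n,d)+\tfrac1n\|\omega-\rho^{\otimes n}\|_{W_1}.
\end{equation*}

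There is essentially no obstacle: the only point requiring care is that the duality bound \cref{eq:dualitybound} is stated for states $\omega,\omega'$ of equal trace, which holds here, and that \cref{lem:qlipschitz} was proved for arbitrary $f$, so no hypothesis on $\omega$ (correlation, entanglement) enters. The substantive content lies in \cref{lem:qlipschitz}, which is already established. As a sanity check, for product $\omega$ one recovers \cref{thm:robust} via \cref{prop:w1facts}(ii), $\|\omega-\rho^{\otimes n}\|_{W_1}\le n\eps$.
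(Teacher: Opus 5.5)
Your proposal is correct and follows the paper's proof: \cref{eq:w1telescope} is \cref{prop:duality} applied to $H=\mathcal{O}_f$ together with \cref{lem:qlipschitz}. \Cref{eq:w1bound} then follows by taking $f=L=\tfrac12\|\lambda/n-\pp\|_1$, using $\Delta(L)\le 1/n$ from \cref{lem:onebox}(ii) and bounding the noiseless term by $\eta(n,d)$ via \cref{lem:perfecterror}.
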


\begin{proof}
  \Cref{eq:w1telescope} is \cref{prop:duality} applied to $H=\mathcal{O}_f$,
  together with \cref{lem:qlipschitz}.  For \cref{eq:w1bound} take
  $f=L=\tfrac12\|\lambda/n-\pp\|_1$, for which $\Delta(L)\le1/n$ by
  \cref{lem:onebox}(ii), and bound
  $\mathbb{E}^{\,\rho^{\otimes n}}_{\mathrm{meas}}[L]\le\eta(n,d)$ by
  \cref{lem:perfecterror}.
\end{proof}

The quantity $\eps_{W}(\omega)\defeq\tfrac1n\|\omega-\rho^{\otimes n}\|_{W_1}$
is the natural noise parameter of the general setting: the Wasserstein
distance from the ideal input, per register.  \Cref{thm:wasserstein} says that
weak Schur sampling is robust in the sense of \cref{def:robust} with the
class $\mathcal{C}^\otimes_\eps$ replaced by
$\{\omega:\eps_W(\omega)\le\eps\}$, again with $f(\eps)=\eps$ and
$g(n,d,\eps)=\eta(n,d)$.  Some instances:

\begin{corollary}[Special cases]
  \label{cor:w1cases}
  Let $L=\tfrac12\|\hat\pp-\pp\|_1$.
  \begin{enumerate}
    \item[(i)] \textup{(Product noise with register-dependent budgets.)}  If
      $\omega=\bigotimes_i\sigma_i$ with $\TD(\sigma_i,\rho)\le\eps_i$, then
      $\mathbb{E}^{\,\omega}_{\mathrm{meas}}[L]\le\eta(n,d)+\tfrac1n\sum_i\eps_i$.
      For $\eps_i\equiv\eps$ this is \cref{thm:robust}.
    \item[(ii)] \textup{(A few corrupted registers.)}  If $\omega$ agrees with
      $\rho^{\otimes n}$ on $n-r$ registers, and on the remaining $r$ registers the state is arbitrary (and possibly entangled), then
      $\mathbb{E}^{\,\omega}_{\mathrm{meas}}[L]\le\eta(n,d)+2r/n$; if the
      corrupted registers are moreover in product form with each other and
      with the rest, the bound improves to $\eta(n,d)+r/n$.
    \item[(iii)] \textup{(Global trace-distance budget.)}  If
      $\TD(\omega,\rho^{\otimes n})\le\theta$, then
      $\mathbb{E}^{\,\omega}_{\mathrm{meas}}[L]\le\eta(n,d)+\theta$.
    \item[(iv)] \textup{(Mixtures.)}  If $\omega$ is a convex combination of states each within $W_1$-distance $n\eps$ of $\rho^{\otimes n}$, then $\mathbb{E}^{\,\omega}_{\mathrm{meas}}[L]\le\eta(n,d)+\eps$.
    \item[(v)] \textup{(Wasserstein almost-i.i.d.\ sources.)}  If $(\omega_n)_n$
      is a Wasserstein almost-i.i.d.\ source along $\rho$ in the sense of
      \cite[Def.~7]{girardi2026robust}, i.e.\
      $\tfrac1n\|\omega_n-\rho^{\otimes n}\|_{W_1}\to0$, then
      $\mathbb{E}^{\,\omega_n}_{\mathrm{meas}}[L]\to0$, at rate
      $\max\{\eta(n,d),\,\eps_W(\omega_n)\}$ up to a factor $2$.
  \end{enumerate}
\end{corollary}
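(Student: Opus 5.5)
Every item is an instance of \cref{eq:w1bound} in \cref{thm:wasserstein}: it suffices to bound $\eps_W(\omega)=\tfrac1n\|\omega-\rho^{\otimes n}\|_{W_1}$ in each case, using the facts collected in \cref{prop:w1facts}. For (iv) and the finer claim of (ii) we will also use that the expectation $\tr[\mathcal{O}_L\,\omega]$ is linear in $\omega$.

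For (i), \cref{prop:w1facts}(ii) gives $\|\bigotimes_i\sigma_i-\rho^{\otimes n}\|_{W_1}=\sum_i\TD(\sigma_i,\rho)\le\sum_i\eps_i$. Dividing by $n$ and inserting into \cref{eq:w1bound} gives the claim, and $\eps_i\equiv\eps$ recovers \cref{thm:robust}.

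For (ii), let $I$ be the set of the $r$ corrupted registers. Then $\omega$ and $\rho^{\otimes n}$ have the same reduced state on the complement of $I$, namely $\rho^{\otimes(n-r)}$. \Cref{prop:w1facts}(iii) then gives $\|\omega-\rho^{\otimes n}\|_{W_1}\le 2r$, hence the floor $2r/n$. In the product refinement, $\omega=\bigotimes_i\sigma_i$ with $\sigma_i=\rho$ off $I$, and (ii) of \cref{prop:w1facts} gives $\sum_{i\in I}\TD(\sigma_i,\rho)\le r$, hence the floor $r/n$.

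For (iii), the upper bound in \cref{prop:w1facts}(i) only gives $W_1\le n\theta$. Combined with $\Delta(L)\le 1/n$, this yields the floor $\theta$. A slightly more direct route is also available: $|\tr[\mathcal{O}_L(\omega-\rho^{\otimes n})]|\le \operatorname{spread}(\mathcal{O}_L)\,\TD$, by \cref{lem:centering} applied to the whole space, with $0\le L\le 1$. Either route gives the claim.

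For (iv), write $\omega=\sum_k t_k\omega_k$ with each $\|\omega_k-\rho^{\otimes n}\|_{W_1}\le n\eps$. Convexity from \cref{prop:w1facts}(iv) gives $\|\omega-\rho^{\otimes n}\|_{W_1}\le n\eps$. Equivalently, one can apply \cref{thm:wasserstein} to each $\omega_k$ and average, using linearity of the expectation.

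For (v), \cref{eq:w1bound} gives $\mathbb{E}[L]\le\eta(n,d)+\eps_W(\omega_n)\le 2\max\{\eta(n,d),\eps_W(\omega_n)\}$. Both terms tend to zero: the first by \cref{lem:perfecterror} for fixed $d$, the second by the definition of a Wasserstein almost-i.i.d.\ source.

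The only point needing care is (ii). We must check that ``agrees with $\rho^{\otimes n}$ on $n-r$ registers'' means that the marginal on the complement of $I$ equals $\rho^{\otimes(n-r)}$; this is exactly the hypothesis of \cref{prop:w1facts}(iii). If the intended meaning is instead that $\omega=\rho^{\otimes(n-r)}\otimes\tau$, then that marginal condition holds trivially. The remaining items are direct substitutions.
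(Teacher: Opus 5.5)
Your proposal is correct and follows the paper's proof essentially verbatim. In each item you bound $\|\omega-\rho^{\otimes n}\|_{W_1}$ with the corresponding part of \cref{prop:w1facts} and insert the result into \cref{eq:w1bound}, and (v) follows from the definition. Your alternative routes are also valid: for (iii), the centring argument over the whole space is the data-processing argument the paper itself mentions via \cref{prop:globalbudget}(i); for (iv), averaging \cref{thm:wasserstein} over the mixture by linearity avoids the convexity of $W_1$ altogether.
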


\begin{proof}
  (i)~is \cref{prop:w1facts}(ii); (ii)~is \cref{prop:w1facts}(iii), and (ii)
  again in the product case; (iii)~is \cref{prop:w1facts}(i) (it also
  follows directly from data processing, cf.\ \cref{prop:globalbudget}(i));
  (iv)~is \cref{prop:w1facts}(iv); (v)~is the definition.  In each case
  insert the bound on $\|\omega-\rho^{\otimes n}\|_{W_1}$ into
  \cref{eq:w1bound}.
\end{proof}

\begin{remark}[The floor is still optimal]
  \label{rem:w1floor}
  The identical perturbation $\sigma^{\otimes n}$ of \cref{prop:floor} has
  $\|\sigma^{\otimes n}-\rho^{\otimes n}\|_{W_1}=n\,\TD(\sigma,\rho)$ by
  \cref{prop:w1facts}(ii), so $\eps_W=\TD(\sigma,\rho)$ and
  \cref{prop:floor} shows that the floor $\eps_W$ in \cref{eq:w1bound} is
  attained, and that no measurement and estimator can beat $\eps_W/2$ on
  the corresponding two-point instance.
\end{remark}

\begin{remark}[Which perturbations are small in $W_1$]
  \label{rem:w1scale}
  The $W_1$ distance measures a perturbation by the number of registers it
  touches, weighted by how much it moves each: many small local errors, a few
  large local errors, and correlations among the errors of a few registers
  are all cheap, whereas a perturbation that shifts \emph{every} register by a
  constant is expensive.  The marginal counterexample $\tau$ of
  \cref{prop:marginalfailure} below is of the latter kind, even though its
  single-copy marginals are exactly $\rho$: applying \cref{eq:w1telescope}
  with $f=\lambda_1/n$ gives
  \begin{equation*}
    \|\tau-\rho^{\otimes n}\|_{W_1}
    \;\ge\;n\Bigl(\mathbb{E}^{\,\tau}_{\mathrm{meas}}[\lambda_1/n]
      -\mathbb{E}^{\,\rho^{\otimes n}}_{\mathrm{meas}}[\lambda_1/n]\Bigr)
    \;\ge\;n\bigl(1-p_1-2\eta(n,d)\bigr),
  \end{equation*}
  so $\eps_W(\tau)\to1-p_1$: $\tau$ is not Wasserstein almost-i.i.d.\ along
  $\rho$, although it is weakly almost-i.i.d.\ at $k=1$.  This is an instance
  of the strict separation between the two notions proved in
  \cite{girardi2026new}, and it is why \cref{thm:wasserstein} and
  \cref{prop:marginalfailure} are consistent.
\end{remark}

\begin{remark}[Scope of the method]
  \label{rem:method}
  Nothing in \cref{thm:wasserstein} is specific to the spectrum beyond
  \cref{lem:qlipschitz}: for \emph{any} collective protocol whose
  post-processing is a function $f$ of a measurement with outcome observable
  $\mathcal{O}_f$, the error under an input $\omega$ exceeds the noiseless
  error by at most $\|\mathcal{O}_f\|_L\,\|\omega-\rho^{\otimes n}\|_{W_1}$.
  For statistics that are sums of few-body terms --- such as the swap
  statistics used for hypothesis testing in \cite{depalma2025noniid} --- the
  Lipschitz constant is $O(1/n)$ for elementary reasons; the content of
  \cref{lem:qlipschitz} is that the same holds for the genuinely
  $n$-body observables $\lambda_k/n$ and $\tfrac12\|\lambda/n-q\|_1$.
\end{remark}

\section{When is the noise too strong? Marginal and global constraints}
\label{sec:beyondproduct}

\Cref{thm:robust,thm:wasserstein} bound the error by a ``per-register'' budget:
$\eps$ in the product model and $\tfrac1n\|\omega-\rho^{\otimes n}\|_{W_1}$
in general. By ``per-register'' we mean a budget promise that accounts for the noise register by register: in the product model each copy may move by at most $\eps$, and in general the $W_1$ distance $\|\omega-\rho^{\otimes n}\|_{W_1}$ allows correlated and even entangled noise as long as its total weight, spread over the $n$ registers, is $\eps$ per register, that is $\|\omega-\rho^{\otimes n}\|_{W_1}\le n\eps$.

It is natural to ask whether still weaker promises on the input
suffice, i.e.\ if we can prove robustness under stronger noise.  We consider the two most natural ones: constraining only the
single-copy \emph{marginals} of an arbitrary joint state, which is the $k=1$ case of the weakly almost-i.i.d.\ sources of \cite{girardi2026new}, or only the \emph{global} trace distance to the perfect input. In each class we explicit adversarial inputs, with matching lower bounds valid for every measurement and estimator, under which estimation fails.  A single construction witnesses both failures. These examples say that the noise cannot be allowed to be arbitrarily bad, but they do not say that per-register budgets proven robust in this work are the only workable promises, and we return to the space of intermediate models in \cref{sec:open}.

\begin{definition}[Relaxed noise classes]
  \label{def:relaxedclasses}
  Fix the target $\rho$ and $n$.  For $\eps\ge0$ let $\mathcal{M}_\eps$ be
  the class of (arbitrary, possibly entangled and correlated) joint states
  $\omega$ on $(\C^d)^{\otimes n}$ whose every single-copy marginal satisfies
  $\TD(\tr_{\neq i}[\omega],\rho)\le\eps$ for all $i\in[n]$.  For
  $\theta\in[0,\tfrac12)$ let $\mathcal{G}_\theta$ be the class of joint
  states with $\TD(\omega,\rho^{\otimes n})\le\theta$.  The product model
  embeds into $\mathcal{G}_{n\eps}$: replacing one factor at a time,
  \begin{equation}
    \TD\Bigl(\bigotimes_i\sigma_i,\;\rho^{\otimes n}\Bigr)
    \;\le\;\sum_{i=1}^n\TD(\sigma_i,\rho)\;\le\;n\eps .
    \label{eq:hybridbudget}
  \end{equation}
\end{definition}

\begin{proposition}[Marginal constraints admit exact failure at $\eps=0$]
  \label{prop:marginalfailure}
  Let $\rho=\sum_kp_k\ketbra{v_k}{v_k}$ be mixed ($p_1<1$) and define the
  classically correlated eigenbasis mixture
  \begin{equation}
    \tau\;\defeq\;\sum_{k=1}^dp_k\bigl(\ketbra{v_k}{v_k}\bigr)^{\otimes n}.
    \label{eq:tau}
  \end{equation}
  Then:
  \begin{enumerate}
    \item[(i)] $\tau\in\mathcal{M}_0$: every single-copy marginal of $\tau$
      equals $\rho$ exactly.
    \item[(ii)] Weak Schur sampling on $\tau$ outputs $\lambda=(n,0,\dots,0)$
      with probability one, so $\hat\pp=(1,0,\dots,0)$ deterministically and
      $\tfrac12\|\hat\pp-\pp\|_1=1-p_1$ for every $n\ge1$.
    \item[(iii)] The failure is not specific to weak Schur sampling: for every
      measurement and estimator, the worst case over the two targets
      $\{\rho,\ \ketbra{v_1}{v_1}\}$ with adversaries constrained to
      $\mathcal{M}_0$ is at least $\tfrac12p_1(1-p_1)$, independent of $n$.
  \end{enumerate}
\end{proposition}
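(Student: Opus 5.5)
Parts (i) and (ii) are direct computations; part (iii) is the two-point bound of \cref{lem:lecam} with a carefully chosen pair of inputs.

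For (i), compute the marginal on register $i$ by linearity: $\tr_{\neq i}\bigl[(\ketbra{v_k}{v_k})^{\otimes n}\bigr]=\ketbra{v_k}{v_k}$. Hence $\tr_{\neq i}\tau=\sum_k p_k\ketbra{v_k}{v_k}=\rho$ exactly, so $\tau\in\mathcal{M}_0$.

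For (ii), note that each $\ket{v_k}^{\otimes n}$ lies in the symmetric subspace. Under Schur--Weyl duality \cref{eq:sw}, the symmetric subspace is the block $\lambda=(n)$: the trivial representation of $S_n$ appears only in $M_{(n)}$, and permutations act within blocks as in \cref{eq:permblock}. Therefore $\Pi_{(n)}\ket{v_k}^{\otimes n}=\ket{v_k}^{\otimes n}$, and by linearity $P((n);\tau)=\sum_k p_k=1$. The estimator is then $\hat\pp=e_1$ deterministically, and $\tfrac12\|e_1-\pp\|_1=\tfrac12\bigl((1-p_1)+\sum_{k\ge2}p_k\bigr)=1-p_1$.

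For (iii), apply \cref{lem:lecam} with the following data:
\begin{itemize}
\item estimands $\vartheta_1=\pp$ and $\vartheta_2=e_1=\spec(\ketbra{v_1}{v_1})$, at distance $s=1-p_1$ in $\tfrac12\|\cdot\|_1$;
\item for target $\rho$, the adversary presents $\omega_1=\tau\in\mathcal{M}_0$;
\item for the pure target $\ketbra{v_1}{v_1}$, the adversary must present a state whose marginals are all exactly $\ketbra{v_1}{v_1}$, which forces $\omega_2=(\ketbra{v_1}{v_1})^{\otimes n}$.
\end{itemize}
This forcing holds because a pure marginal forces a product structure, so the adversary has no freedom in the second case.

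The inputs are then not identical, so we need $\TD(\tau,(\ketbra{v_1}{v_1})^{\otimes n})$. Both states are diagonal in the orthonormal family $\{\ket{v_k}^{\otimes n}\}$, so the trace distance is the classical one between $(p_1,\dots,p_d)$ and $e_1$, namely $1-p_1$. \Cref{lem:lecam} then gives worst-case error at least $\tfrac{1-p_1}{2}\bigl(1-(1-p_1)\bigr)=\tfrac12p_1(1-p_1)$, independent of $n$.

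The only delicate point is this trace-distance computation, which would go wrong if the states were not simultaneously diagonal. Orthogonality of $\ket{v_k}^{\otimes n}$ for distinct $k$ settles it.
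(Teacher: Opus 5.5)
Your proposal is correct and follows the paper's proof essentially step for step. Part (i) is the same marginal computation. Part (ii) places each $\ket{v_k}^{\otimes n}$ in the symmetric subspace, i.e.\ the $\lambda=(n)$ block. Part (iii) applies \cref{lem:lecam} with $\omega_1=\tau$ and the forced $\omega_2=(\ketbra{v_1}{v_1})^{\otimes n}$, using $s=1-p_1$ and $\TD(\omega_1,\omega_2)=1-p_1$ computed via orthogonality of the branches.
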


\begin{proof}
  (i)~$\tr_{\neq i}[\tau]=\sum_kp_k\ketbra{v_k}{v_k}=\rho$ for each $i$.

  (ii)~Each branch $\ket{v_k}^{\otimes n}$ is a pure product state and hence
  lies in the symmetric subspace $\mathrm{Sym}^n(\C^d)=V_{(n)}\otimes M_{(n)}$,
  which is exactly the $\lambda=(n,0,\dots,0)$ block (with $m_{(n)}=1$).
  Thus $\tr[\Pi_{(n)}\tau]=1$ and the outcome is deterministic.  The error is
  $\tfrac12\bigl[(1-p_1)+\sum_{k\ge2}p_k\bigr]=1-p_1$.

  (iii)~Apply \cref{lem:lecam}.  For the target $\vartheta_1=\rho$ the
  adversary plays $\tau$; for the target $\vartheta_2=\ketbra{v_1}{v_1}$ the
  marginal constraint at $\eps=0$ \emph{forces}
  $\omega_2=\ketbra{v_1}{v_1}^{\otimes n}$, since a joint state with pure
  marginals is the product of those marginals.  The two inputs satisfy
  $\TD(\tau,\ketbra{v_1}{v_1}^{\otimes n})=1-p_1$ (the $k\ge2$ branches are
  orthogonal to $\omega_2$, the $k=1$ branch identical), and the target
  spectra are at distance $s=\tfrac12\|\pp-(1,0,\dots,0)\|_1=1-p_1$.  Hence
  the worst-case error is at least $\tfrac s2\bigl(1-(1-p_1)\bigr)=\tfrac12p_1(1-p_1)$.
\end{proof}

\begin{remark}[de Finetti reading; the Haar variant]
  \label{rem:definettireading}
  The state \cref{eq:tau} is exchangeable:
  $\tau=\int\varsigma^{\otimes n}\,\mathrm{d}\nu(\varsigma)$ with $\nu$
  supported on the pure eigenstates of $\rho$.  By linearity of
  $P(\lambda;\cdot)$, the outcome law on any exchangeable input is the
  $\nu$-mixture of noiseless laws,
  $P(\lambda;\tau)=\int P(\lambda;\varsigma^{\otimes n})\,\mathrm{d}\nu(\varsigma)$:
  the measurement estimates the spectrum of the de~Finetti \emph{mixing
  state} $\varsigma\sim\nu$, never the spectrum of the marginal
  $\int\varsigma\,\mathrm{d}\nu$ \cite{christandl2007}.  Marginal closeness is
  therefore the wrong promise for a collective measurement, and
  \cref{prop:marginalfailure}(iii) shows that no measurement repairs this.  A
  continuous variant: for $\rho=\id/d$, the Haar mixture
  $\int\ketbra{\psi}{\psi}^{\otimes n}\,\mathrm{d}\psi
  =\Pi_{\mathrm{sym}}/\dim\mathrm{Sym}^n(\C^d)$ has all marginals exactly
  maximally mixed yet yields $\hat\pp=(1,0,\dots,0)$ deterministically, with
  error $1-1/d$.
\end{remark}

\begin{proposition}[Global budget: tight $\Theta(\theta)$ floor, and growing
  error for $\theta=n\eps$]
  \label{prop:globalbudget}
  Fix $\theta\in(0,\tfrac12)$ and let $\rho$ be mixed.  Over the class
  $\mathcal{G}_\theta$:
  \begin{enumerate}
    \item[(i)] \textup{(Upper bound, weak Schur sampling.)}
      $\displaystyle\sup_{\omega\in\mathcal{G}_\theta}
      \mathbb{E}^{\,\omega}_{\mathrm{meas}}\bigl[\tfrac12\|\hat\pp-\pp\|_1\bigr]
      \le\theta+\eta(n,d).$
    \item[(ii)] \textup{(Lower bound, constructive.)} The state
      $\omega=(1-\theta)\rho^{\otimes n}+\theta\,\tau$, with $\tau$ from
      \cref{eq:tau}, lies in $\mathcal{G}_\theta\cap\mathcal{M}_0$
      simultaneously, and the error of weak Schur sampling on it is at least
      $\theta(1-p_1)$, for every $n$.
    \item[(iii)] \textup{(Lower bound, any measurement and estimator.)} For
      every measurement and estimator, the worst case over the targets
      $\{\rho,\ \ketbra{v_1}{v_1}\}$ with adversaries in the respective
      $\mathcal{G}_\theta$ classes is at least $\theta(1-p_1)$, independent
      of $n$.
  \end{enumerate}
  Consequently, since we expect for the global error $\theta$ to grow with system size, the output error would also grow with system size. For example, with the summed per-copy budget $\theta=n\eps$ of
  \cref{eq:hybridbudget} the worst-case error over $\mathcal{G}_{n\eps}$
  grows linearly,
  \begin{equation}
    \sup_{\omega\in\mathcal{G}_{n\eps}}
      \mathbb{E}^{\,\omega}_{\mathrm{meas}}\bigl[\tfrac12\|\hat\pp-\pp\|_1\bigr]
    \;\asymp\;\min(n\eps,1)
    \qquad\text{(up to $\rho$-dependent constants)},
    \label{eq:lineargrowth}
  \end{equation}
  increasing with $n$ until saturation at $n\approx1/\eps$. Within the global
  model, additional copies strictly \emph{worsen} the guarantee.
\end{proposition}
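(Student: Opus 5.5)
Part~(i) is a direct data-processing argument. For any $\omega\in\mathcal{G}_\theta$, write $L(\lambda)=\tfrac12\|\lambda/n-\pp\|_1$. Then $0\le L\le 1$, so $\mathcal{O}_L$ satisfies $0\preceq\mathcal{O}_L\preceq\id$. For such an observable,
\begin{equation*}
\tr[\mathcal{O}_L(\omega-\rho^{\otimes n})]\le\TD(\omega,\rho^{\otimes n})\le\theta .
\end{equation*}
Adding $\mathbb{E}^{\rho^{\otimes n}}_{\mathrm{meas}}[L]\le\eta(n,d)$ from \cref{lem:perfecterror} gives the claim. Alternatively, one can cite \cref{cor:w1cases}(iii).

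For (ii), membership in $\mathcal{M}_0$ is immediate: both $\rho^{\otimes n}$ and $\tau$ have all marginals equal to $\rho$ by \cref{prop:marginalfailure}(i), and marginals are linear in the state. For membership in $\mathcal{G}_\theta$,
\begin{equation*}
\TD(\omega,\rho^{\otimes n})=\theta\,\TD(\tau,\rho^{\otimes n})\le\theta .
\end{equation*}
For the error, linearity of $P(\lambda;\cdot)$ gives
\begin{equation*}
\mathbb{E}^{\omega}[L]=(1-\theta)\,\mathbb{E}^{\rho^{\otimes n}}[L]+\theta\,\mathbb{E}^{\tau}[L]\ge\theta(1-p_1),
\end{equation*}
using $L\ge0$ on the first term and \cref{prop:marginalfailure}(ii) on the second.

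For (iii), I apply \cref{lem:lecam} with targets $\vartheta_1=\pp$ and $\vartheta_2=(1,0,\dots,0)$. Their distance is $s=1-p_1$. The key choice is to make the two adversarial inputs as close as possible while each stays within trace distance $\theta$ of its own ideal input. I would take
\begin{equation*}
\omega_1=(1-\theta)\rho^{\otimes n}+\theta\ketbra{v_1}{v_1}^{\otimes n},\qquad \omega_2=(1-\theta)\ketbra{v_1}{v_1}^{\otimes n}+\theta\rho^{\otimes n}.
\end{equation*}
Then $\TD(\omega_1,\rho^{\otimes n})\le\theta$ and $\TD(\omega_2,\ketbra{v_1}{v_1}^{\otimes n})\le\theta$, while
\begin{equation*}
\TD(\omega_1,\omega_2)=(1-2\theta)\,\TD(\rho^{\otimes n},\ketbra{v_1}{v_1}^{\otimes n})\le 1-2\theta .
\end{equation*}
Le Cam then gives $\tfrac{s}{2}\cdot2\theta=\theta(1-p_1)$, independent of $n$. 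This is the step I expect to need the most care: the mixing must be symmetric so that the factor $2\theta$ appears and the constant exactly matches $\theta(1-p_1)$. If one wanted something sharper, one could instead use the bound $1-p_1^n$ on the trace distance, but it is not needed.

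For the consequence \eqref{eq:lineargrowth}, substitute $\theta=n\eps$. The upper bound comes from (i) together with the trivial bound $L\le1$. The lower bound comes from (iii) whenever $n\eps<\tfrac12$. Beyond that point, monotonicity of the classes $\mathcal{G}_\theta$ in $\theta$ lets one use any fixed $\theta<\tfrac12$, for instance $\theta=\tfrac14$, which keeps the lower bound of constant order. This yields the growth $\asymp\min(n\eps,1)$ with $\rho$-dependent constants: $1-p_1$ in the lower bound, and the additive $\eta(n,d)$ in the upper bound, which is absorbed for large $n$.
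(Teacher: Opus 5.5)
Your proposal is correct and follows the paper's proof essentially step for step. You use data processing (in its Helstrom form $0\preceq\mathcal{O}_L\preceq\id$) for (i), linearity of the outcome law on the mixture for (ii), the same symmetric cross-mixed pair $\omega_1,\omega_2$ with \cref{lem:lecam} for (iii), and monotonicity of the classes $\mathcal{G}_\theta$ for the linear-growth consequence. One small point you share with the paper: that consequence concerns the fixed target $\rho$, so its lower bound comes most directly from part (ii), which gives an input in $\mathcal{G}_{n\eps}$ on which weak Schur sampling errs by at least $n\eps(1-p_1)$ for the target $\rho$ itself; part (iii) only controls the maximum over the two targets.
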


\begin{proof}
  (i)~The measurement is a channel, so by data processing the outcome laws of
  $\omega$ and $\rho^{\otimes n}$ differ by at most $\theta$ in total
  variation \cite{nielsen2000}; since the loss is $[0,1]$-valued, the
  expectations differ by at most $\theta$.  Then apply \cref{lem:perfecterror}.
  (Alternatively, \cref{cor:w1cases}(iii).)

  (ii)~Trace distance is jointly convex, so
  $\TD(\omega,\rho^{\otimes n})\le\theta\,\TD(\tau,\rho^{\otimes n})\le\theta$;
  the marginals are $(1-\theta)\rho+\theta\rho=\rho$ by
  \cref{prop:marginalfailure}(i).  The error is the mixture
  $(1-\theta)\mathbb{E}^{\,\rho^{\otimes n}}_{\mathrm{meas}}[L]+\theta(1-p_1)\ge\theta(1-p_1)$,
  using \cref{prop:marginalfailure}(ii) for the second branch.

  (iii)~Cross-mixing.  For the target $\vartheta_1=\rho$ the adversary plays
  $\omega_1=(1-\theta)\rho^{\otimes n}+\theta\ketbra{v_1}{v_1}^{\otimes n}$;
  for the target $\vartheta_2=\ketbra{v_1}{v_1}$ she plays $\omega_2$ with the
  two weights exchanged.  Each is admissible for its \emph{own} target by
  joint convexity, $\TD(\omega_i,\vartheta_i^{\otimes n})\le\theta$, yet the
  two are close to \emph{each other}:
  $\omega_1-\omega_2=(1-2\theta)\bigl(\rho^{\otimes n}-\ketbra{v_1}{v_1}^{\otimes n}\bigr)$,
  so $\TD(\omega_1,\omega_2)\le1-2\theta$.  The target spectra are at distance
  $s=1-p_1$, so \cref{lem:lecam} gives worst-case error at least
  $\tfrac s2\bigl(1-(1-2\theta)\bigr)=\theta(1-p_1)$.

  For \cref{eq:lineargrowth}: the upper bound is (i) with
  $\theta=\min(n\eps,1)$ (for $n\eps\ge1$ the constraint is vacuous and the
  error is trivially at most $1$); the lower bound is (iii) with $\theta=n\eps$
  for $n\le1/(2\eps)$, and monotonicity of the classes
  ($\mathcal{G}_\theta\subseteq\mathcal{G}_{\theta'}$ for $\theta\le\theta'$)
  beyond.
\end{proof}

\begin{remark}[What localisation of the budget buys]
  \label{rem:productworth}
  \Cref{prop:globalbudget} makes the role of the per-register structure
  quantitative.  The product-$\eps$ model and the global-$n\eps$ model spend
  the \emph{same} total trace-distance budget \cref{eq:hybridbudget}, yet the
  worst-case errors differ by a factor of $n$: $\Theta(\min(n\eps,1))$ for the
  global class versus $\eps+O(d/\sqrt n)$ for the product class
  (\cref{thm:robust}).  The mechanism is exactly the one isolated by
  \cref{lem:spread,lem:qlipschitz}: when the budget is \emph{localised}, one
  $\eps$ per register, a permutation-invariant measurement lets each register
  move the expected error only by its share $\eps/n$, so the $n$ shares sum to
  $\eps$ --- which is what the $W_1$ distance measures --- whereas a global
  budget may be spent coherently on a single $n$-body direction to which the
  collective measurement is maximally sensitive: in \cref{eq:tau}, classical
  correlation of the eigenbasis choice across all copies.  In particular the
  na\"ive hope that a global budget $\theta$ should cost only $O(\theta/n)$ in
  precision is refuted information-theoretically by
  \cref{prop:globalbudget}(iii).
\end{remark}

\section{Discussion and open questions}
\label{sec:open}

\Cref{thm:robust,thm:wasserstein} settle robustness of the \emph{spectrum}
estimate for product noise and, more generally, for inputs at small
Wasserstein distance from $\rho^{\otimes n}$, unconditionally and with the
optimal floor; \cref{sec:beyondproduct} shows that two natural weaker
promises do not suffice.  We first comment on what these results do and do
not say about the noise model and about the notion of robustness itself, and
then list open questions.

\subsection*{Intermediate noise models}

The negative results of \cref{sec:beyondproduct} are not converses to the
positive ones. They show that the promise on the input cannot be too weak: closeness of single-copy marginals is compatible with a constant error at $\eps=0$, and a global trace-distance budget is compatible with an error that grows with $n$. Nevertheless, there is a large space of models between these extremes and the per-register budgets of \cref{thm:robust,thm:wasserstein},
and some of them are physically natural.  Constraining the average $k$-body
marginals to be close to $\rho^{\otimes k}$ for a fixed $k\ge2$ (the weakly
almost-i.i.d.\ sources of \cite{girardi2026new} at finite $k$) already
excludes the counterexample \cref{eq:tau}, whose two-body marginal carries
$\Theta(1)$ classical correlation; the almost-i.i.d.\ states of Mazzola,
Sutter and Renner \cite{mazzola2026almostiid}, in which all but $r=o(n)$
systems are in the ideal state in a symmetrised sense, form another class,
which for mixtures of product corruptions is covered by
\cref{cor:w1cases}(ii) but in general is not obviously so; and $k$-wise
independent error channels interpolate between the product model and
arbitrary correlations.  For which of these classes weak Schur sampling
remains robust, and with what noise floor, is open.  \Cref{rem:w1scale} gives the
mechanism to look for: the question is always whether the relevant class stays
within $W_1$-distance $O(n\eps)$ of $\rho^{\otimes n}$, or, if it does not,
whether a different Lipschitz-type structure of the outcome observables can
be exploited.

\subsection*{Alternative notions of robustness}

\Cref{def:robust} demands that for fixed $\eps$ the output error does not grow with the input size $n$, even if the input error does grow with $n$. This is the natural requirement when copies are cheap and one wants the
collective advantage to survive at arbitrary $n$, but it is not the only
reasonable one, and in the biased noise setting it is in some respects too
demanding.  A first observation is that the noise floor $\eps$ and the target precision $\delta$ interact: whatever the noise model, an experimenter who wants precision $\delta$ never needs more than $n\approx d^2/\max(\delta,\eps)^2$ copies, because beyond that point the noiseless error has already dropped to the level of the floor and, in the biased model, no number of additional copies can improve on $\max(\delta,\eps)$ (\cref{prop:floor}(iii)). A definition of robustness with such a cap on $n$ could admit models in which the worst-case error
does grow with $n$, as in \cref{prop:globalbudget}, provided the growth is
slow enough to be harmless up to the desired precision.  Similarly, one may accept floors that depend on the local dimension $d$, or on spectral gaps of $\rho$, or on other parameters that admit physical constraints in practice. Which of these relaxations are meaningful for applications, and which noise classes they admit, seems to us a worthwhile question in its own right.

\subsection*{Open questions}
\begin{enumerate}
\item \textbf{Robustness of the eigenbasis estimate: full state tomography.}
  This is one of the two main open directions and the original motivation
  for this work.  The full Keyl measurement (\cref{app:keyl}) returns the spectrum and basis pair $(\hat\pp,\hat U)$, and the present analysis covers only $\hat\pp$.
  Every quantity bounded in \cref{sec:main,sec:wasserstein} is a linear
  functional $\tr[\mathcal{O}_F\,\omega]$ of the input, and both the
  telescoping of \cref{prop:telescope} and the duality of
  \cref{prop:duality} rely on that linearity; the eigenbasis estimate is a
  nonlinear function of the post-measurement state inside the block
  $\lambda$, so the same argument does not apply to it.  What remains to be
  established is therefore the robustness of the second measurement stage,
  the eigenbasis measurement, on a post-measurement state that is no longer
  of the ideal form. The random purification route of \cite{pelecanos2025mixed}
  offers a second path: there the only collective step is weak Schur
  sampling followed by a coherent purification, and the pure-state
  tomography that follows is comparatively simple.

\item \textbf{Robustness of the optimal spectrum estimation algorithm.}
  The sample-optimal algorithm of Bakshi, Singh and Tan
  \cite{bakshi2026optimalspectrumestimation} uses weak Schur sampling as its
  core subroutine but adds two ingredients: a preliminary bucketing stage,
  which learns the subspace of large eigenvalues by tomography and filters
  the copies accordingly \cite{pelecanos2026keylwerner}, and a
  moment-matching post-processing of the outcome $\lambda$, which recovers
  the small eigenvalues from Chebyshev moments of the spectrum.  Our results
  cover the weak Schur sampling stage.  What remains to be established is
  the robustness of the bucketing stage and of the post-processing, that is,
  that neither amplifies the per-copy noise.  If both hold, the robustness
  claim of this paper extends to the optimal algorithm, with the same noise floor and the optimal noiseless rate in place of $\eta(n,d)$.

\item \textbf{Intermediate noise classes.}  As discussed above: determine
  for which of the classes between per-register budgets and the failing
  classes of \cref{sec:beyondproduct} --- $k$-body marginal constraints,
  Mazzola--Sutter--Renner almost-i.i.d.\ states, $k$-wise independent
  channels, sublinear $W_1$ budgets --- weak Schur sampling is robust, and
  how the error interpolates between $\Theta(\eps)$ and
  $\Theta(\min(n\eps,1))$.

\item \textbf{The exact worst case at finite $n$.}  \Cref{prop:floor} pins
  the supremum in \cref{thm:robust} to the interval
  $[\eps-\eta(n,d),\,\eps+\eta(n,d)]$, so the floor is settled.  Two finer
  questions remain.  First, is the noisy error ever \emph{larger} than
  $\max(\eps,\eta(n,d))$ up to constants, i.e.\ do the two terms in
  \cref{eq:mainbound} genuinely add, or does the true worst case behave like
  their maximum?  The identical perturbation of \cref{prop:floor} gives
  $\tfrac12\|\spec(\sigma)-\pp\|_1\pm\eta(n,d)$, which does not decide this.
  Second, the telescoping bounds each of the $n$ replacement terms by its
  absolute value and thereby discards any cancellation between registers
  perturbed in opposite directions; whether an adversary can prevent
  cancellation while keeping every register within $\eps$, other than by
  perturbing all registers identically, is open.  Relatedly,
  \cref{lem:qlipschitz} gives $\|\mathcal{O}_{\lambda_k/n}\|_L\le1/n$;
  computing the exact Lipschitz constant would show how much room there is.

\item \textbf{Finer distributional information for the noiseless
  measurement.}  The proof of \cref{thm:robust} needs only the two-moment
  fact $\eta(n,d)=O(d/\sqrt n)$.  Per-realisation statements about the
  \emph{shape} of the outcome distribution --- for example a local central
  limit theorem for the law \cref{eq:keylformula} at fixed $d$, i.e.\ a
  pointwise Gaussian approximation with multiplicative error $1+O(n^{-1/2})$
  on the region $\|\lambda/n-\pp\|_\infty=O(n^{-1/2})$ --- would sharpen the
  tail bounds of \cref{app:concentration} and are, to our knowledge, not in
  the literature.  The available central limit theorems for Schur--Weyl
  measures concern the regime in which $d$ grows with $n$ (typically
  $d\propto\sqrt n$), where the limit shape is a deformed Vershik--Kerov
  curve; the regime here is $d$ and $\pp$ fixed with $n\to\infty$, in which
  $\lambda/n\to\pp$ and the fluctuation theory is multinomial-like.  The
  moment computations of \cite{odonnell2016efficient,haah2017sample} are the
  natural starting point.

\item \textbf{Other collective protocols.}  By \cref{rem:method} the method
  applies to any collective protocol whose classical output is a function of
  a measurement outcome, once the Lipschitz constant of the outcome
  observable is known.  Natural targets are the entropy and purity estimators
  built on weak Schur sampling \cite{acharya2020measuring,odonnell2015quantum},
  the purification step of \cite{pelecanos2025mixed}, and the mixed Schur
  sampling of \cite{cervero2024memory} for the walled Brauer setting of
  process tomography \cite{koike1989}, where the one-box variation
  \cref{eq:oneboxvar} would have to be recomputed for the corresponding
  diagram combinatorics.

\item \textbf{A classical analogue.}  When all $\sigma_i$ commute with $\rho$,
  the outcome law of weak Schur sampling on $\bigotimes_i\sigma_i$ is the law
  of the RSK shape of a word of $n$ independent, non-identically distributed
  letters.  \Cref{thm:robust} then says that the RSK shape of a word whose
  letter distributions are each $\eps$-close to $\pp$ is, in expectation,
  within $\eps+O(d/\sqrt n)$ of the shape of an i.i.d.\ word from $\pp$ ---
  a statement about the Robinson--Schensted--Knuth correspondence under
  non-stationary input that may be of independent combinatorial interest.
\end{enumerate}

\section*{Acknowledgements}
\noindent VV is grateful to Filippo Girardi for the insightful discussions, for pointing us towards the concept of the quantum Wasserstein distance, and for suggesting some direct applications. We thank Elias Theil, Daniel Stilck Fran\c{c}a, and Yuming Zhao for all the helpful discussions. VV also thanks Xinyu (Norah) Tan for suggesting the direct application of our results towards the robustness question of their random dimension reduction problem.
\\
\\
\noindent \textbf{AI disclosure.} Introducing the question of robustness of weak Schur sampling, setting up the problem, and coming up with the main ideas behind the proofs are original and fully human contributions. AI was primarily used to assist with the technicalities in the proofs, as well as to do numerical and computational checks for testing human ideas quicker. AI was also used as a search tool, and as a helper for technical writing. The authors take full responsibility for the content and correctness of the results presented in this paper.


\printbibliography

\appendix
\section{The Keyl measurement and the Keyl formula}
\label{app:keyl}

This appendix collects, for completeness, the precision bound of \cref{lem:perfecterror}, and the parts of the standard description of Keyl's full state tomography protocol.

\subsection{Output precision of Keyl's spectrum estimation}
\label{app:Keyl_QSE_precision}

Recall \cref{lem:perfecterror}:

\begin{lemma*}[Noiseless expected error]
  Define
  \begin{equation}
    \eta(n,d)
    \;\defeq\;
    \sup_{\varsigma}\;
    \mathbb{E}^{\,\varsigma^{\otimes n}}_{\mathrm{meas}}
    \bigl[\tfrac12\|\lambda/n-\spec(\varsigma)\|_1\bigr],
  \end{equation}
  the supremum being over all states $\varsigma$ on $\C^d$.  Then
  $\eta(n,d)=O(d/\sqrt n)$.
\end{lemma*}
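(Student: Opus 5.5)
We reduce the statement to the Keyl--Werner concentration bound of \cref{thm:kw}, integrated over tail probabilities. Fix any state $\varsigma$ on $\C^d$ and write $X\defeq\tfrac12\|\lambda/n-\spec(\varsigma)\|_1\in[0,1]$. Then
\begin{equation*}
  \mathbb{E}^{\,\varsigma^{\otimes n}}_{\mathrm{meas}}[X]
  =\int_0^1\Pr\bigl[X>\delta\bigr]\,\mathrm{d}\delta
  \;\le\;\delta_0+\int_{\delta_0}^\infty a(n,d)\,e^{-2n\delta^2}\,\mathrm{d}\delta
\end{equation*}
for any threshold $\delta_0\ge0$. We choose $\delta_0$ so that $a(n,d)e^{-2n\delta_0^2}\le 1$, namely $\delta_0=\sqrt{\log a(n,d)/(2n)}$. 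The Gaussian tail beyond $\delta_0$ is then $O(1/\sqrt n)$, via the bound $\int_{\delta_0}^\infty e^{-2n(\delta^2-\delta_0^2)}\,\mathrm{d}\delta\le\sqrt{\pi/(8n)}$. With $\log a(n,d)=O(d^2\log(n+1))$ this gives
\begin{equation*}
  \eta(n,d)=O\bigl(d\sqrt{\log n/n}\bigr),
\end{equation*}
uniformly in $\varsigma$, since the constant in \cref{thm:kw} does not depend on $\varsigma$.

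\textbf{Main obstacle.} The route above leaves a spurious $\sqrt{\log n}$ factor, so the sharp $O(d/\sqrt n)$ cannot come from \cref{thm:kw} alone. To remove it I would use second-moment information on $\lambda$ instead.

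First, since $\tfrac12\|v\|_1\le\tfrac12\sqrt d\,\|v\|_2$, Jensen's inequality gives
\begin{equation*}
  \mathbb{E}[X]\le\tfrac12\sqrt{d\,\mathbb{E}\|\lambda/n-\pp\|_2^2}.
\end{equation*}
So it suffices to show $\mathbb{E}\|\lambda-n\pp\|_2^2=O(dn)$.

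Second, this mean-squared bound is what O'Donnell and Wright's analysis of the Schur--Weyl measure provides \cite{odonnell2016efficient}. They show $\mathbb{E}\|\lambda/n-\pp\|_2^2\le d/n$, obtained from the majorization $\mathbb{E}\lambda\succ n\pp$ together with their bounds on the expected row lengths (via RSK/Kerov--Vershik coupling) and on the variance. I would cite that bound rather than rederive it.

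Substituting into the Jensen step gives $\mathbb{E}[X]\le\tfrac12\sqrt{d\cdot d/n}=d/(2\sqrt n)$ for every $\varsigma$. Taking the supremum over $\varsigma$ yields $\eta(n,d)=O(d/\sqrt n)$.

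If that citation is unavailable, I would instead bound the variance directly through the Keyl formula recalled in \cref{app:keyl}. The difficult step would then be the bias bound $\|\mathbb{E}\lambda-n\pp\|_2=O(\sqrt{dn})$.
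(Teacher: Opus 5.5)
Your proposal is correct and follows the paper's proof: both integrate the Keyl--Werner tail bound to get $O(d\sqrt{\log n/n})$, then remove the logarithm by citing the second-moment bound $\mathbb{E}\|\lambda/n-\spec(\varsigma)\|_2^2=O(d/n)$ of O'Donnell--Wright (the paper also cites Haah et al.), together with $\|\cdot\|_1\le\sqrt d\,\|\cdot\|_2$ and Jensen. As in the paper, the tail-integration step becomes redundant once that moment bound is invoked, since the moment bound alone yields $O(d/\sqrt n)$.
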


\begin{proof}
  The loss $L=\tfrac12\|\lambda/n-\spec(\varsigma)\|_1$ takes values in
  $[0,1]$, so $\mathbb{E}[L]=\int_0^1\Pr[L>t]\,\mathrm{d}t$.  Splitting the
  integral at $\delta$ and inserting \cref{eq:precision},
    \begin{equation*}
    \mathbb{E}[L]
    \;\le\;\delta+\int_\delta^\infty a(n,d)\,e^{-2nt^2}\,\mathrm{d}t
    \;\le\;\delta+\frac{a(n,d)}{4n\delta}\,e^{-2n\delta^2}.
  \end{equation*}
  With $\delta=\sqrt{C\,d^2\log(n+1)/n}$ for a sufficiently large absolute
  constant $C$, the polynomial prefactor from \cref{thm:kw}, $a(n,d)=(n+1)^{O(d^2)}$, is absorbed
  by the exponential. The second term is $O(\delta)$, so
  $\eta(n,d)=O(d\sqrt{\log n/n})$ uniformly in $\varsigma$.  The logarithm
  is removed by the moment bounds of
  \cite{odonnell2016efficient,haah2017sample}, which give
  $\mathbb{E}^{\,\varsigma^{\otimes n}}_{\mathrm{meas}}\|\lambda/n-\spec(\varsigma)\|_2^2=O(d/n)$
  and hence, by using $\|\cdot\|_1\le\sqrt d\|\cdot\|_2$ from Cauchy--Schwarz, and then by using Jensen's inequality, we get
  $\eta(n,d)=O(d/\sqrt n)$.
\end{proof}

\subsection{The eigenbasis measurement}
\label{app:keylpovm}

After weak Schur sampling yields $\lambda$, the input $\rho^{\otimes n}$
collapses on the $V_\lambda$ register to the normalised state
$s_\lambda(\pp)^{-1}\pi_\lambda(\rho)$, where $\pi_\lambda$ denotes the
irreducible representation of $\GL(d,\C)$ on $V_\lambda$ and $s_\lambda$ is
the Schur polynomial \cref{eq:keylformula} below.  Keyl's protocol
\cite{keyl2006quantum} then performs a \emph{covariant} continuous POVM on
$V_\lambda$, with outcomes labelled by unitaries $U\in U(d)$, given for any
Borel set $F\subseteq U(d)$ by
\begin{equation}
  M(F)\;=\;\dim(V_\lambda)\int_F
    \pi_\lambda(U)\,\ketbra{T_\lambda}{T_\lambda}\,\pi_\lambda(U)^\dagger\,\mathrm{d}U,
  \label{eq:keylpovm}
\end{equation}
where $\ket{T_\lambda}\in V_\lambda$ is the highest-weight vector and
$\mathrm{d}U$ is the normalised Haar measure \cite{odonnell2016efficient}.
The factor $\dim(V_\lambda)$ ensures $\int_{U(d)}\mathrm{d}M(U)=\id_{V_\lambda}$
by Schur orthogonality, and the POVM is covariant,
$M(VFV^\dagger)=\pi_\lambda(V)M(F)\pi_\lambda(V)^\dagger$ for all $V\in U(d)$,
which is the structural property behind the unbiasedness of the eigenbasis
estimate \cite{holevo1982probabilistic}.  The full estimator is
$(\hat\pp,\hat U)=(\lambda/n,U)$ and the reconstructed state is
$\hat\rho=\hat U\diag(\hat\pp)\hat U^\dagger$; the highest-weight vector plays
the role of a reference frame, being the unique (up to phase) vector in
$V_\lambda$ stabilised by the diagonal subgroup, so that probing it through
all Haar-rotated frames $\pi_\lambda(U)\ket{T_\lambda}$ extracts the
eigenbasis information that weak Schur sampling discards
\cite{keyl2006quantum}.  The optimal $O(d^2/n)$ infidelity rate for full
tomography via this construction is established in
\cite{odonnell2016efficient,haah2017sample}.  Marginalising
\cref{eq:keylpovm} over $U$ returns $\id_{V_\lambda}$, so the law of
$\lambda$ is unaffected by the second measurement, as claimed in
\cref{rem:fullkeyl}.

\subsection{The Keyl formula}
\label{app:keylformula}

For $n$ identical copies of a state with ordered eigenvalues
$\pp=(p_1,\dots,p_d)$, the outcome law \cref{eq:outcomelaw} is
\begin{equation}
  P(\lambda;\rho^{\otimes n})\;=\;m_\lambda\cdot s_\lambda(p_1,\dots,p_d),
  \label{eq:keylformula}
\end{equation}
where $m_\lambda=\dim M_\lambda$ is the dimension of the irreducible
representation of $S_n$ in the block $\lambda$, given by the hook-length
formula
\begin{equation}
  m_\lambda=\dim M_\lambda=\frac{n!}{\prod_{(i,j)\in\lambda}h(i,j)},
  \label{eq:hook}
\end{equation}
with $h(i,j)=(\lambda_i-j)+(\lambda^\top_j-i)+1$ the hook length of the cell
$(i,j)$ \cite{fulton1991representation,harrow2005applications}, and
$s_\lambda(\pp)$ is the \emph{Schur polynomial}, the character of $V_\lambda$
at $\diag(p_1,\dots,p_d)$.  The derivation from Schur--Weyl duality is short:
since $\rho^{\otimes n}$ is permutation-invariant, Schur's lemma for the
$S_n$-action gives that its restriction to $V_\lambda\otimes M_\lambda$ has
the form $\varsigma_\lambda\otimes\id_{M_\lambda}$, so
$P(\lambda)=\tr[\Pi_\lambda\rho^{\otimes n}]=m_\lambda\tr_{V_\lambda}[\varsigma_\lambda]$,
and $\tr_{V_\lambda}[\varsigma_\lambda]=s_\lambda(\pp)$ is the definition of
the character \cite{fulton1991representation}.  Normalisation is the
identity $\sum_\lambda m_\lambda s_\lambda(\pp)=(\sum_ip_i)^n=1$.  The Schur
polynomial can be evaluated by the \emph{Jacobi--Trudi identity}
\begin{equation}
  s_\lambda(\pp)\;=\;\det\bigl(h_{\lambda_i-i+j}(\pp)\bigr)_{1\le i,j\le\ell(\lambda)},
  \label{eq:jt}
\end{equation}
where $h_k(\pp)=\sum_{|\alpha|=k}\pp^\alpha$ is the $k$-th complete
homogeneous symmetric polynomial (with $h_0=1$ and $h_k=0$ for $k<0$),
generated by $\prod_i(1-p_it)^{-1}=\sum_kh_k(\pp)t^k$.

For a product input $\bigotimes_i\sigma_i$ of non-identical copies there is no
analogue of \cref{eq:keylformula}: the input is not permutation-invariant,
so Schur's lemma does not apply blockwise.  By \cref{eq:symmetrisation} the
outcome law equals that of the symmetrised input
$\tfrac1{n!}\sum_\pi\pi(\bigotimes_i\sigma_i)\pi^\dagger$, which \emph{is}
permutation-invariant, but its blocks $\varsigma_\lambda$ are not characters
of a single operator, and the main text is arranged so as never to need
them.

\section{Single-shot guarantees, confidence amplification, and concentration}
\label{app:concentration}

\Cref{thm:robust,thm:wasserstein} bound an expectation.  This appendix
collects the statements that an expectation bound does not by itself give:
what a single run of the protocol guarantees and how many repetitions give a
prescribed confidence (\cref{app:singleshot}); how heavy the tails of the
outcome distribution are for a fixed input
(\cref{app:tails}); and, when the noise is itself random, how much the
centre of the outcome distribution fluctuates from one realisation to the
next (\cref{app:perrealisation}).  None of this is needed for
\cref{def:robust}; all of it follows from the same single-register bounds,
\cref{lem:centering,lem:spread,lem:qlipschitz}.  Throughout, the error (loss) of QSE is
$L(\lambda)=\tfrac12\|\lambda/n-\pp\|_1$, and we also shorten $\hat\pp \defeq \lambda/n$ for the QSE estimator.

\subsection{Single-shot guarantees and confidence amplification}
\label{app:singleshot}

Write
\begin{equation}
  E^\star\defeq\eps+\eta(n,d)
  \label{eq:Estar}
\end{equation}
for the bound of \cref{thm:robust}; for a general input one may take
$E^\star=\eps_W(\omega)+\eta(n,d) = \tfrac1n\|\omega-\rho^{\otimes n}\|_{W_1} + \eta(n,d)$ from \cref{thm:wasserstein} instead, and
everything below goes through verbatim. Here $E^\star$ bounds the \emph{expected} error; the corollaries below convert it to a \emph{precision} $\delta$ guaranteed with a prescribed \textit{confidence} $1 - \gamma$ (i.e.\ failure probability~$\gamma$), at the cost of constant factors. Because the loss $L$ takes values in
$[0,1]$ and $E^\star$ is a \emph{worst-case} bound, Markov's inequality
applies uniformly over admissible inputs.

\begin{corollary}[Single-shot guarantee]
  \label{cor:singleshot}
  For every admissible input, every worst-case expected error $E^\star$, and every precision $\delta > 0$,
  $\Pr_{\mathrm{meas}}\bigl[\tfrac12\|\hat\pp-\pp\|_1>\delta\bigr]\le E^\star/\delta$.
  In particular a single run returns an estimate within
  $3E^\star=3\eps+3\eta(n,d)$ of the true spectrum with probability at least
  $2/3$.
\end{corollary}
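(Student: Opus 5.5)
The statement is a direct consequence of Markov's inequality applied to the nonnegative loss $L(\lambda)=\tfrac12\|\lambda/n-\pp\|_1$, combined with the expectation bound already established. The plan is as follows.

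First, fix an admissible input $\omega$: in the product model, $\omega=\mathcal{E}(\rho^{\otimes n})$ with $\mathcal{E}\in\mathcal{C}^{\otimes}_\eps$; in the general case, any $\omega$ with the corresponding $W_1$ budget. By \cref{thm:robust}, or by \cref{thm:wasserstein} in the general case, we have $\mathbb{E}^{\,\omega}_{\mathrm{meas}}[L]\le E^\star$. This bound is uniform over all admissible inputs, because both theorems bound the supremum.

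Next, since $L\ge0$ pointwise in $\lambda$, Markov's inequality for the outcome law $P(\,\cdot\,;\omega)$ gives, for every $\delta>0$,
\begin{equation*}
  \Pr^{\,\omega}_{\mathrm{meas}}[L>\delta]\;\le\;\Pr^{\,\omega}_{\mathrm{meas}}[L\ge\delta]\;\le\;\frac{\mathbb{E}^{\,\omega}_{\mathrm{meas}}[L]}{\delta}\;\le\;\frac{E^\star}{\delta}.
\end{equation*}
This is the first claim. No representation theory is needed beyond what already went into $E^\star$.

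For the single-run statement, take $\delta=3E^\star$. If $E^\star>0$, the failure probability is at most $1/3$, so the estimate lies within $3E^\star=3\eps+3\eta(n,d)$ with probability at least $2/3$. The degenerate case $E^\star=0$ needs a separate line: then $\mathbb{E}[L]=0$ with $L\ge0$, so $L=0$ almost surely, and the claim holds trivially with probability $1$.

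There is no real obstacle here. The only care points are:
\begin{itemize}
  \item noting that the bound holds for every admissible input simultaneously, not just on average over inputs;
  \item handling the strict versus non-strict inequality in Markov's inequality;
  \item treating the $E^\star=0$ case separately.
\end{itemize}
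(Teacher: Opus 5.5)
Your proposal is correct and matches the paper's argument. The paper likewise applies Markov's inequality to the nonnegative loss $L$, using that $E^\star$ from \cref{thm:robust} (or \cref{thm:wasserstein}) bounds the expectation uniformly over admissible inputs, and then takes $\delta=3E^\star$; your separate handling of the degenerate case $E^\star=0$ is a small extra point the paper leaves implicit.
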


To reach confidence $1-\gamma$ one may sharpen the tail bound
(\cref{app:tails}) or simply repeat the experiment.  Repetition requires
nothing beyond \cref{cor:singleshot}; because the estimate is a vector, the
usual median trick is implemented by a selection rule.

\begin{lemma}[Median-of-estimates selection]
  \label{lem:median}
  Let $\hat\pp^{(1)},\dots,\hat\pp^{(k)}$ be estimates, $k$ odd, and suppose
  at least $\lceil k/2\rceil$ of them satisfy
  $\tfrac12\|\hat\pp^{(l)}-\pp\|_1\le r$.  Let $\hat\pp^\star$ be any
  $\hat\pp^{(l)}$ for which at least $\lceil k/2\rceil$ of the estimates
  (including itself) lie within $\tfrac12\|\cdot\|_1$-distance $2r$ of it.
  Then such an $l$ exists and $\tfrac12\|\hat\pp^\star-\pp\|_1\le3r$.
\end{lemma}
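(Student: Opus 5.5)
The argument is a triangle inequality in the pseudo-metric $\mathsf d(q,q')\defeq\tfrac12\|q-q'\|_1$ on $\R^d$, combined with a counting argument. We do not need any property of the estimates beyond the hypothesis. The proof has two parts: existence of a valid index $l$, then the error bound for whichever valid index the rule returns.

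\emph{Existence.} Let $G\defeq\{l:\mathsf d(\hat\pp^{(l)},\pp)\le r\}$ be the set of good estimates; by hypothesis $|G|\ge\lceil k/2\rceil$. Take any $l\in G$. For every $l'\in G$ the triangle inequality gives
\begin{equation*}
  \mathsf d(\hat\pp^{(l)},\hat\pp^{(l')})\le \mathsf d(\hat\pp^{(l)},\pp)+\mathsf d(\pp,\hat\pp^{(l')})\le 2r.
\end{equation*}
So $\hat\pp^{(l)}$ has at least $|G|\ge\lceil k/2\rceil$ estimates within distance $2r$, itself included, since $\mathsf d(\hat\pp^{(l)},\hat\pp^{(l)})=0$. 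Hence every good index satisfies the selection rule, and a valid $l$ exists.

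\emph{Error bound.} Let $\hat\pp^\star=\hat\pp^{(l^\star)}$ be any selected estimate, and let $S$ be the set of at least $\lceil k/2\rceil$ indices within distance $2r$ of it. Because $k$ is odd, $2\lceil k/2\rceil=k+1>k$, so $S$ and $G$ cannot be disjoint: pigeonhole gives some $m\in S\cap G$. The triangle inequality then yields
\begin{equation*}
  \mathsf d(\hat\pp^\star,\pp)\le \mathsf d(\hat\pp^\star,\hat\pp^{(m)})+\mathsf d(\hat\pp^{(m)},\pp)\le 2r+r=3r.
\end{equation*}

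The only step needing care is this pigeonhole, which uses that $k$ is odd so that two sets of size $\lceil k/2\rceil$ must overlap. No step is a real obstacle.
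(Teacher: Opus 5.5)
Your proposal is correct and follows the paper's proof essentially step for step. The paper also shows that every good index satisfies the selection rule via the triangle inequality, then uses oddness of $k$ to force the selected index's neighbourhood to meet the good set, and concludes with the $2r+r=3r$ triangle inequality.
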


\begin{proof}
  Let $G$ be the set of good estimates, $|G|\ge\lceil k/2\rceil$.  Any two
  elements of $G$ are within $2r$ of each other by the triangle inequality, so
  every $l\in G$ satisfies the selection criterion, and such an $l$ exists.
  Let $l$ be selected and $N$ its set of neighbours, $|N|\ge\lceil k/2\rceil$.
  Since $k$ is odd, $|G|+|N|\ge k+1>k$, so $G\cap N\neq\emptyset$; pick
  $g\in G\cap N$.  Then
  $\tfrac12\|\hat\pp^{(l)}-\pp\|_1\le\tfrac12\|\hat\pp^{(l)}-\hat\pp^{(g)}\|_1
  +\tfrac12\|\hat\pp^{(g)}-\pp\|_1\le2r+r=3r$.
\end{proof}

\begin{corollary}[Confidence amplification by repetition]
  \label{cor:amplification}
  Run the protocol $k$ times independently, each on a fresh batch of $n$
  copies (with arbitrary admissible noise on each batch), and let
  $\hat\pp^\star$ be selected as in \cref{lem:median} with $r=3E^\star$.
  Then for $k=O(\log(1/\gamma))$ odd,
  \begin{equation*}
    \Pr\bigl[\tfrac12\|\hat\pp^\star-\pp\|_1>9E^\star\bigr]\;\le\;\gamma.
  \end{equation*}
  The total number of copies consumed is $nk=O(n\log(1/\gamma))$.
\end{corollary}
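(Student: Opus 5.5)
The plan is the standard median trick, assembled from \cref{cor:singleshot} and \cref{lem:median}. First, for each batch $l=1,\dots,k$, I will apply \cref{cor:singleshot} with $\delta=3E^\star$. The key point is that $E^\star$ is a worst-case bound, uniform over all admissible inputs. So whatever admissible noise the adversary places on batch $l$, the event $G_l\defeq\{\tfrac12\|\hat\pp^{(l)}-\pp\|_1\le 3E^\star\}$ has probability at least $2/3$.

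Next I need independence. The batches are fresh and measured independently, so the indicators $\mathbbm{1}_{G_l}$ are independent Bernoulli variables, each with success probability at least $2/3$. If the adversary could choose noise adaptively across batches, I would instead argue with a supermartingale, conditioning on the past and using the conditional bound $\Pr[G_l\mid\text{past}]\ge2/3$. The Chernoff/Azuma bound goes through unchanged in that case.

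Hoeffding's inequality then gives
\begin{equation*}
\Pr\Bigl[\textstyle\sum_l\mathbbm{1}_{G_l}<\lceil k/2\rceil\Bigr]\le\exp\bigl(-2k(2/3-1/2)^2\bigr)=e^{-k/18}.
\end{equation*}
Choosing $k$ odd with $k\ge18\ln(1/\gamma)$, so $k=O(\log(1/\gamma))$, makes this at most $\gamma$.

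On the complementary event, at least $\lceil k/2\rceil$ of the estimates are within $r=3E^\star$ of $\pp$. \Cref{lem:median} then guarantees that a selected $\hat\pp^\star$ exists and satisfies $\tfrac12\|\hat\pp^\star-\pp\|_1\le 3r=9E^\star$. This proves the stated tail bound. The copy count is $nk=O(n\log(1/\gamma))$, since each of the $k$ runs uses $n$ copies.

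The only delicate step is the independence/uniformity point. The single-shot bound must hold conditionally on the other batches, which is exactly what the worst-case nature of $E^\star$ over admissible inputs provides. Everything else is routine.
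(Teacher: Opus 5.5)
Your proposal is correct and follows the paper's proof essentially step for step. You use \cref{cor:singleshot} at $\delta=3E^\star$ to make each run good with probability at least $2/3$, then a Chernoff/Hoeffding bound over the $k$ independent runs (your explicit $e^{-k/18}$ instantiates the paper's unspecified $e^{-ck}$), and then \cref{lem:median} with $r=3E^\star$ to obtain $9E^\star$; your remark about adaptively chosen noise is an optional extension beyond what the statement assumes.
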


\begin{proof}
  By \cref{cor:singleshot} each run is good (within $r=3E^\star$) with
  probability at least $2/3$, independently across runs.  By the Chernoff
  bound the probability that fewer than $\lceil k/2\rceil$ runs are good is
  at most $e^{-ck}$ for an absolute constant $c>0$; choose
  $k=O(\log(1/\gamma))$.  On the complementary event \cref{lem:median}
  applies with $r=3E^\star$.
\end{proof}

\begin{remark}[Sample complexity in the presence of noise]
  \label{rem:samplecomplexity}
  Combining \cref{thm:robust} with \cref{cor:amplification}: to output an
  estimate within $\tfrac12\|\cdot\|_1$-distance $\delta$ of $\pp$ with
  confidence $1-\gamma$ it suffices to take $n=O\bigl(d^2/(\delta/9-\eps)^2\bigr)$
  copies per run and $k=O(\log(1/\gamma))$ runs, provided $\delta>9\eps$.
  The constants are not optimised; the structural content is that the copy
  cost retains the collective $d^2$ scaling and the confidence cost is
  logarithmic.
\end{remark}

\subsection{Tail bounds for a fixed input}
\label{app:tails}

We give two tail bounds for the outcome distribution of a fixed input.  The
first has exponential dependence on $n$ but an additive $\eps/w$ term; the
second is only polynomial in $n$ but has no such term and holds for every
product input, with no closeness promise at all.

\paragraph{\textit{A smoothed telescoping bound.}}
Applying \cref{prop:telescope} to the indicator function of the event
$\{L(\lambda)>\delta\}$, i.e.\ to $\mathbf{1}[L(\lambda)>\delta]\in\{0,1\}$,
is useless: an indicator jumps by $1$ across the boundary of the event and
so has one-box variation $\Theta(1)$ rather than $O(1/n)$.  Replacing the
indicator by a \emph{smoothed} version, which ramps up over a window of
width $w$, restores a small one-box variation at the price of blurring the
threshold by $w$.  We use the elementary fact that if
$\phi\colon\R\to\R$ satisfies $|\phi(x)-\phi(y)|\le k|x-y|$, then
$\Delta(\phi\circ f)\le k\,\Delta(f)$ for every $f$: for partitions
$\lambda,\lambda'$ differing by one box,
\begin{equation}
  \label{eq:one-box_composition_bound}
  \bigl|\phi(f(\lambda))-\phi(f(\lambda'))\bigr|
  \;\le\;k\,\bigl|f(\lambda)-f(\lambda')\bigr|
  \;\le\;k\,\Delta(f),
\end{equation}
and taking the maximum over such pairs gives the claim, directly from
\cref{eq:oneboxvar}.

\begin{proposition}[Deviation bound for a fixed product input]
  \label{prop:softtail}
  Let $\omega=\bigotimes_i\sigma_i$ be an admissible product input.  Then for
  every $w\in(0,1]$ and every $\delta>0$,
  \begin{equation}
    \Pr^{\,\omega}_{\mathrm{meas}}\bigl[L(\lambda)>\delta+w\bigr]
    \;\le\;a(n,d)\,e^{-2n{\delta}^2}+\frac{\eps}{w},
    \label{eq:softtail}
  \end{equation}
  with $a$ the polynomial prefactor of \cref{thm:kw}.  For a general input $\omega$
  the same holds with $\eps$ replaced by $\eps_W(\omega)$.
\end{proposition}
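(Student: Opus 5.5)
The plan is to smooth the indicator of the tail event, transfer its expectation from $\omega$ to $\rho^{\otimes n}$ with \cref{prop:telescope}, and then bound the noiseless tail with \cref{thm:kw}.

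\textbf{Smoothing.} Define the ramp $\phi\colon\R\to[0,1]$ by
\[
\phi(x)=\min\bigl\{1,\max\{0,(x-\delta)/w\}\bigr\}.
\]
It satisfies $\mathbf{1}[x>\delta+w]\le\phi(x)\le\mathbf{1}[x>\delta]$ and is $1/w$-Lipschitz. Set $g\defeq\phi\circ L$. By \cref{eq:one-box_composition_bound} and \cref{lem:onebox}(ii) we get
\[
\Delta(g)\le\tfrac1w\,\Delta(L)\le\tfrac{1}{nw}.
\]

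\textbf{Transfer to the noiseless input.} Pointwise domination gives
\[
\Pr^{\,\omega}_{\mathrm{meas}}[L>\delta+w]\le\mathbb{E}^{\,\omega}_{\mathrm{meas}}[g].
\]
\Cref{prop:telescope} applied to $g$ then yields
\[
\mathbb{E}^{\,\omega}_{\mathrm{meas}}[g]\le\mathbb{E}^{\,\rho^{\otimes n}}_{\mathrm{meas}}[g]+n\eps\,\Delta(g)\le\mathbb{E}^{\,\rho^{\otimes n}}_{\mathrm{meas}}[g]+\eps/w.
\]

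\textbf{Noiseless tail.} Using $g\le\mathbf{1}[L>\delta]$ together with \cref{thm:kw} for $\varsigma=\rho$, we have
\[
\mathbb{E}^{\,\rho^{\otimes n}}_{\mathrm{meas}}[g]\le\Pr^{\,\rho^{\otimes n}}_{\mathrm{meas}}[L>\delta]\le a(n,d)\,e^{-2n\delta^2}.
\]
Combining the three displays gives \cref{eq:softtail}.

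\textbf{General input.} For an arbitrary $\omega$, replace \cref{prop:telescope} by \cref{thm:wasserstein}, i.e.\ \cref{eq:w1telescope}. This gives an additive term $\Delta(g)\,\|\omega-\rho^{\otimes n}\|_{W_1}\le\eps_W(\omega)/w$.

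No step is a real obstacle. The only point needing care is checking the pointwise sandwich of $\phi$ between the two indicators, and that the composition bound applies to $L$ with $q=\pp$. Both are immediate.
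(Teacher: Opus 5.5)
Your proposal is correct and follows the paper's proof essentially step for step. You smooth the indicator with the $1/w$-Lipschitz ramp, bound $\Delta(\phi\circ L)\le 1/(nw)$ via the composition estimate and \cref{lem:onebox}(ii), transfer the expectation with \cref{prop:telescope} (or \cref{thm:wasserstein} for general inputs), and finish with the pointwise sandwich and \cref{thm:kw}.
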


\begin{proof}
  Let $\phi\colon\R\to[0,1]$ be the piecewise-linear ramp
  \begin{equation}
  \label{eq:linear_ramp}
    \phi(x)\;\defeq\;\min\bigl\{1,\;\max\{0,\,(x-\delta)/w\}\bigr\},
  \end{equation}
  equal to $0$ below
  $\delta$, to $1$ above $\delta+w$, and linear in between; it satisfies
  $|\phi(x)-\phi(y)|\le|x-y|/w$.  Put $f=\phi\circ L$; then
  $\Delta(f)\le\Delta(L)/w\le1/(nw)$ by \cref{eq:one-box_composition_bound} and \cref{lem:onebox}(ii).  Since
  $\mathbf{1}[L>\delta+w]\le f\le\mathbf{1}[L>\delta]$ pointwise,
  \begin{equation*}
    \Pr^{\,\omega}_{\mathrm{meas}}[L>\delta+w]
    \;\le\;\mathbb{E}^{\,\omega}_{\mathrm{meas}}[f]
    \;\le\;\mathbb{E}^{\,\rho^{\otimes n}}_{\mathrm{meas}}[f]+n\eps\cdot\frac1{nw}
    \;\le\;\Pr^{\,\rho^{\otimes n}}_{\mathrm{meas}}[L>\delta]+\frac{\eps}{w},
  \end{equation*}
  using \cref{prop:telescope} in the middle step (or \cref{thm:wasserstein}
  for a general input), and
  $\Pr^{\,\rho^{\otimes n}}_{\mathrm{meas}}[L>\delta]\le a e^{-2n\delta^2}$ by
  \cref{thm:kw}.
\end{proof}

\paragraph{\textit{A variance bound from the quantum Efron--Stein inequality.}}
De~Palma, Fanizza, Mowry and O'Donnell \cite{depalma2025noniid} prove a
quantum version of the Efron--Stein inequality for product states, which we
now recall.  Let $\varrho=\rho_1\otimes\cdots\otimes\rho_n$ be a product state
on $(\C^d)^{\otimes n}$ and $X$ a Hermitian operator.  For $i\in[n]$ define
the linear map that averages register $i$ against $\rho_i$,
\begin{equation}
  \mathcal{E}_iX\;\defeq\;\tr_i[\rho_i\,X]\otimes_i\id,
  \qquad
  \mathcal{D}_iX\;\defeq\;X-\mathcal{E}_iX ,
  \label{eq:ESmaps}
\end{equation}
where $\tr_i[\rho_iX]$ denotes the partial trace over register $i$ of
$(\id\otimes_i\rho_i)X$, and the result is tensored with the identity on
register $i$ so that $\mathcal{E}_iX$ acts trivially there.  Write
$\operatorname{Var}_\varrho[X]\defeq\tr[\varrho X^2]-\tr[\varrho X]^2$.

\begin{theorem}[Quantum Efron--Stein inequality, {\cite[Thm.~6.6]{depalma2025noniid}}]
  \label{thm:qES}
  For every product state $\varrho$ and every Hermitian $X$,
  \begin{equation}
    \operatorname{Var}_\varrho[X]\;\le\;\sum_{i=1}^n\tr\bigl[\varrho\,(\mathcal{D}_iX)^2\bigr].
    \label{eq:qES}
  \end{equation}
\end{theorem}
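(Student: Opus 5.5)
The plan is to transcribe the classical martingale proof of Efron--Stein into the Hilbert space of operators equipped with the GNS inner product of the product state,
\begin{equation*}
  \langle A,B\rangle_\varrho\defeq\tr\bigl[\varrho\,A^\dagger B\bigr],
  \qquad
  \|A\|_\varrho^2\defeq\langle A,A\rangle_\varrho .
\end{equation*}
For Hermitian $X$ and $c\defeq\tr[\varrho X]$ one has $\operatorname{Var}_\varrho[X]=\|X-c\,\id\|_\varrho^2$. Likewise, if $\mathcal{D}_iX$ is Hermitian, the $i$-th term on the right of \cref{eq:qES} equals $\|\mathcal{D}_iX\|_\varrho^2$. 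So the whole statement becomes a Pythagoras-type inequality in this Hilbert space.

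\emph{Step 1 (structure of the maps $\mathcal{E}_i$).} I would first check four properties of the maps in \cref{eq:ESmaps}.
\begin{enumerate}
  \item[(a)] Each $\mathcal{E}_i$ is idempotent, since $\mathcal{E}_iX$ already acts trivially on register $i$ and $\tr\rho_i=1$.
  \item[(b)] The maps $\mathcal{E}_i$ and $\mathcal{E}_j$ commute for $i\neq j$, because they act on different tensor factors.
  \item[(c)] Each $\mathcal{E}_i$ preserves Hermiticity. Operators acting only on register $i$ can be cycled inside the partial trace $\tr_i$, so $\tr_i[(\rho_i\otimes\id)X]^\dagger=\tr_i[X(\rho_i\otimes\id)]=\tr_i[(\rho_i\otimes\id)X]$. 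Hence each $\mathcal{D}_iX$ is Hermitian.
  \item[(d)] Each $\mathcal{E}_i$ is self-adjoint for $\langle\cdot,\cdot\rangle_\varrho$. Write $\varrho=\rho_i\otimes\varrho_{\neq i}$. Then, using the same cyclicity inside $\tr_i$, one expects
  \begin{equation*}
    \langle A,\mathcal{E}_iB\rangle_\varrho
    =\tr_{\neq i}\bigl[\varrho_{\neq i}\,\tr_i[\rho_iA]^\dagger\,\tr_i[\rho_iB]\bigr]
    =\langle\mathcal{E}_iA,\mathcal{E}_iB\rangle_\varrho .
  \end{equation*}
  This expression is symmetric in $A$ and $B$.
\end{enumerate}
Together these make the $\mathcal{E}_i$ commuting orthogonal projections in the GNS Hilbert space. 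Their product $\mathcal{E}_1\cdots\mathcal{E}_n$ sends $X$ to $c\,\id$.

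\emph{Step 2 (Doob decomposition).} Define $M_k\defeq\mathcal{E}_{k+1}\cdots\mathcal{E}_nX$ for $k=0,\dots,n$. Then $M_n=X$ and $M_0=c\,\id$. The increments are
\begin{equation*}
  \Delta_k\defeq M_k-M_{k-1}=\mathcal{E}_{k+1}\cdots\mathcal{E}_n\,\mathcal{D}_kX,
\end{equation*}
and they telescope to $X-c\,\id$. For $k<l$, the increment $\Delta_k$ lies in the range of $\mathcal{E}_l$, since $\mathcal{E}_l$ appears in its prefix. Using commutativity, $\Delta_l=(\id-\mathcal{E}_l)\,\mathcal{E}_{l+1}\cdots\mathcal{E}_nX$ lies in the range of $\id-\mathcal{E}_l$. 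The two ranges are orthogonal complements because $\mathcal{E}_l$ is an orthogonal projection, so $\langle\Delta_k,\Delta_l\rangle_\varrho=0$. Pythagoras then gives
\begin{equation*}
  \operatorname{Var}_\varrho[X]=\sum_k\|\Delta_k\|_\varrho^2 .
\end{equation*}

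\emph{Step 3 (contraction).} A product of orthogonal projections is a contraction, so $\|\Delta_k\|_\varrho\le\|\mathcal{D}_kX\|_\varrho$. By Step 1(c) the right-hand side squared is $\tr[\varrho(\mathcal{D}_kX)^2]$. Summing over $k$ gives \cref{eq:qES}.

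The main obstacle is Step 1(d), i.e.\ the noncommutative bookkeeping showing that $\mathcal{E}_i$ is genuinely self-adjoint for the GNS inner product. In general, conditional expectations with respect to a non-tracial state are not GNS-self-adjoint. Here it works only because $\varrho$ is a product state and $\rho_i$ can be cycled within $\tr_i$. I would verify this carefully on elementary tensors $A=a\otimes a'$ and $B=b\otimes b'$, with the unprimed factor on register $i$, and then extend by linearity.
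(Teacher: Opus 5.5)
Your proof is correct. The paper does not prove this statement: it quotes it from \cite[Thm.~6.6]{depalma2025noniid} and uses it only in \cref{prop:ESvariance}, together with the bound $\|\mathcal{D}_i\mathcal{O}_f\|_{\mathrm{op}}\le\Delta(f)$. Your martingale-difference argument in the form $\langle A,B\rangle_\varrho=\tr[\varrho A^\dagger B]$ is therefore a self-contained replacement for the citation. The citation keeps the appendix short. Your route has two advantages over it. It shows exactly where the product hypothesis is used: the $\mathcal{E}_i$ commute and each preserves $\varrho$, so they are commuting self-adjoint idempotents. It also gives the exact identity $\operatorname{Var}_\varrho[X]=\sum_k\|\mathcal{E}_{k+1}\cdots\mathcal{E}_n\mathcal{D}_kX\|_\varrho^2$ before the contraction step discards information.

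Two small corrections.

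First, if $\varrho$ is not of full rank, $\langle\cdot,\cdot\rangle_\varrho$ is only a semi-inner product. So ``Hilbert space'' should mean its quotient by the null space. Nothing breaks: orthogonality of the ranges, Pythagoras, and $\|\mathcal{E}_iY\|_\varrho\le\|Y\|_\varrho$ for a self-adjoint idempotent all hold verbatim for semi-inner products.

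Second, your diagnosis of the ``main obstacle'' in Step~1(d) is slightly off. A bimodular conditional expectation that preserves $\varrho$ is automatically self-adjoint for $\tr[\varrho A^\dagger B]$. What can fail for non-tracial states is the existence of such a map. Here $\tr[\varrho\,\mathcal{E}_iY]=\tr[\varrho Y]$ holds for all $Y$ exactly when $\varrho=\rho_i\otimes\varrho_{\neq i}$. Given that, Step~1(d) is one line:
$\langle A,\mathcal{E}_iB\rangle_\varrho=\tr[\varrho\,\mathcal{E}_i(A^\dagger\,\mathcal{E}_iB)]=\tr[\varrho\,(\mathcal{E}_iA)^\dagger\,\mathcal{E}_iB]$.
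This uses that $\mathcal{E}_iB$ acts trivially on register $i$, and that $\mathcal{E}_i(A^\dagger)=(\mathcal{E}_iA)^\dagger$ by the same cyclicity inside $\tr_i$ as in your Step~1(c). No check on elementary tensors is needed.
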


For $X=\mathcal{O}_f$ the left-hand side is the variance of the measurement
outcome: since the $\Pi_\lambda$ are orthogonal projectors,
$\mathcal{O}_f^2=\mathcal{O}_{f^2}$, so
$\operatorname{Var}_\omega[\mathcal{O}_f]=\operatorname{Var}^{\,\omega}_{\mathrm{meas}}(f(\lambda))$.

\begin{proposition}[Variance of the outcome under any product input]
  \label{prop:ESvariance}
  For every product state $\omega=\bigotimes_i\sigma_i$ --- with no assumption
  on the $\sigma_i$ --- and every function $f$,
  \begin{equation}
    \operatorname{Var}^{\,\omega}_{\mathrm{meas}}\bigl(f(\lambda)\bigr)\;\le\;n\,\Delta(f)^2 .
    \label{eq:ESvariance}
  \end{equation}
  In particular $\operatorname{Var}^{\,\omega}_{\mathrm{meas}}(\lambda_k/n)\le1/n$
  for each $k$ and $\operatorname{Var}^{\,\omega}_{\mathrm{meas}}(L)\le1/n$.
  Consequently, for an admissible product input and every $\delta>0$,
  \begin{equation}
    \Pr^{\,\omega}_{\mathrm{meas}}\bigl[L(\lambda)>\eps+\eta(n,d)+\delta\bigr]
    \;\le\;\frac{1}{n\,\delta^2},
    \label{eq:chebyshev}
  \end{equation}
  and for a general product input the same holds with $\eps$ replaced by
  $\tfrac1n\sum_i\TD(\sigma_i,\rho)$.
\end{proposition}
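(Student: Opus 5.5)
The plan is to apply the quantum Efron--Stein inequality (\cref{thm:qES}) with $\varrho=\omega=\bigotimes_i\sigma_i$ and $X=\mathcal{O}_f$. The left-hand side is then exactly $\operatorname{Var}^{\,\omega}_{\mathrm{meas}}(f(\lambda))$, as noted after \cref{thm:qES}. It therefore suffices to show that each term on the right satisfies
\[
\tr[\omega(\mathcal{D}_i\mathcal{O}_f)^2]\le\Delta(f)^2 .
\]
For this it is enough to bound the operator norm, $\|\mathcal{D}_i\mathcal{O}_f\|_{\mathrm{op}}\le\Delta(f)$, because $\tr[\omega Y^2]\le\|Y\|_{\mathrm{op}}^2$ for any state $\omega$ and Hermitian $Y$. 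By \cref{lem:perminv} we may take $i=n$ after conjugating by a transposition. This relabels $\sigma_i$ into slot $n$ but does not affect any norm bound.

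For the norm bound I reuse the decomposition \cref{eq:Ofdecomp} from the proof of \cref{lem:qlipschitz}: $\mathcal{O}_f=H'+B$, where $H'$ acts trivially on register $n$ and $\|B\|_{\mathrm{op}}\le\Delta(f)/2$. The map $\mathcal{E}_n$ fixes any operator that acts trivially on register $n$, since $\tr_n[\sigma_n(Y\otimes\id)]=Y$. Hence $\mathcal{D}_nH'=0$ and
\[
\mathcal{D}_n\mathcal{O}_f=\mathcal{D}_nB=B-\mathcal{E}_nB .
\]
The map $\mathcal{E}_n$ is unital and completely positive: $Y\mapsto\tr_n[(\id\otimes\sigma_n)Y]$ is the adjoint of the channel $R\mapsto R\otimes\sigma_n$. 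So it is a contraction in operator norm, giving $\|\mathcal{E}_nB\|_{\mathrm{op}}\le\|B\|_{\mathrm{op}}$. The triangle inequality then gives
\[
\|\mathcal{D}_n\mathcal{O}_f\|_{\mathrm{op}}\le2\|B\|_{\mathrm{op}}\le\Delta(f).
\]
Summing over the $n$ registers yields \cref{eq:ESvariance}. No closeness assumption on the $\sigma_i$ enters anywhere. The special cases follow from \cref{lem:onebox}, which gives $\Delta\le1/n$ and hence variance at most $n\cdot1/n^2=1/n$.

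The main subtlety I expect is making sure the $\mathcal{E}_i$ of \cref{thm:qES} really is the conditional-expectation map fixing $H'$. A cruder bound, centering $B$ by a constant inside $\mathcal{D}_n$, would not work directly because $B$ is not of that form. The route through $\mathcal{D}_nH'=0$ together with operator-norm contractivity avoids this.

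For \cref{eq:chebyshev}, combine Chebyshev's inequality with the mean bound. By \cref{thm:robust} (or \cref{cor:w1cases}(i) for a general product input), $\mathbb{E}^{\,\omega}_{\mathrm{meas}}[L]\le\eps+\eta(n,d)$, with $\eps$ replaced by $\tfrac1n\sum_i\TD(\sigma_i,\rho)$ in the general case. Therefore
\[
\Pr[L>\eps+\eta+\delta]\le\Pr\bigl[L-\mathbb{E}L>\delta\bigr]\le\operatorname{Var}(L)/\delta^2\le1/(n\delta^2).
\]
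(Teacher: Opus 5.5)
Your proposal is correct and follows the paper's proof essentially step for step. You apply quantum Efron--Stein with $X=\mathcal{O}_f$ and use the decomposition $\mathcal{O}_f=H'+B$ of \cref{eq:Ofdecomp}, so that $\mathcal{D}_i\mathcal{O}_f=B-\mathcal{E}_iB$ has operator norm at most $2\|B\|_{\mathrm{op}}\le\Delta(f)$. You then combine Chebyshev with the mean bound from \cref{thm:robust} or \cref{cor:w1cases}(i). The only difference is cosmetic: you justify the operator-norm contractivity of $\mathcal{E}_i$ by viewing it as the adjoint of the channel $R\mapsto R\otimes\sigma_i$, while the paper calls it a convex combination of compressions, which is the same fact.
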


\begin{proof}
  Fix $i$.  By \cref{eq:Ofdecomp} (transported to register $i$ by
  permutation symmetry) we can write $\mathcal{O}_f=H'_i+B_i$ with $H'_i$
  acting trivially on register $i$ and $\|B_i\|_{\mathrm{op}}\le\Delta(f)/2$.
  Since $H'_i$ acts trivially on register $i$, $\mathcal{E}_iH'_i=H'_i$ and
  $\mathcal{D}_iH'_i=0$, so $\mathcal{D}_i\mathcal{O}_f=\mathcal{D}_iB_i
  =B_i-\tr_i[\sigma_iB_i]\otimes_i\id$.  The map $B\mapsto\tr_i[\sigma_iB]$
  is a convex combination of compressions and does not increase the operator
  norm, so $\|\mathcal{D}_i\mathcal{O}_f\|_{\mathrm{op}}\le2\|B_i\|_{\mathrm{op}}\le\Delta(f)$
  and $\tr[\omega(\mathcal{D}_i\mathcal{O}_f)^2]\le\Delta(f)^2$.  Summing over
  $i$ in \cref{eq:qES} gives \cref{eq:ESvariance}; the special cases are
  \cref{lem:onebox}.  For \cref{eq:chebyshev}, Chebyshev's inequality gives
  $\Pr^{\,\omega}_{\mathrm{meas}}[L>\mathbb{E}^{\,\omega}_{\mathrm{meas}}[L]+\delta]\le\operatorname{Var}^{\,\omega}_{\mathrm{meas}}(L)/\delta^2\le1/(n\delta^2)$,
  and $\mathbb{E}^{\,\omega}_{\mathrm{meas}}[L]\le\eps+\eta(n,d)$ by
  \cref{thm:robust} (or \cref{cor:w1cases}(i)).
\end{proof}

\begin{remark}
  \label{rem:ESreading}
  \Cref{prop:ESvariance} says that the outcome of weak Schur sampling on
  \emph{any} product input has variance $O(1/n)$ per coordinate, exactly as
  for the noiseless input, where $\operatorname{Var}(\lambda_k/n)\approx
  p_k(1-p_k)/n$ \cite{keyl2001estimating,odonnell2016efficient}.  The noise
  therefore affects only \emph{where} the outcome distribution is centred,
  which is what \cref{thm:robust,thm:wasserstein} control, and not how
  concentrated it is.  Sharpening \cref{eq:chebyshev} to an exponential tail
  without the $\eps/w$ term of \cref{eq:softtail} would require a
  concentration inequality for Lipschitz observables on product states; see
  item~(4) of \cref{sec:open}.
\end{remark}

\paragraph{\textit{Unbiased noise.}}
For randomly drawn unbiased noise (\cref{def:unbiasedbiased}) the transfer of
the noiseless tail bound is exact and needs no smoothing.

\begin{proposition}[Unbiased noise: exact tail identity]
  \label{prop:tail}
  Under unbiased noise, for every $\delta>0$,
  \begin{equation*}
    \mathbb{E}_{\mathrm{noise}}\Bigl[\Pr^{\,\rho_1\otimes\cdots\otimes\rho_n}_{\mathrm{meas}}[L>\delta]\Bigr]
    =\Pr^{\,\rho^{\otimes n}}_{\mathrm{meas}}[L>\delta]
    \le a(n,d)\,e^{-2n\delta^2},
  \end{equation*}
  and consequently, by Markov's inequality, for every $s>0$ the conditional
  tail probability of the realised input exceeds $s$ with noise-probability
  at most $a\,e^{-2n\delta^2}/s$.
\end{proposition}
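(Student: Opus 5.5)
The identity will follow from linearity of the outcome law in the input state, combined with the noise-averaging identity \cref{eq:meanidentity}. The tail bound is \cref{thm:kw}, and the final claim is Markov's inequality applied to the realised tail probability.

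\emph{The identity.} Fix $\delta>0$ and write $g(\lambda)\defeq\mathbf{1}[L(\lambda)>\delta]$. For a fixed noise realisation, \cref{eq:Emeas} gives
\[
\Pr^{\,\rho_1\otimes\cdots\otimes\rho_n}_{\mathrm{meas}}[L>\delta]=\tr\bigl[\mathcal{O}_g\,(\rho_1\otimes\cdots\otimes\rho_n)\bigr].
\]
The right-hand side is linear in the product state, and the $\rho_i$ are mutually independent. Exactly as in \cref{eq:meanidentity}, $\mathbb{E}_{\mathrm{noise}}$ therefore passes inside the trace and factorises, giving $\tr[\mathcal{O}_g\,(\rho_1^E\otimes\cdots\otimes\rho_n^E)]$. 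Unbiasedness (\cref{def:unbiasedbiased}) sets $\rho_i^E=\rho$, so this equals $\Pr^{\,\rho^{\otimes n}}_{\mathrm{meas}}[L>\delta]$. The interchange of expectation and trace is justified because everything is finite-dimensional and bounded. Equivalently, one can sum the pointwise identity of \cref{cor:unbiasedrobust} over $\{\lambda:L(\lambda)>\delta\}$.

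\emph{The tail bound.} $L$ is the error relative to $\pp=\spec(\rho)$, so \cref{thm:kw} with $\varsigma=\rho$ gives $\Pr^{\,\rho^{\otimes n}}_{\mathrm{meas}}[L>\delta]\le a(n,d)\,e^{-2n\delta^2}$.

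\emph{The Markov step.} The random variable $Z\defeq\Pr^{\,\rho_1\otimes\cdots\otimes\rho_n}_{\mathrm{meas}}[L>\delta]$ is a nonnegative function of the noise realisation. It is measurable because it is a finite sum of traces of continuous functions of the $\rho_i$. Markov's inequality then gives
\[
\Pr_{\mathrm{noise}}[Z>s]\le\frac{\mathbb{E}_{\mathrm{noise}}[Z]}{s}\le\frac{a\,e^{-2n\delta^2}}{s}.
\]

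There is no real obstacle here. The only point needing care is that the noise average factorises over registers, and that uses the independence of the $\mathcal{E}_i$, which is already encoded in \cref{eq:meanidentity}.
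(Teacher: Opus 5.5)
Your proposal is correct and matches the paper's proof: the paper writes the tail probability as $\tr[\Pi_{>\delta}\,\omega]$ with $\Pi_{>\delta}=\sum_{\lambda:L(\lambda)>\delta}\Pi_\lambda$, which is exactly your $\mathcal{O}_g$, applies the noise-averaging identity with $\rho_i^E=\rho$, and then invokes the Keyl--Werner bound, with Markov's inequality giving the final claim. Your remarks on measurability and on interchanging expectation and trace are harmless additions.
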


\begin{proof}
  $\Pr^{\,\omega}_{\mathrm{meas}}[L>\delta]=\tr[\Pi_{>\delta}\,\omega]$ with
  $\Pi_{>\delta}\defeq\sum_{\lambda:L(\lambda)>\delta}\Pi_\lambda$ is linear in
  $\omega$, so \cref{eq:meanidentity} with $\rho_i^E=\rho$ gives the
  identity; then apply \cref{thm:kw}.
\end{proof}

\subsection{Concentration of the conditional centre}
\label{app:perrealisation}

\Cref{cor:unbiasedrobust} says that under unbiased noise the performance is
noiseless \emph{on average over the noise}.  A stronger and more practically
relevant statement is that the noise realisation one actually receives is
almost surely a good one.  Write
\begin{equation}
  \hat\mu_j
  \defeq\mathbb{E}^{\,\rho_1\otimes\cdots\otimes\rho_n}_{\mathrm{meas}}\bigl[\lambda_j/n\bigr]
  =\tr\bigl[\mathcal{O}_{\lambda_j/n}\,(\rho_1\otimes\cdots\otimes\rho_n)\bigr]
  \label{eq:condcentre}
\end{equation}
for the \emph{conditional centre}: the mean of the outcome distribution
produced by the particular input that the noise happened to deliver.  It is a
random vector through the $\rho_i$ only.  The tool is the classical
Efron--Stein inequality, now applied to the noise randomness rather than to
the measurement.

\begin{theorem}[Efron--Stein inequality, {\cite[Thm.~3.1]{boucheron2013concentration}}]
  \label{thm:es}
  Let $X_1,\dots,X_n$ be independent random variables, let
  $g(X_1,\dots,X_n)$ be square-integrable, and let $X_i'$ be an independent
  copy of $X_i$.  Writing
  $g^{(i)}\defeq g(X_1,\dots,X_{i-1},X_i',X_{i+1},\dots,X_n)$,
  \begin{equation*}
    \operatorname{Var}(g)\;\le\;\frac12\sum_{i=1}^n\mathbb{E}\bigl[(g-g^{(i)})^2\bigr].
  \end{equation*}
\end{theorem}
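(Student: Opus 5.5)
The plan is the standard martingale (Doob) decomposition proof. Since this is a classical result quoted from \cite{boucheron2013concentration}, I only need the textbook argument, and it uses nothing from the quantum part of the paper.

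\emph{Decomposition.} Let $\mathbb{E}_i[\cdot]\defeq\mathbb{E}[\,\cdot\mid X_1,\dots,X_i]$, with $\mathbb{E}_0=\mathbb{E}$ and $\mathbb{E}_n g=g$. Set $\Delta_i\defeq\mathbb{E}_i g-\mathbb{E}_{i-1}g$, so that $g-\mathbb{E}g=\sum_{i=1}^n\Delta_i$. For $i<j$, $\Delta_i$ is measurable with respect to $X_1,\dots,X_i$, and $\mathbb{E}_{j-1}\Delta_j=0$. Hence the cross terms vanish, $\mathbb{E}[\Delta_i\Delta_j]=0$, and
\begin{equation*}
\operatorname{Var}(g)=\sum_{i=1}^n\mathbb{E}[\Delta_i^2].
\end{equation*}
Square-integrability of $g$ makes all these quantities finite.

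\emph{Bounding each increment.} Let $\mathbb{E}^{(i)}$ denote integration over $X_i$ alone, with the other variables held fixed. Concretely, $\mathbb{E}^{(i)}g=\mathbb{E}[g^{(i)}\mid X_1,\dots,X_i,\dots,X_n]$, which is a function of $(X_j)_{j\neq i}$ only. The key identity is
\begin{equation*}
\mathbb{E}_{i-1}g=\mathbb{E}_i\bigl[\mathbb{E}^{(i)}g\bigr].
\end{equation*}
Both sides are obtained by integrating out $X_i,\dots,X_n$ against their product law. On the right, $X_i$ is integrated first inside $\mathbb{E}^{(i)}$, and independence plus Fubini lets the order of integration be exchanged. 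This identity gives
\begin{equation*}
\Delta_i=\mathbb{E}_i\bigl[g-\mathbb{E}^{(i)}g\bigr].
\end{equation*}
Conditional Jensen then yields
\begin{equation*}
\mathbb{E}[\Delta_i^2]\le\mathbb{E}\bigl[(g-\mathbb{E}^{(i)}g)^2\bigr]=\mathbb{E}\bigl[\operatorname{Var}^{(i)}(g)\bigr],
\end{equation*}
where $\operatorname{Var}^{(i)}$ is the variance in $X_i$ with the others fixed.

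\emph{Symmetrisation.} Conditionally on $(X_j)_{j\neq i}$, the variables $g$ and $g^{(i)}$ are i.i.d., since $X_i'$ is an independent copy of $X_i$. For i.i.d.\ $Y,Y'$ one has $\operatorname{Var}(Y)=\tfrac12\mathbb{E}[(Y-Y')^2]$. Therefore
\begin{equation*}
\mathbb{E}\bigl[\operatorname{Var}^{(i)}(g)\bigr]=\tfrac12\mathbb{E}\bigl[(g-g^{(i)})^2\bigr].
\end{equation*}
Summing over $i$ gives the claim.

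The only step needing care is the identity $\mathbb{E}_{i-1}g=\mathbb{E}_i[\mathbb{E}^{(i)}g]$. This is precisely where independence of the $X_i$ enters; without it the bound fails. I would justify it by writing the joint law as a product measure and applying Fubini–Tonelli, which is legitimate because $g\in L^2\subseteq L^1$.
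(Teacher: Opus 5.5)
Your proposal is correct and follows the standard route: the paper does not prove this statement itself but cites \cite[Thm.~3.1]{boucheron2013concentration}, whose proof is the same argument. That argument consists of your Doob-martingale decomposition $\operatorname{Var}(g)=\sum_i\mathbb{E}[\Delta_i^2]$, the identity $\mathbb{E}_{i-1}g=\mathbb{E}_i[\mathbb{E}^{(i)}g]$ combined with conditional Jensen, and the symmetrisation $\mathbb{E}[\operatorname{Var}^{(i)}(g)]=\tfrac12\mathbb{E}[(g-g^{(i)})^2]$. The symmetrisation step needs $X_i'$ to be independent of all of $X_1,\dots,X_n$, not just of $X_i$, so that $g$ and $g^{(i)}$ are conditionally i.i.d.\ given $(X_j)_{j\neq i}$; this is the intended reading of ``independent copy'', and you should state it explicitly.
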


\begin{theorem}[Concentration of the conditional centre]
  \label{thm:centre}
  Under randomly drawn admissible product noise, for each $j\in[d]$,
  \begin{equation*}
    \operatorname{Var}_{\mathrm{noise}}(\hat\mu_j)\;\le\;\frac{2\eps^2}{n},
    \qquad\text{hence}\qquad
    \mathbb{E}_{\mathrm{noise}}\bigl\|\hat\mu-\mathbb{E}_{\mathrm{noise}}[\hat\mu]\bigr\|_2^2
    \;\le\;\frac{2d\,\eps^2}{n}.
  \end{equation*}
  Consequently, for every $\gamma\in(0,1)$, with probability at least
  $1-\gamma$ over the noise,
  $\|\hat\mu-\mathbb{E}_{\mathrm{noise}}[\hat\mu]\|_2\le\sqrt{2d/\gamma}\;\eps/\sqrt n$.
  Moreover, for every $s>0$,
  \begin{equation}
    \Pr_{\mathrm{noise}}\bigl[\,|\hat\mu_j-\mathbb{E}_{\mathrm{noise}}[\hat\mu_j]|>s\,\bigr]
    \;\le\;2\exp\!\Bigl(-\frac{n\,s^2}{2\eps^2}\Bigr).
    \label{eq:mcdiarmid}
  \end{equation}
\end{theorem}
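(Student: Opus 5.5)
We view $\hat\mu_j$ as a deterministic function $g(\rho_1,\dots,\rho_n)=\tr[\mathcal{O}_f\,(\rho_1\otimes\cdots\otimes\rho_n)]$ with $f(\lambda)=\lambda_j/n$, of the independent random states $\rho_i=\mathcal{E}_i(\rho)$. Both the variance bound and the tail bound will follow from a single bounded-differences estimate for $g$, which comes from \cref{lem:centering,lem:spread}. The plan is to prove that estimate first and then apply \cref{thm:es} for the variance and McDiarmid's inequality for the tail.

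\emph{Step 1 (bounded differences).} Fix all registers except $i$. Replace $\rho_i$ by another admissible state $\rho_i'$ with $\TD(\rho_i',\rho)\le\eps$. By \cref{eq:hi} the change is $g-g^{(i)}=\tr[(\rho_i-\rho_i')\,h^f_i(R)]$, where $R=\bigotimes_{k\neq i}\rho_k$. The operator $\rho_i-\rho_i'$ is traceless, and the triangle inequality through $\rho$ gives $\|\rho_i-\rho_i'\|_1\le 4\eps$. \Cref{lem:centering} together with \cref{lem:spread} (using \cref{lem:onebox}(i), so $\Delta(f)\le 1/n$) then yields
\begin{equation*}
  |g-g^{(i)}|\;\le\;\tfrac12\cdot 4\eps\cdot\tfrac1n\;=\;\frac{2\eps}{n}.
\end{equation*}
This bound holds almost surely, since admissibility holds almost surely.

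\emph{Step 2 (variance).} Insert Step 1 into \cref{thm:es}:
\begin{equation*}
  \operatorname{Var}_{\mathrm{noise}}(\hat\mu_j)\le\tfrac12\sum_{i=1}^n(2\eps/n)^2=\frac{2\eps^2}{n}.
\end{equation*}
Summing over $j\in[d]$ gives the bound $2d\eps^2/n$ on the expected squared $\ell^2$ deviation. Markov's inequality applied to $\|\hat\mu-\mathbb{E}_{\mathrm{noise}}[\hat\mu]\|_2^2$ at level $2d\eps^2/(n\gamma)$ gives the stated $1-\gamma$ statement after taking square roots.

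\emph{Step 3 (tail).} Apply McDiarmid's bounded-differences inequality with constants $c_i=2\eps/n$:
\begin{equation*}
  \Pr\bigl[|g-\mathbb{E} g|>s\bigr]\le 2\exp\!\Bigl(-\frac{2s^2}{\sum_i c_i^2}\Bigr)=2\exp\!\Bigl(-\frac{2s^2}{4\eps^2/n}\Bigr)=2\exp\!\Bigl(-\frac{ns^2}{2\eps^2}\Bigr),
\end{equation*}
which is exactly \cref{eq:mcdiarmid}.

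The main point requiring care is Step 1. The difference must be taken between two arbitrary admissible states, not between a state and $\rho$, which gives the factor $4\eps$ in place of $2\eps$. The spread bound of \cref{lem:spread} holds for every state $R$ on the other registers, so no conditioning issue arises. One should also check measurability: $g$ is a continuous function of the states, so square-integrability, and hence the use of \cref{thm:es}, is immediate.
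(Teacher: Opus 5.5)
Your proposal is correct and follows the paper's proof essentially step for step. It uses the same almost-sure bounded-differences estimate $|\hat\mu_j-\hat\mu_j^{(i)}|\le 2\eps/n$, obtained from \cref{lem:centering,lem:spread} with $\|\rho_i-\rho_i'\|_1\le 4\eps$, then Efron--Stein for the variance, summation over $j$ followed by Markov for the $\ell^2$ statement, and McDiarmid with $c_i=2\eps/n$ for the tail.
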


\begin{proof}
  Resampling register $i$ replaces $\rho_i$ by an independent copy $\rho_i'$
  and changes only that slot of the product, so by \cref{eq:hi},
  \begin{equation*}
    \hat\mu_j-\hat\mu_j^{(i)}
    =\tr\bigl[(\rho_i-\rho_i')\,h^{\lambda_j/n}_i(R_{\neq i})\bigr],
    \qquad R_{\neq i}\defeq\textstyle\bigotimes_{k\neq i}\rho_k .
  \end{equation*}
  The operator $\rho_i-\rho_i'$ is traceless with
  $\|\rho_i-\rho_i'\|_1\le\|\rho_i-\rho\|_1+\|\rho-\rho_i'\|_1\le4\eps$, so
  \cref{lem:centering,lem:spread} (with $\Delta(\lambda_j/n)\le1/n$) give
  \begin{equation}
    \bigl|\hat\mu_j-\hat\mu_j^{(i)}\bigr|
    \;\le\;\tfrac12\cdot4\eps\cdot\tfrac1n
    \;=\;\frac{2\eps}{n}
    \qquad\text{almost surely.}
    \label{eq:swapbound}
  \end{equation}
  Substituting into \cref{thm:es},
  $\operatorname{Var}(\hat\mu_j)\le\tfrac12\cdot n\cdot(2\eps/n)^2=2\eps^2/n$.
  Summing over $j$ gives the $\ell^2$ statement, and Markov's inequality
  applied to $\|\hat\mu-\mathbb{E}\hat\mu\|_2^2$ gives the high-probability bound:
  \begin{equation}
  \label{eq:high_prob_bound}
      \|\hat\mu-\mathbb{E}_{\mathrm{noise}}[\hat\mu]\|_2\le\sqrt{2d/\gamma}\;\eps/\sqrt n
  \end{equation}
  with probability $(1-\gamma)$ over the noise.  For \cref{eq:mcdiarmid}, \cref{eq:swapbound} is a
  bounded-differences condition with constants $c_i=2\eps/n$, so
  $\sum_ic_i^2=4\eps^2/n$, and McDiarmid's inequality
  \cite[Thm.~6.2]{boucheron2013concentration} gives
  $2\exp(-2s^2/\sum_ic_i^2)$, which is \cref{eq:mcdiarmid}.
\end{proof}

\begin{remark}
  \label{rem:centreinterpretation}
  Under unbiased noise $\mathbb{E}_{\mathrm{noise}}[\hat\mu]$ is the
  noiseless centre by \cref{cor:unbiasedrobust}, so \cref{thm:centre} says
  that the centre of the outcome distribution produced by the realised input
  is within $O(\eps/\sqrt n)$ of the noiseless centre, with high probability
  over the noise.  Contrast this with the deterministic worst case of
  \cref{prop:floor}, where the centre sits a full $\eps$ away: averaging over
  independent fluctuations is what buys the extra $1/\sqrt n$.  Unbiasedness
  is not used in \cref{thm:centre} itself --- only independence, for
  \cref{thm:es}, and closeness, for \cref{eq:swapbound}.  Combining
  \cref{thm:centre} (fluctuation of the centre over the noise) with
  \cref{prop:ESvariance} (fluctuation of the outcome around the centre) gives
  a complete picture of a single run under unbiased random noise: with high
  probability over both sources of randomness the outcome is within
  $O(d/\sqrt n)$ of the true spectrum in $\tfrac12\|\cdot\|_1$, with no
  dependence on $\eps$.
\end{remark}

\end{document}